\newcommand{\Anote}[1]{}
\newcommand{\Pnote}[1]{}
\newif\ifhyper\IfFileExists{hyperref.sty}{\hypertrue}{\hyperfalse}
\ifhyper\usepackage{hyperref}\fi
\newtheorem{Thm}{Theorem}
\newtheorem{Lem}[Thm]{Lemma}
\newtheorem{Cor}[Thm]{Corollary}
\newtheorem{Claim}[Thm]{Claim}
\newtheorem{Conj}[Thm]{Conjecture}
\newtheorem{Def}[Thm]{Definition}
\newtheorem{Prob}[Thm]{Problem}
\def\bull{\vrule height .9ex width .8ex depth -.1ex }
\newenvironment{Proof}{\medbreak
\noindent {\bf Proof:~}}{\unskip\nobreak\hfill\hskip 2em \bull\par\medbreak}
\newcommand{\eat}[1]{}
\newcommand{\rgta}{\ensuremath{\rightarrow}}
\newcommand{\fr}[1]{\frac{1}{#1}}
\DeclareMathOperator{\wt}{wt}
\DeclareMathOperator{\s}{s}
\newcommand{\noise}{\mathrm{Noise}}
\newcommand{\Bin}{\mathrm{Bin}}
\newcommand{\ind}{\ensuremath \mathbbm{1}}
\newcommand{\trf}{\ensuremath T_{1 - 2\delta} f}
\newcommand{\nsf}{\ensuremath \mathrm{NS}_\delta[f]}
\newcommand{\GS}{\ensuremath \mathrm{\Lambda}_{\delta,\theta}(S)}
\newcommand{\tr}{\ensuremath \tilde{r}}
\newcommand{\zo}{\{0,1\}}
\newcommand{\zon}{\{0,1\}^n}
\newcommand{\commented}{yes}
\newcommand{\pnote}[1]{\footnote{{\bf [[Parik: {#1}\bf ]] }}}
\newcommand{\rnote}[1]{\footnote{{\bf [[Rocco: {#1}\bf ]] }}}
\newcommand{\anote}[1]{\footnote{{\bf [[Avi: {#1}\bf ]] }}}
\newcommand{\pnote}[1]{}
\newcommand{\rnote}[1]{}
\newcommand{\anote}[1]{}
\newcommand{\ignore}[1]{}
\newcommand{\blue}[1]{{\color{blue} #1}}
\newcommand{\Maj}[2]{\ensuremath{\mathop{\mathrm{Maj}}_{#1}(#2)}}
\newcommand{\rl}{colex}
\newcommand{\E}{{\bf E}}
\newcommand{\Ex}{\mathop{\mathbb{E}\/}}
\newcommand{\R}{\mathbb R}
\newcommand{\F}{\mathbb F}
\newcommand{\Z}{\mathbb Z}
\newcommand{\eps}{\epsilon}
\newcommand{\poly}{\mathrm{poly}}
\newcommand{\supp}{\mathrm{supp}}
\newcommand{\sbkets}[1]{\ensuremath \left[ #1 \right]}
\newcommand{\cbkets}[1]{\ensuremath \left\{ #1 \right\}}
\newcommand{\bkets}[1]{\ensuremath \left( #1 \right)}
\newcommand{\ms}[1]{\ensuremath \mathsf{#1}}
\newcommand{\calF}{{\cal F}}
\newcommand{\calS}{{\cal S}}
\newcommand{\calT}{{\cal T}}
\newcommand{\calB}{{\cal B}}
\newcommand{\cS}{\calS}
\newcommand{\cB}{\calB}
\newcommand{\cT}{\calT}
\newcommand{\Mj}{\mathrm{Maj}}
\newcommand{\Par}{\mathrm{Par}}
\newcommand{\myalgo}[2]{
\bigskip
\noindent \fbox{
\begin{minipage}{6in}
{\bf #1}\\
{\tt #2}
\end{minipage}}
\bigskip
}
\newcounter{this-list}
\begin{document}



\title{Smooth Boolean functions are easy: \\ efficient algorithms for low-sensitivity functions}

\author{
Parikshit Gopalan\\
Microsoft Research\\
{\tt parik@microsoft.com} 
\and 
Noam Nisan\\ 
Microsoft Research \& \\ 
The Hebrew University\\
{\tt noam.nisan@gmail.com}
\and 
Rocco A. Servedio\\
Columbia University \\
{\tt rocco@cs.columbia.edu}
\and
Kunal Talwar\\
Google Research\\
{\tt ktalwar@gmail.com}\\
\and
Avi Wigderson\\
IAS\\
{\tt avi@ias.edu}
}

\begin{titlepage}
\maketitle

\begin{abstract}

A natural measure of smoothness of a  Boolean function is  its {\em sensitivity} (the largest number of Hamming neighbors of a point which differ from it in function value). The structure of smooth or equivalently low-sensitivity functions is still a mystery.
A well-known conjecture states that every such Boolean function can
be computed by a shallow decision tree. While this conjecture implies
that smooth functions are easy to compute in the {\em simplest}
computational model, to date no non-trivial upper bounds were known
for such functions in {\em any} computational model,
including unrestricted Boolean circuits.  Even a bound on the
description length of such functions better than the trivial $2^n$
does not seem to have been known. 

In this work, we establish the first computational upper bounds on
smooth Boolean functions:  

\begin{itemize}
\item We show that every sensitivity $s$ function is uniquely
  specified by its values on a Hamming ball of radius $2s$. We use this
  to show that such functions can be computed by circuits of size
  $n^{O(s)}$.

\item We show that sensitivity $s$ functions satisfy a strong {\em pointwise}
  noise-stability guarantee for random noise of rate $O(1/s)$. We
  use this to show that these functions have formulas of depth $O(s\log n)$.

\item We show that sensitivity $s$ functions can be (locally)
  self-corrected from worst-case noise of rate $\exp(-O(s))$. 
\end{itemize}

All our results are simple, and follow rather directly from (variants of)
the basic fact that that the function value at few points in small
neighborhoods of a given point determine its function value via a majority vote. Our
results confirm various consequences of the conjecture. They may be
viewed as providing a new form of evidence towards its validity, as
well as new directions towards attacking it. 
\end{abstract}

 \thispagestyle{empty}

\end{titlepage}

\section{Introduction}

\subsection{Background and motivation}

The smoothness of a continuous function captures how gradually it
changes locally (according to the metric of the underlying space). For
Boolean functions on the Hamming cube, a natural analog is {\em
  sensitivity}, capturing how many neighbors of a point have different
function values. More formally, the \emph{sensitivity} of a Boolean
function $f:\zon \rgta \zo$ at input $x \in \zon$, written $s(f,x)$,
is   the number of neighbors $y$ of $x$ in the Hamming cube such that $f(y) \neq f(x).$
The \emph{max sensitivity} of $f$, written $s(f)$ and often referred to simply as the ``sensitivity of $f$'', is defined as
\ignore{
\begin{align*}
s(f) & =\max_{x\in \zon}s(f,x).
\end{align*}
}
$s(f)  = \max_{x\in \zon}s(f,x).$
So, $0 \leq s(f) \leq n$, and while not crucial, it may be good for
the reader to consider this parameter as ``low'' when e.g. either
$s(f) \leq (\log n)^{O(1)}$ or $s(f)\leq n^{o(1)}$ (note that both
upper bounds are closed under taking polynomials). 

\Pnote{We should perhaps say more about the analogy to the continuous
  setting? Added:}  
To see why low-sensitivity functions might be considered smooth, let $\delta(\cdot,\cdot)$
denote the normalized Hamming metric on $\zo^n$. A simple
application of the triangle inequality gives
\[\E_{y: \delta(x,y) = \delta_0}|f(x) - f(y)| \leq \delta_0 s(f).\]
Thus $s(f)$ might be viewed as being somewhat analogous to the Lipschitz
constant of $f$. 

\Pnote{Another concern is that the term smoothness disappears after Section 1.1. Do any heroes volunteer to insert it where appropriate?}

A well known conjecture states that every smooth Boolean function is
computed by a shallow decision tree, specifically of depth polynomial in the sensitivity.  This conjecture was first posed  in the form of  a question by Nisan \cite{Nisan:91} and Nisan and Szegedy  \cite{NisanSzegedy:94} but is now (we feel) widely believed to be true:

\begin{Conj}\cite{Nisan:91, NisanSzegedy:94}
\label{conj:1}
There exists a constant $c$ such that every Boolean function $f$ has a decision tree of depth $s(f)^c$. 
\end{Conj}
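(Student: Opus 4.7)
The natural plan is to reduce the conjecture to the known polynomial equivalences between combinatorial complexity measures of Boolean functions. By classical results of Nisan, Nisan--Szegedy, and Beals--Buhrman--Cleve--Mosca--de Wolf, the decision tree depth $\mathrm{DT}(f)$, the block sensitivity $\mathrm{bs}(f)$, and the polynomial degree $\deg(f)$ are all polynomially related; in particular $\mathrm{DT}(f) \leq \mathrm{bs}(f)^3$ and $\mathrm{bs}(f) \leq 2\deg(f)^2$. Thus it suffices to prove $\mathrm{bs}(f) \leq s(f)^{O(1)}$, which is precisely the notorious sensitivity-versus-block-sensitivity gap. The best known separation is only quadratic (Rubinstein) while the best known upper bound is exponential, so any polynomial bound is the real content.

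First I would try a direct decision-tree construction: at a worst-case input $x$ realizing $s(f,x) = s(f)$, query the $s(f)$ sensitive coordinates and recurse on each resulting subcube. The fundamental obstacle is that restrictions of low-sensitivity functions need not themselves have low sensitivity, so a recursive depth bound is not automatic. Even the ``certificate''-style result this paper establishes (that values on a Hamming ball of radius $2s$ determine $f$) does not by itself convert into a shallow decision tree: the ball has $n^{\Theta(s)}$ points, and reading them naively uses depth $n$.

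A more promising route is the Gotsman--Linial reformulation: the sensitivity conjecture is equivalent to the statement that for every bipartition of $V(Q_n)$ into two unequal-sized parts, at least one part induces a subgraph of maximum degree $n^{\Omega(1)}$. Taking one part to be $f^{-1}(0)$ turns the question into pure hypercube geometry. The most natural attack is spectral: the adjacency matrix of $Q_n$ has eigenvalues $\{n-2k : 0\leq k\leq n\}$, and Cauchy interlacing against an induced subgraph yields \emph{some} lower bound on its maximum degree, but only of order $\log n$ rather than the $n^{\Omega(1)}$ that the conjecture demands.

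The hard part, and where this proposal stalls, is closing this exponential gap. One would like to find a clever \emph{signed} variant $A^{\pm}$ of the cube's adjacency matrix whose spectrum is tightly concentrated (for example, satisfying $(A^{\pm})^2 = n \cdot I$), and apply interlacing to that instead; the absolute eigenvalues of $A^{\pm}$ would then be $\sqrt{n}$ on a large subspace, which plugged into interlacing would force induced subgraphs on a majority of vertices to contain a vertex of degree at least $\sqrt{n}$. Producing such a matrix and extracting the required spectral inequality is the true missing ingredient; no current technique I know of delivers it, which is why this conjecture has resisted proof since 1991.
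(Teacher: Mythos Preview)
The statement you are attempting to prove is \emph{Conjecture~\ref{conj:1}} in the paper, not a theorem: the paper does not prove it. Indeed the entire point of the paper is to establish various \emph{consequences} of this conjecture (circuit upper bounds, noise stability, self-correction) without assuming the conjecture itself; the authors explicitly frame these results as ``providing a new form of evidence towards its validity.'' So there is no ``paper's own proof'' to compare your proposal against.

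Your proposal is therefore not so much wrong as \emph{self-aware that it is incomplete}: you correctly reduce to the block-sensitivity versus sensitivity question, correctly invoke the Gotsman--Linial equivalence, and then candidly admit that the spectral/interlacing step (finding a signed adjacency matrix $A^{\pm}$ with $(A^{\pm})^2 = nI$ and applying Cauchy interlacing) is ``the true missing ingredient'' that you cannot supply. That is an honest assessment of the state of affairs \emph{at the time the paper was written}. In fact, precisely this missing ingredient was later supplied by Huang (2019), who constructed exactly such a signing of the hypercube adjacency matrix and used interlacing to obtain $s(f) \geq \sqrt{\deg(f)}$, thereby resolving Conjecture~\ref{conj:1}. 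So your instinct about the shape of the eventual proof was correct, but what you have written is an outline of a strategy together with an acknowledgment that you cannot execute its key step --- it is not a proof, and the paper makes no claim to have one either.
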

The converse is trivial, since every Boolean function computable by a depth $d$
decision tree has sensitivity at most $d$. However, the best known upper
bound on decision tree depth in terms of sensitivity is exponential (see Section \ref{sec:related-work}).

A remarkable series of developments, starting with Nisan's paper~\cite{Nisan:91},
showed that decision tree depth is an extremely robust complexity
parameter, in being polynomially related to many other, quite diverse
complexity measures for Boolean functions, including PRAM complexity,
block sensitivity, certificate complexity, randomized decision tree
depth, quantum decision tree depth, real polynomial degree, and
approximating polynomial degree. Arguably the one natural complexity
measure that has defied inclusion in this equivalence class is
sensitivity.  Thus, there are many equivalent
formulations of Conjecture \ref{conj:1}; indeed, Nisan originally posed the
question in terms of sensitivity versus block sensitivity \cite{Nisan:91}.  See the
extensive survey  \cite{Chicago} for much more information about the
conjecture and \cite{BuhrmandeWolf:02} for background on various
Boolean function complexity measures.  

\Pnote{Not sure I see where this paragraph adds, conditioned on the
  last one}
 \Anote{OK, removed, and added a bit later after Conj 2}

Conjecture \ref{conj:1} is typically viewed as a \emph{combinatorial}
statement about the Boolean hypercube.  However, the conjecture also makes a strong assertion about
\emph{computation}, stating that smooth functions have very
low complexity; indeed, the conjecture posits that they are
easy to compute in arguably the simplest computational model ---
deterministic decision trees. This implies that smooth functions easy for many
other ``low-level'' computational models via the following chain of inclusions:
\begin{align*}
\ms{DecTree}\text{-}\ms{depth}(\poly(s)) & \subseteq \mathsf{DNF}\text{-}\ms{width}(\poly(s))  \subseteq \mathsf{AC_0}\text{-}\ms{size}(n^{\poly(s)})\\
& \subseteq \mathsf{Formula}\text{-}\ms{depth}(\poly(s)\log(n)) \subseteq
\mathsf{Circuit}\text{-}\ms{size}(n^{\poly(s)}).
\end{align*}

Given these inclusions, and the widespread interest that Conjecture \ref{conj:1} has attracted in 
the study of Boolean functions, it is perhaps surprising that no non-trivial upper bounds were previously known on low 
sensitivity functions in \emph{any}
 computational model, including unrestricted Boolean circuits. Indeed, a pre-requisite
for a family of functions to have small circuits  is an upper
bound on the number of  functions in the family, or equivalently
on the description length of such functions; even such bounds were not previously known for low sensitivity
functions.  This gap in our understanding
of low sensitivity functions helped motivate the present work.

An equivalent formulation of Conjecture \ref{conj:1} is that every sensitivity $s$ function is computed by
a real polynomial whose degree is upper bounded by some polynomial in
$s$. This is equivalent to saying that the \emph{Fourier expansion} of the
function has degree $\poly(s)$:
\begin{Conj}\cite{Nisan:91,NisanSzegedy:94}
\label{conj:2} (Equivalent to Conjecture~\ref{conj:1})
There exist a constant $c$ such that every Boolean function is computed by a real polynomial of degree $s(f)^c$. 
\end{Conj}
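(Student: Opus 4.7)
The equivalence rests on known polynomial relations between the deterministic decision tree depth $D(f)$ and the real polynomial degree $\deg(f)$, neither of which involves sensitivity directly. The plan is to prove each implication separately by plugging the hypothesis conjecture into the appropriate inequality from the literature.

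For the direction Conjecture~\ref{conj:1} $\Rightarrow$ Conjecture~\ref{conj:2}, I would invoke the elementary bound $\deg(f) \leq D(f)$. The proof is to take a depth-$D(f)$ decision tree computing $f$, write $f$ as the sum over its accepting leaves of path-indicator polynomials, and note that each indicator is a product of at most $D(f)$ linear factors of the form $x_i$ or $1-x_i$, hence a polynomial of degree at most $D(f)$; summing yields a polynomial representation of $f$ of degree at most $D(f)$. Plugging Conjecture~\ref{conj:1} then gives $\deg(f) \leq D(f) \leq s(f)^c$, which is Conjecture~\ref{conj:2} with the same constant.

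For the direction Conjecture~\ref{conj:2} $\Rightarrow$ Conjecture~\ref{conj:1}, I would invoke the nontrivial polynomial relation $D(f) \leq \deg(f)^{O(1)}$, which is the heart of the series of works recalled in the introduction. One concrete route is $D(f) \leq \poly(bs(f))$ of Beals et al., combined with the Nisan--Szegedy inequality $bs(f) = O(\deg(f)^2)$; Midrijanis later sharpened the overall relation to $D(f) \leq \deg(f)^3$. In either case, inserting Conjecture~\ref{conj:2} gives $D(f) \leq \deg(f)^{O(1)} \leq s(f)^{O(c)}$, which is Conjecture~\ref{conj:1}.

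The main obstacle is the backward direction, but all the heavy lifting has been done in prior work and no new insight about sensitivity is needed: the chain of polynomial equivalences closes the gap between $D(f)$ and $\deg(f)$ unconditionally, and neither endpoint depends on $s(f)$. The only things to check are that these classical bounds are stated with explicit polynomial exponents and make no hidden assumption about sensitivity, both of which are routine.
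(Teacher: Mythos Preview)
Your proposal is correct and matches the paper's approach: the paper does not give an explicit proof of the equivalence but simply asserts it, relying on the ``remarkable series of developments'' (cited around Conjecture~\ref{conj:1}) establishing that decision tree depth and real polynomial degree are polynomially related. Your argument just unpacks this assertion --- the easy bound $\deg(f) \le D(f)$ for one direction and the Nisan--Szegedy/Midrijanis-type bound $D(f) \le \deg(f)^{O(1)}$ for the other --- which is exactly the intended justification.
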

\Anote{Added on smoothness}
Given the analogy between sensitivity and the Lipschitz constant,
this form of the conjecture gives a natural discrete analog of continuous approximations of smooth Lipschitz functions by low-degree polynomials, first obtained for univariate case by Weierstrass~\cite{weierstrass1885}, which has had a huge influence on the development of modern analysis. This lead to a large body of work in approximation theory, and we mention here the sharp quantitative version of the theorem~\cite{jackson1930} and its extension to the multivariate case~\cite{NewSha1964}.

This formulation of the conjecture is also interesting because of the rich
structure of low-degree polynomials that low sensitivity
functions are believed to share. For instance, low-degree
real polynomials on the Boolean cube are easy to interpolate from relatively few values (say over a Hamming ball). 
The interpolation procedure can be
made tolerant to noise, and local (these follow from the fact that
low-degree real polynomials also have low degree over $\F_2$). Again,
our understanding of the structure of low sensitivity
functions was insufficient to establish such properties for them prior to this work. 

Finally, to every Boolean function $f$ one can associate the bipartite graph
$G_f$ which has left and right vertex sets $f^{-1}(0)$ and
$f^{-1}(1)$, and which has an edge $(x,y)$ if the Hamming distance $d(x,y) $ is $1$ and $f(x)
\neq f(y)$.  A function has max sensitivity $s$ if and only the graph $G_f$ has maximum degree at most $s$.
From this perspective one can view Conjectures~\ref{conj:1} and~\ref{conj:2} as a step towards
understanding the graph-theoretic structure of Boolean functions and
relating it to their computational and analytic structure (as captured
by the Fourier expansion).  In this paper, we propose proving various
implications of the conjecture both as a necessary first step towards
the conjecture, and as a means to better understanding low sensitivity
functions from a computational perspective. 

\subsection{Our Results}

Let $\calF(s,n)$ denote the set of Boolean functions on $n$ variables
such that $s(f) \leq s$. We sometimes refer to this class simply as
``sensitivity $s$ functions'' ($n$ will be implicit).

The starting point for our results is an upper bound
stating that low-sensitivity functions can be interpolated from Hamming
balls. This parallels the fact that a degree $d$ polynomial can be
interpolated from its values on a Hamming ball of radius $d$. 

\begin{Thm}
\label{thm:intro-ball}
Every sensitivity $s$ function on $n$ variables is uniquely specified by its values on
any Hamming ball of radius $2s$ in $\zo^n$.
\end{Thm}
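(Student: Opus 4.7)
Suppose $f, g \in \calF(s,n)$ agree on the ball $B(z, 2s) = \{x : d(x,z) \leq 2s\}$ for some fixed center $z \in \zon$. The plan is to show by induction on $r$ that $f$ and $g$ agree on $B(z, r)$ for every $r \geq 2s$, with the base case $r = 2s$ given by hypothesis; taking $r = n$ then yields $f = g$ on all of $\zon$. Let $L_d = \{x : d(x,z) = d\}$ denote the $d$-th layer.

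The key observation is a \emph{local majority-vote reconstruction}: for any $h$ of sensitivity at most $s$ and any point $x$, if $N \subseteq \{y : d(x,y) = 1\}$ is any set of neighbors of $x$ with $|N| > 2s$, then $h(x)$ equals the majority value of $h$ on $N$. Indeed, by the sensitivity bound at $x$, at most $s$ of the $n$ neighbors of $x$ differ in $h$-value from $h(x)$, so at most $s$ elements of $N$ disagree with $h(x)$; since $s < |N|/2$, the majority of $N$ carries $h(x)$.

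We apply this as follows. Fix $r \geq 2s$ and suppose inductively that $f \equiv g$ on $B(z, r)$. Let $x \in L_{r+1}$ be arbitrary. The \emph{inward} neighbors of $x$, i.e.\ those obtained by flipping one of the $r+1$ coordinates where $x$ differs from $z$, all lie in $L_r \subseteq B(z, r)$; call this set $N_{\mathrm{in}}(x)$, so $|N_{\mathrm{in}}(x)| = r+1 \geq 2s + 1 > 2s$. By the inductive hypothesis, $f$ and $g$ agree on $N_{\mathrm{in}}(x)$. Applying the majority-vote reconstruction to both $f$ and $g$ with $N = N_{\mathrm{in}}(x)$, we get
\[
f(x) = \mathrm{Maj}_{y \in N_{\mathrm{in}}(x)} f(y) = \mathrm{Maj}_{y \in N_{\mathrm{in}}(x)} g(y) = g(x),
\]
completing the inductive step.

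The only place the specific radius $2s$ enters is to ensure that at the very first outward step (from $r = 2s$ to $r+1 = 2s+1$), the number of available inward neighbors strictly exceeds $2s$, which is exactly what makes the sensitivity bound $s$ insufficient to flip the majority. Everything else is routine; the main conceptual step is recognizing that inward neighbors (rather than all neighbors) are enough to carry out the vote, which is what lets the determined region grow outward from $B(z, 2s)$ one layer at a time all the way to the antipode of $z$.
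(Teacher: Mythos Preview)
Your proof is correct and follows essentially the same approach as the paper: the paper isolates the same majority-vote observation as its Lemma~\ref{lem:key} and then inducts outward layer by layer from the ball of radius $2s$, using the $r+1 \geq 2s+1$ inward neighbors at each step exactly as you do. The only cosmetic difference is that the paper relabels the center to $0^n$ and speaks of weight-$r$ neighbors rather than inward neighbors.
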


The simple insight here is that knowing the values of $f$ at any set of $2s +1$ neighbors
of a point $x$ uniquely specifies the value of $f$ at $x$: it is the majority
value over the $2s+1$ neighbors (else the point $x$ would be
too sensitive).  This implies the following upper bound on the number
of sensitivity $s$ functions: 
\begin{align*}
|\calF(s,n)| \leq 2^{n \choose \leq 2s}.
\end{align*}

Our proof of Theorem \ref{thm:intro-ball} is algorithmic (but
inefficient). We build on it to give efficient algorithms that
compute $f$ at any point $x \in \zo^n$, given the values of $f$ on a Hamming ball as advice. 

Our first algorithm takes a bottom-up approach. We know the values of $f$ on a
ball of small radius around the origin, and wish to infer
its value at some arbitrary point $x$. Imagine moving the center of
the ball from the origin to $x$ along a shortest path. The key
observation is that after shifting the ball by Hamming distance $1$,
we can recompute the values of $f$ on the shift using a simple
Majority vote.

Our second algorithm uses a top-down approach, reducing computing $f$
at $x$ to computing $f$ at $O(s)$ neighboring points of Hamming weight
one less than $x$. We repeat this till we reach points of weight
$O(s)$ (whose values we know from the advice). By carefully choosing
the set of $O(s)$ neighbors, we  ensure that no more than $n^{O(s)}$
values need to be computed in total:   

\begin{Thm}
\label{thm:ckt}
Every sensitivity $s$ function is computed by a Boolean circuit of size
$O(sn^{2s+1})$ and depth $O(n^s)$.
\end{Thm}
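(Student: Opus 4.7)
The plan is to build a layered, bottom-up circuit that tracks the values of $f$ on a Hamming ball of radius $2s$ as its center is shifted one coordinate at a time from $0^n$ to $x$, starting from the hardcoded values on $B(0^n, 2s)$ guaranteed by Theorem~\ref{thm:intro-ball}. The trick is to parametrize a point in the current ball by its XOR with the current center, so at every layer the ``ball'' is the fixed index set $\{z \in \zon : |z| \leq 2s\}$, of size $\binom{n}{\leq 2s} = O(n^{2s})$. Concretely, for $0 \leq t \leq n$ set $c_t(x) = (x_1, \ldots, x_t, 0, \ldots, 0)$ and introduce intermediate wires $g_t(x, z) := f(c_t(x) \oplus z)$ for $|z| \leq 2s$; the layer-$0$ wires $g_0(x, z) = f(z)$ are hardcoded constants, and the output of the circuit is $g_n(x, 0) = f(x)$.

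For the transition from layer $t-1$ to $t$, I gate on $x_t$. When $x_t = 0$ the center is unchanged and $g_t(x, z) = g_{t-1}(x, z)$. When $x_t = 1$ the center flips bit $t$, and a case analysis on $(z_t, |z|)$ gives two possibilities: either $z_t = 1$, or $z_t = 0$ with $|z| < 2s$, in both of which $c_t(x) \oplus z$ still lies in the old ball and $g_t(x, z) = g_{t-1}(x, e_t \oplus z)$; or $z_t = 0$ with $|z| = 2s$, in which case the point is ``new'' (at distance $2s + 1$ from the old center), and I set $g_t(x, z)$ to the majority of the $2s + 1$ values $g_{t-1}(x, z)$ together with $g_{t-1}(x, e_t \oplus z \oplus e_j)$ over $j \in \supp(z)$. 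Correctness of this majority is exactly the sensitivity argument underlying Theorem~\ref{thm:intro-ball}: at most $s$ of the $n$ Hamming neighbors of any point disagree with $f$ at that point, so the majority of any $2s + 1$ of them equals $f$ at the center. A multiplexer on $x_t$ combines the two branches using $O(s)$ gates per pair $(t, z)$.

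A direct count gives size $n \cdot O(n^{2s}) \cdot O(s) = O(s n^{2s+1})$ and depth $n \cdot O(\log s) = O(n \log s) \leq O(n^s)$ for $s \geq 1$. The main obstacle is the combinatorial check behind the ``new point'' case: I need to verify that the $2s + 1$ Hamming neighbors $y \oplus e_j$ of $y := c_t(x) \oplus z$ indexed by $j \in \{t\} \cup \supp(z)$ all lie inside $B(c_{t-1}(x), 2s)$, so their values are available as specific entries of $g_{t-1}(x, \cdot)$. This is a short calculation in the Hamming cube, but it is the one place where the specific geometry of the unit shift enters, and it is essential to reducing the per-value cost from $\Theta(n)$ to $O(s)$.
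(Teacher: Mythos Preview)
Your proposal is correct and follows essentially the same approach as the paper's \textbf{Bottom-Up} algorithm: shift the center of a radius-$2s$ ball one coordinate at a time from $0^n$ toward $x$, and for each point that falls outside the previous ball use the fact that it has exactly $2s+1$ neighbors inside the old ball, so its value is recovered by majority (this is exactly the paper's key lemma preceding Corollary~\ref{cor:shift}). Your explicit parametrization of ball points by their XOR with the current center, together with the multiplexer on $x_t$, is the natural way to make the paper's input-dependent shortest-path algorithm into an oblivious circuit; the paper leaves this conversion implicit, but the resulting size and depth bounds match.
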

Simon has shown that every sensitivity $s$ function depends on at most
$2^{O(s)}$ variables \cite{Simon82}. Thus, the circuit we construct has size at most
$2^{O(s^2)}$.   

A natural next step would be to parallelize this algorithm. Towards this goal, 
we show that low sensitivity functions satisfy a very strong
noise-stability guarantee: Start at any point $x \in \zo^n$ and take a
random walk of length $n/10s$ to reach a point $y$. Then $f(x) = f(y)$
with probability $0.9$, where the probability is only over the coin
tosses of the walk and not over the starting point $x$. Intutitively,
this says that the value of $f$ at most points in a ball of radius
$n/10s$ around $x$ equals the value at $x$ (note that in contrast, Theorem~\ref{thm:intro-ball} only uses
the fact that most points in a ball of radius $1$ agree with the
center). We use this structural property to get a small depth \emph{formula} that
computes $f$:   

\begin{Thm} 
\label{thm:formulas} 
Every sensitivity $s$ function is computed by a Boolean formula 
of depth $O(s\log n)$ and size $n^{O(s)}.$
\end{Thm}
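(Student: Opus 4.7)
My plan is to use the pointwise noise-stability guarantee stated just above the theorem: for $\delta = \Theta(1/s)$ and every $x \in \zon$, a $\delta$-noisy sample $\boldsymbol{y}$ of $x$ satisfies $\Pr[f(\boldsymbol{y}) \neq f(x)] \leq 1/10$. Equivalently, $f(x) = \mathrm{Maj}_{\boldsymbol{y}} f(\boldsymbol{y})$ with high probability over a random walk of length $n/10s$ started at $x$.

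First I would derandomize using Chernoff amplification plus a union bound. Taking $t = \Theta(n)$ independent noisy samples and applying Chernoff gives majority-correctness with probability $1 - 2^{-\Omega(n)}$ for any fixed $x$; a union bound over all $2^n$ inputs then produces a single deterministic list of noise patterns $z_1, \ldots, z_t$ (each of Hamming weight $O(n/s)$) such that $f(x) = \mathrm{Maj}_i f(x \oplus z_i)$ for \emph{every} $x$. A $t$-input majority can be implemented by an AND/OR formula of depth $O(\log n)$ and size $\mathrm{poly}(n)$, so each application of this reduction contributes $O(\log n)$ to the depth and $\mathrm{poly}(n)$ to the size.

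The key next step is to iterate this reduction $O(s)$ times and terminate via Theorem~\ref{thm:intro-ball}. I would choose the noise patterns at each round to bias the walk toward a fixed reference point (say the origin), so that each round decreases $|x|$ by $\Theta(n/s)$; one natural implementation is to flip $n/10s$ uniformly random $1$-bits of the current state. A triangle-inequality argument parallel to the unbiased pointwise-noise-stability proof---at each micro-step the probability of flipping a sensitive bit is at most $s/|x^{(i)}|$, and summing over the walk yields a constant---should show $\Pr[f(y) \neq f(x)] \leq O(1/10)$ for this biased walk as long as $|x|$ is not too small. After $O(s)$ rounds every leaf point has Hamming weight $\leq 2s$ and therefore lies in the ball where $f$ is determined by its values at $n^{O(s)}$ hardwired points (Theorem~\ref{thm:intro-ball}), yielding a DNF-style lookup of the claimed size.

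Composing $O(s)$ rounds of majority with the base-case lookup gives total depth $O(s) \cdot O(\log n) = O(s \log n)$ and total size $\mathrm{poly}(n)^{O(s)} \cdot n^{O(s)} = n^{O(s)}$, as required. The main obstacle I anticipate is carrying out the biased-walk noise-stability bound cleanly and uniformly across all $O(s)$ rounds, particularly for $x$'s whose current weight is close to the step size $n/10s$; this can likely be handled by keeping \emph{two} reference points ($0^n$ and $1^n$) and routing each intermediate point to whichever is closer, so the biased walk always has ample room to shrink the distance to the chosen reference.
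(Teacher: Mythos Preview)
Your overall plan---downward-biased noisy walks, majority amplification, and termination via the values on a small Hamming ball---matches the paper's approach. The genuine gap is the claim that $O(s)$ rounds suffice.

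The very triangle-inequality bound you invoke forces the step size to scale with the \emph{current} weight, not with $n$. If $\wt(x)=d$ and you flip $t$ of its $1$-bits, the error bound you cite is $\sum_{i=0}^{t-1} s/(d-i) \leq st/(d-t)$, so keeping this $\leq 1/10$ requires $t \leq d/(10s+1)$. Consequently each round shrinks the weight only by a factor $(1-\Theta(1/s))$, and reaching weight $O(s)$ from weight $\Theta(n)$ takes $\Theta(s\log n)$ rounds, not $O(s)$. Your proposed fix with two reference points does not change this: if you route $x$ to the nearer of $0^n$ and $1^n$, the distance to that reference is $\min(\wt(x),n-\wt(x))\leq n/2$, and the same inequality caps the per-round decrease at a $\Theta(1/s)$ fraction of that distance, which is again multiplicative.

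With $\Theta(s\log n)$ rounds your round-by-round derandomization (Chernoff over $\Theta(n)$ samples plus a union bound over $2^n$ inputs) makes the fan-in $\Theta(n)$ at \emph{every} level, giving size $n^{\Theta(s\log n)}$ rather than $n^{O(s)}$. The paper avoids this by keeping the fan-in at each recursive step a \emph{constant} $c=c(1/5,1/20)$; over $O(s\log n)$ levels this yields $c^{O(s\log n)}=n^{O(s)}$ leaves. The resulting randomized procedure errs with probability at most $1/20$ on every input, and only then is derandomization applied once at the outermost level (e.g.\ by $O(n)$ independent copies and a single outer majority, then fixing the random bits), which preserves both the $O(s\log n)$ depth and the $n^{O(s)}$ size.
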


(By \cite{Simon82}, these formulas have depth at most $O(s^2)$ and size at most $2^{O(s^2)}$ as before.)
At a high level, we again use the the values on a Hamming ball as
advice. Starting from some arbitrary input $x$, we use a variant of the
noise-stability guarantee (which holds for ``downward'' random walks that only flip 1-coordinates to 0) 
to reduce the computation of $f$ at $x$ to computing $f$ on
$O(1)$ many points whose weight is less than that of $x$ by a factor of
roughly $(1- 1/(10s))$ 
(a majority vote on these serves to amplify the success probability). Repeating this for each of these new points, recursively, for $O(s\log(n))$ times, we reduce computing $f$ at $x$ to
computing $f$ at various points in a small Hamming ball around the
origin, which we know from the advice.

We also show that low-sensitivity functions admit 
local self-correction. The setup here is that we are given oracle
access to an unknown function $r: \zo^n \rgta \zo$ that is promised to be close to
a low sensitivity function. Formally, there exists a
sensitivity $s$ function $f:\zo^n \rgta \zo$ such that 
\[ \delta(r,f) := \Pr_{x \in \zo^n}[r(x) \neq f(x)]  \leq 2^{-d s}\]
for some constant $d$.
We are then given an arbitrary $x \in \zo^n$ as an input, and our goal is to return
$f(x)$ correctly with high probability for every $x$, where the probability is over the coin tosses of the
(randomized) algorithm. We show that there is a self-corrector for $f$ with the
following guarantee: 

\Anote{Simplified. Seems to me also that the bound on both queries and
  time is only $(\log(1/\eps))^{O(s)}$ }
\Pnote{I think it is $\log(1/\eps)^{s\log(n/s)}$. So it makes sense to
  first get 0.99 with $(n/s)^{O(s)}$ query complexity, and then one could
  amplify to $1- \eps$ using majority.}

\begin{Thm}
\label{thm:local-intro}
There exist a constant $d$ such that the following holds.
Let  $r:\zo^n \rgta \zo$ be such that $\delta(r,f) \leq 2^{-d s}$
for some sensitivity $s$ function $f$. There is an algorithm which, when
given an oracle for $r$ and $x \in \zo^n$ as input, queries the oracle
for $r$ at $(n/s)^{O(s)}$ points, runs in
$(n/s)^{O(s)}$ time, and returns the correct value of
$f(x)$ with probability $0.99$.  
\end{Thm}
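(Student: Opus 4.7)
My plan is to randomize the deterministic ball-interpolation algorithm from Theorem~\ref{thm:ckt} using a single uniformly random Hamming-cube shift, in the manner of classical local self-correctors for algebraic codes. Given oracle $r$ and input $x\in\zon$, the algorithm samples $\sigma\in\zon$ uniformly at random and considers the shifted function $g(y):=f(y\oplus\sigma)$, which still has sensitivity at most $s$ and satisfies $g(x\oplus\sigma)=f(x)$. It then simulates the deterministic bottom-up algorithm from Theorem~\ref{thm:ckt} on input $x\oplus\sigma$ for $g$; that algorithm needs only the values $\{g(z):z\in B_{2s}(0)\}$, i.e., $g$ on the Hamming ball of radius $2s$ around the origin. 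Whenever the simulation requests a value $g(z)=f(z\oplus\sigma)$, the algorithm substitutes $r(z\oplus\sigma)$ and proceeds, finally returning whatever the simulation returns.

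The number of $r$-queries and the running time are both bounded by $|B_{2s}(0)|=(n/s)^{O(s)}$ (one may further shrink $n$ to the $2^{O(s)}$ relevant variables via Simon's theorem if desired). Because $\sigma$ is uniformly random, for each fixed $z$ the query point $z\oplus\sigma$ is itself uniformly distributed on $\zon$, and hence
\[
\Pr_\sigma\bigl[r(z\oplus\sigma)\neq f(z\oplus\sigma)\bigr]\;\leq\;\delta(r,f)\;\leq\;2^{-ds}.
\]
Conditional on every queried value being answered correctly, the simulation is identical to running the Theorem~\ref{thm:ckt} algorithm on $g$, and so outputs $g(x\oplus\sigma)=f(x)$.

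The main obstacle is making $d$ a genuine constant (independent of $n$ and $s$), whereas a literal union bound over the $(n/s)^{O(s)}$ leaf queries would force $d=\Omega(\log(n/s))$. The plan for closing this gap is to exploit that the bottom-up algorithm of Theorem~\ref{thm:ckt} only ever produces new values via Majorities of $2s+1$ already-known values, and that each such Majority tolerates up to $s$ wrong inputs. A careful combinatorial argument --- tracking how the density of erroneous entries evolves as the ball is translated step by step from $\sigma$ towards $x$, and using that a Majority of $2s+1$ bits each erroneous with probability $\beta$ is itself erroneous with probability at most $\binom{2s+1}{s+1}\beta^{s+1}$ --- should show that starting from leaf error rate $2^{-ds}$ with a large enough absolute constant $d$, the propagated errors stay well below the Majority threshold throughout the shift, and the final answer is $f(x)$ with probability at least $0.99$.
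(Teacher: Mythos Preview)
Your approach is genuinely different from the paper's, and the gap you yourself flag is real and, as far as I can see, not closable along the lines you sketch. Once you fix the single shift $\sigma$, the error pattern on the initial ball $B(\sigma,2s)$ is a \emph{fixed} set, not a collection of independent Bernoulli coins, so the bound $\binom{2s+1}{s+1}\beta^{s+1}$ simply does not apply. In fact, the cleanest statement one can extract from the bottom-up propagation is this: if $|B(\sigma,2s)\cap E|\le s$ (where $E=\{y:r(y)\neq f(y)\}$), then every Majority along the way is correct and the output is $f(x)$; but if $|B(\sigma,2s)\cap E|>s$, nothing prevents errors from propagating. Since $\Ex_\sigma|B(\sigma,2s)\cap E|=\binom{n}{\le 2s}\mu(E)$, controlling $\Pr_\sigma[|B(\sigma,2s)\cap E|>s]$ via Markov forces $2^{-ds}\lesssim s/\binom{n}{\le 2s}$, i.e.\ $d=\Omega(\log(n/s))$, exactly the obstacle you identified. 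There is no independence left after step~0 to amplify away, and shrinking $n$ to $2^{O(s)}$ via Simon only gets you $d=O(s)$, not $d=O(1)$.

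The paper's argument is structurally different and hinges on injecting \emph{fresh} randomness at every level. The self-corrector builds a $c$-ary tree of depth $k=O(s\log(n/s))$ rooted at $x$, where each node's children are $c=O(1)$ independent draws from the noise operator $N_{1-2\delta}(\cdot)$ with $\delta=1/(20s)$; leaves query $r$, internal nodes take Majority. Two ingredients make this work with constant $d$. First, pointwise noise stability (Lemma~\ref{lem:noise-stab}): for \emph{every} $x$, a single noisy step lands on a point with $f$-value differing from $f(x)$ with probability at most $2\delta s=1/10$. Second, hypercontractivity (Corollary~\ref{cor:sse}): the set $S_k$ of ``bad'' inputs at level $k$ satisfies $S_k\subseteq\Lambda_{\delta,1/10}(S_{k-1})$, and hence $\mu(S_k)\le\mu(S_0)^{(1+\delta)^k}$, which drops below $2^{-n}$ after $k=O(s\log(n/s))$ levels. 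The constant $d$ enters only through the base-case requirement $\mu(S_0)\le(1/10)^{4+2/\delta}=2^{-O(s)}$. The point is that the per-level independence is what lets the bad set contract geometrically; your single-shift scheme has no analogous mechanism.
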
  

Our self-corrector is similar in spirit to our formula construction:
our estimate for $f(x)$ is obtained by taking the majority over a
random sample of points in a ball of radius $n/10s$. Rather than
querying these points directly (since they might all be incorrect for an adversarial choice of $x$ and $r$), 
we use recursion. We show that $O(s\log(n))$ levels of recursion guarantee that we compute $f(x)$ with
good probability. The analysis uses Bonami's hypercontractive
inequality \cite{Ryan}. 

Our results imply that low-degree functions and low
sensitivity functions can each be reconstructed from their value on small
Hamming balls using simple but dissimilar looking ``propagation rules''. We show how degree
and sensitivity can be chracterized by the convergence of these respective
propagation rules, and use this to present a reformulaion
of  Conjecture \ref{conj:1}.

\subsection{Related Work} \label{sec:related-work}

The study of sensitivity originated from work on PRAMs
\cite{CookDR:86, Simon82}. As mentioned earlier, the question of relating sensitivity to other
complexity measures such as block sensitivity was posed in \cite{NisanSzegedy:94}.
There has been a large body of work on
Conjecture \ref{conj:1} and its equivalent formulations, and recent years have witnessed significant interest in this problem (see the survey \cite{Chicago} and the papers cited below).  To date, the
biggest gap known between sensitivity and other measures such as
block-sensitivity, degree and decision tree depth is at most
quadratic \cite{Rubinstein, AmbainisS:11}. Upper bounds on other measures such as block sensitivity and certificate complexity in terms of sensitivity are given in \cite{KK, AmbainisBGMSZ:14,AmbainisP:14,AmbainisPV:15} (see also
\cite{AmbainisV:15}).  Very recently, a novel approach to this conjecture via a communication game was proposed in the work of Gilmer {\em et  al.} \cite{GKS:15}.

\subsection{Preliminaries}

We define the $0$-sensitivity, $1$-sensitivity and the max sensitivity
of an $n$-variable function $f$ as
\begin{align*}
s_0(f) =\max_{x\in f^{-1}(0)}s(f,x), \quad
s_1(f) =\max_{x\in f^{-1}(1)}s(f,x), \quad
s(f) =\max_{x\in \zon}s(f,x) = \max(s_0(f),s_1(f)).
\end{align*}
We denote the real polynomial degree of a function by $\deg(f)$ and
its $\F_2$ degree by $\deg_2(f)$.  We write $\wt(x)$ for $x \in
\{0,1\}^n$ to denote the Hamming weight of $x$ (number of ones). 
We write $\delta(f,g)$ for $f,g: \{0,1\}^n \to \{0,1\}$ to denote
$\Pr_{x \in \{0,1\}^n}[f(x) \neq g(x)]$. 

For $x \in \{0,1\}^n$, let $\cB(x,r) \subset \{0,1\}^n$ denote the
Hamming ball consisting of all points at distance at most $r$ from
$x$. Let $\cS(x,r)$ denote the Hamming sphere consisting of all points
at distance exactly $r$ from $x$. Let $N(x)$ denote the set of Hamming
neighbors of $x$ (so $N(x)$ is shorthand for $\cS(x,1)$), and let
$N_r(x)$ denote the set of neighbors of Hamming weight $r$ (points
with exactly $r$ ones).   

The following upper bound on sensitivity in terms of degree is due to Nisan and Szegedy. 
\begin{Thm}\cite{NisanSzegedy:94}
\label{thm:ns}
For every function $f:\zo^n \to \zo$, we have
$\s(f) \leq 4(\deg(f))^2.$
 \end{Thm}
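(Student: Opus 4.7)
Let $d = \deg(f)$ and let $x^\ast \in \{0,1\}^n$ be a point attaining $s(f,x^\ast) = s(f)$. By considering $f$ or $1-f$ and relabeling coordinates we may assume $f(x^\ast) = 0$, $x^\ast = 0^n$, and that the sensitive coordinates are $[s] \subseteq [n]$ where $s = s(f)$. Restrict the other $n-s$ coordinates to $0$ and obtain a function $g : \{0,1\}^s \to \{0,1\}$ with $\deg(g) \leq \deg(f) = d$, $g(0^s) = 0$, and $g(e_i) = 1$ for every $i \in [s]$. The plan is to reduce to a one-variable problem and apply a classical polynomial-inequality argument.

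The reduction is by symmetrization. Define
\[
p(k) \;=\; \Ex_{y \,:\, \wt(y) = k}\,[\,g(y)\,] \qquad \text{for } k \in \{0,1,\ldots,s\}.
\]
A standard calculation (averaging each multilinear monomial of $g$ over $y$ of fixed weight) shows that $p$ is a univariate polynomial in $k$ of degree at most $\deg(g) \leq d$. From the properties of $g$ we get $p(0) = 0$, $p(1) = 1$, and $p(k) \in [0,1]$ for all integers $k \in \{0,1,\ldots,s\}$. Setting $q(t) = 1 - 2p(t)$ gives a polynomial of degree at most $d$ with $q(0) = 1$, $q(1) = -1$, and $|q(k)| \leq 1$ at the integer points $k = 0,1,\ldots,s$. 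In particular the mean value theorem produces some $\xi \in [0,1]$ with $|q'(\xi)| = 2$.

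The final step is to combine two classical facts about polynomials. First, the Ehlich--Zeller/Rivlin--Cheney inequality says that a degree-$d$ polynomial bounded by $1$ at $\{0,1,\ldots,s\}$ satisfies $|q(t)| \leq C$ on the whole interval $[0,s]$ for an absolute constant $C$ (as long as $d$ is not too large compared to $s$). Second, Markov's inequality on the interval $[0,s]$ gives $|q'(t)| \leq 2Cd^2/s$ for $t \in [0,s]$. Plugging in $|q'(\xi)| = 2$ yields $s \leq Cd^2$, and a careful choice of constants (which I would verify by tracking the Ehlich--Zeller constant, or by using a direct Markov-type inequality for polynomials bounded only at integer points) delivers the claimed bound $s \leq 4 d^2$. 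The main obstacle is this last technical step: the polynomial $q$ is a priori only controlled at integer points, and one must transfer this control to the continuous interval before invoking Markov's inequality, which is exactly the content of the Ehlich--Zeller estimate.
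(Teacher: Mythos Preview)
The paper does not prove Theorem~\ref{thm:ns}; it is stated in the preliminaries as a known result of Nisan and Szegedy and only cited. (The paper even remarks, in Section~5.4, that ``their proof uses Markov's inequality from analysis'' without reproducing the argument.) So there is no in-paper proof to compare against.

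Your proposal is precisely the original Nisan--Szegedy argument: restrict to the sensitive coordinates, symmetrize to obtain a univariate polynomial $p$ of degree at most $d$ with $p(0)=0$, $p(1)=1$, and $p(k)\in[0,1]$ at integer points $0,\dots,s$, and then combine a Markov-type derivative bound with control of the polynomial between integer nodes. This is correct in outline. One small correction: the way you phrase the final step is slightly off. It is not literally true that Ehlich--Zeller/Rivlin--Cheney gives a uniform bound $|q|\le C$ on all of $[0,s]$ under only the hypothesis $d$ ``not too large''; rather, what those papers prove directly is the combined statement that a degree-$d$ polynomial with $|p(i)|\le 1$ for $i=0,\dots,s$ and $|p'(\xi)|\ge c$ for some $\xi\in[0,s]$ must satisfy $d^2 \ge cs/(c+2)$ (or a minor variant). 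Applying that lemma with $c=1$ (from $p(1)-p(0)=1$) yields $s\le O(d^2)$ immediately, and the constant $4$ in the paper's statement is a safe (not sharp) choice. You correctly identified this as the delicate point; just be aware that the argument does not factor cleanly into ``first bound $|q|$ on the interval, then apply classical Markov.''
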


We record Simon's upper bound on the number of relevant variables in a low-sensitivity function:
\begin{Thm}\cite{Simon82}
\label{thm:simon}
For every function $f:\zo^n \to \zo$, the number of relevant variables $n'$ is bounded by
$n' \leq s(f)4^{s(f)}.$
\end{Thm}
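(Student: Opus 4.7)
The plan is to proceed by induction on $s = s(f)$.

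\textbf{Base case.} When $s(f) = 0$ the function $f$ is constant, so $n' = 0$ and the bound is trivial.

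\textbf{Inductive step.} Suppose the claimed bound $n'(g) \le s' \cdot 4^{s'}$ holds for every Boolean function $g$ with $s(g) = s' < s$. Given $f$ with $s(f) = s$ and $n' = n'(f)$ relevant variables, the strategy is to exhibit a subcube of codimension at most $s$ on which $f$ restricts to a function $g$ of sensitivity at most $s - 1$ that still depends on at least $(n' - s)/2^s$ variables. Applying the induction hypothesis to $g$ then yields
\[
n' - s \;\le\; 2^s \cdot (s-1)\,4^{s-1} \;\le\; s\cdot 4^s - s,
\]
which gives the claim.

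\textbf{Construction of the subcube.} Fix a point $x^* \in \{0,1\}^n$ at which $f$ attains sensitivity $s$, and let $S \subseteq [n]$ with $|S| = s$ be the coordinates sensitive to $f$ at $x^*$. Consider all $2^s$ restrictions $f|_\rho$ obtained by assigning constants to the coordinates in $S$. For each relevant variable $x_j$ with $j \notin S$, fix a witness $y^{(j)}$ with $f(y^{(j)}) \ne f(y^{(j)} \oplus e_j)$; the value $y^{(j)}|_S$ places this witness in one of the $2^s$ restricted subcubes. A pigeonhole argument over these $n'-s$ witnesses shows that some assignment $\rho^*$ preserves at least $(n'-s)/2^s$ of them, so $g := f|_{\rho^*}$ depends on at least that many variables.

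\textbf{Sensitivity reduction.} The central claim is that $\rho^*$ can be selected so that $s(g) \le s - 1$ globally. The intuition is that $x^*$ already saturates its sensitivity budget using only coordinates in $S$; freezing those coordinates removes the only direction in which $f$ achieves sensitivity $s$ in the vicinity. If, towards contradiction, $g$ had a point $y \in \{0,1\}^{[n]\setminus S}$ with $s(g,y) = s$ (so $s$ sensitive coordinates $T \subseteq [n]\setminus S$), one would combine this with the $s$ sensitive coordinates at $x^*$ (which lie entirely in $S$) to exhibit a point $z$ in the full cube where $f$ has sensitivity strictly greater than $s$, contradicting $s(f) = s$. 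The combination is achieved by moving from $x^*$ toward $y$ along coordinates in $[n]\setminus S$ and tracking that one of the intermediate points must retain all of $T$'s sensitive directions while still having a sensitive coordinate in $S$.

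\textbf{Main obstacle.} The technical heart is the global sensitivity reduction — a priori, $g$ could inherit sensitivity-$s$ points far from the image of $x^*$, and showing this does not happen requires a careful combinatorial tracking of how sensitive neighborhoods interact under restriction. This is where the factor of $4$ (rather than $2$) in the exponent $4^s$ arises: both the $0$-sensitive and $1$-sensitive structure at $x^*$ must be handled simultaneously when choosing $\rho^*$, inflating the pigeonhole cost from $2^s$ to effectively $2 \cdot 2^s$.
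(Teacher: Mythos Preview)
First, note that the paper itself does not prove this theorem: it is recorded in the Preliminaries as a known result of Simon~\cite{Simon82}, with no argument given. So there is no ``paper's own proof'' to compare against, and your proposal must stand on its own.

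On substance, the inductive scheme is reasonable, but the crucial ``sensitivity reduction'' step is not established, and the sketch you give for it does not work. Read literally, your contradiction argument would show that for \emph{every} assignment $\rho$ to $S$ the restriction $g=f|_\rho$ satisfies $s(g)\le s-1$. This is false. Take
\[
f(x_1,\dots,x_6)=(x_1\wedge x_2\wedge x_3)\ \vee\ (x_4\wedge x_5\wedge x_6).
\]
Then $s(f)=3$, achieved at $x^*=(1,1,1,0,0,0)$ with sensitive set $S=\{1,2,3\}$; yet fixing $S$ to any value other than $(1,1,1)$ yields $g(x_4,x_5,x_6)=x_4\wedge x_5\wedge x_6$, which still has sensitivity $3$. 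Your ``combining'' idea --- manufacturing a point of sensitivity $>s$ from a sensitivity-$s$ point of $g$ (sensitive set $T\subseteq[n]\setminus S$) together with $x^*$ (sensitive set $S$) --- cannot succeed in general: two max-sensitivity points with disjoint sensitive sets can coexist without forcing any point of higher sensitivity.

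The deeper problem is the tension between the two demands you place on $\rho^*$. The pigeonhole step chooses $\rho^*$ to retain many relevant variables; the sensitivity-reduction step, to the extent it can be made to work at all, would naturally single out a special assignment tied to $x^*$. You give no mechanism for a single $\rho^*$ to do both jobs, and in the example above the only restriction with $s(g)\le 2$ is $(1,1,1)$, which makes $g$ constant. The closing remarks about ``tracking sensitive neighborhoods'' and the factor-of-$4$ bookkeeping are not a proof; they are precisely the place where a real argument is needed and none is supplied.
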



\section{Structural properties of low sensitivity functions}

\subsection{Bounding the description length}

We show that functions with low sensitivity have concise descriptions,
so consequently the number of such functions is small. Indeed, we show that knowing
the values on a Hamming ball of radius $2s +1$ suffices.

\subsubsection{Reconstruction from Hamming balls and spheres.}

The following simple but key observation will be used repeatedly:

\begin{Lem}
\label{lem:key}
Let $S \subseteq N(x)$ where $|S| \geq 2s +1$. Then $f(x) = \Maj{y \in S}{f(y)}.$
\end{Lem}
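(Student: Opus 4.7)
The plan is to deduce the lemma directly from the definition of sensitivity at the point $x$. The key quantitative fact I want to exploit is that $s(f,x) \leq s(f) \leq s$, which by definition means that among all $n$ Hamming neighbors of $x$, strictly fewer than $s+1$ of them (in fact at most $s$) can take a function value different from $f(x)$.

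From this I would argue as follows. Partition the given set $S \subseteq N(x)$ into $S_{=} = \{y \in S : f(y) = f(x)\}$ and $S_{\neq} = \{y \in S : f(y) \neq f(x)\}$. Since $S_{\neq}$ is a subset of all sensitive neighbors of $x$, I have $|S_{\neq}| \leq s(f,x) \leq s$. Combined with $|S| \geq 2s+1$, this yields $|S_{=}| \geq (2s+1) - s = s+1 > |S|/2$. Thus a strict majority of the values $\{f(y) : y \in S\}$ equal $f(x)$, so $\Maj_{y \in S} f(y) = f(x)$, as claimed.

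There is essentially no obstacle here: the argument is a one-line counting application of the definition of $s(f,x)$, and the bound $2s+1$ is chosen precisely to guarantee that the sensitive neighbors (of which there are at most $s$) cannot constitute half of $S$. The only thing worth remarking is that the lemma holds for \emph{any} subset $S$ of neighbors of size at least $2s+1$, not just the full neighborhood $N(x)$, which is exactly the flexibility that will be exploited in the subsequent ball/sphere reconstruction arguments.
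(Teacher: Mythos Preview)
Your approach is essentially the same as the paper's: both arguments use that at most $s$ neighbors of $x$ can disagree with $f(x)$, so the majority over any $(2s+1)$-sized set of neighbors must equal $f(x)$. The paper phrases it as a contradiction (if the majority value $b \neq f(x)$, then the $\geq s+1$ points in $S$ with value $b$ are all sensitive neighbors), while you argue directly, but the content is identical.

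One small arithmetic slip: you write $|S_{=}| \geq (2s+1) - s = s+1 > |S|/2$, but the last inequality $s+1 > |S|/2$ only holds when $|S| \leq 2s+1$, whereas the lemma allows $|S| \geq 2s+1$. The fix is immediate: from $|S_{\neq}| \leq s$ you get $|S_{=}| = |S| - |S_{\neq}| \geq |S| - s > |S|/2$, the last step using $|S| \geq 2s+1 > 2s$. With that correction your proof is complete and matches the paper's.
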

\begin{Proof}
Let $b \in \zo$ denote the majority value of $f$ over $S$ and let
$S^b \subset S$ be the subset of $S$ over which $f$ takes the value $b$. Note that 
$|S^b| \geq \lceil|S|/2\rceil \geq s+1$
since $|S| \geq 2s +1$. If $f(x) \neq b$, then every vertex in $S^b$
represents a sensitive neighbor of $x$, and thus $s(f,x) \geq s+1$
which is a contradiction.
\end{Proof}

\begin{Thm}
\label{thm:ball}
Every sensitivity $s$ function is uniquely specified by its values on
a ball of radius $2s$.
\end{Thm}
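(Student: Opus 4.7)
The plan is to apply Lemma~\ref{lem:key} in an outward propagation argument, using Hamming weight as the induction parameter. By translation invariance (composing $f$ with a shift does not change sensitivity, and the ball of radius $2s$ around $x_0$ becomes the ball of radius $2s$ around $0^n$ under the coordinate-flip that sends $x_0 \mapsto 0^n$), I may assume without loss of generality that the center of the given Hamming ball is the origin $0^n$. So the hypothesis is that $f$ is known on all points of Hamming weight at most $2s$, and the goal is to show that every other value of $f$ is forced.

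I would prove by induction on $w \geq 2s$ the statement: the values of $f$ on all points of weight at most $w$ are uniquely determined by the initial data. The base case $w = 2s$ is immediate from the hypothesis. For the inductive step, fix a point $y$ of weight $w+1 \geq 2s+1$. The neighbors of $y$ of weight $w$ are exactly those obtained by flipping one of the $w+1$ coordinates where $y$ equals $1$, so there are exactly $w+1 \geq 2s+1$ such neighbors, and all of them lie in the weight-$\leq w$ layer where $f$ is already determined by the inductive hypothesis. Let $S \subseteq N(y)$ be this set of $w+1$ inner neighbors. Since $|S| \geq 2s+1$, Lemma~\ref{lem:key} yields $f(y) = \Maj{z \in S}{f(z)}$, which is a quantity depending only on already-determined values. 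This completes the induction and determines $f$ everywhere.

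There is essentially no obstacle: the only thing to check is that every point outside the ball of radius $2s$ really has at least $2s+1$ neighbors strictly closer to the center, which is exactly the statement that a weight-$(w+1)$ point has $w+1$ weight-$w$ neighbors. The uniqueness conclusion then follows because at each step the value $f(y)$ is computed by a deterministic majority rule from previously fixed values, so any two sensitivity-$s$ functions agreeing on $\cB(0^n, 2s)$ must agree everywhere. As a side remark, this proof is constructive and in fact yields an algorithm (albeit an inefficient one, enumerating all points in order of increasing weight), which is the starting point for the efficient bottom-up algorithm alluded to just after the theorem.
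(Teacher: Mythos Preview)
Your proof is correct and follows essentially the same approach as the paper: reduce to the origin-centered ball by relabeling, then induct on Hamming weight, applying Lemma~\ref{lem:key} to the $w+1 \geq 2s+1$ weight-$w$ neighbors of each weight-$(w+1)$ point. Your closing remark that the argument is constructive but inefficient also matches the paper's own commentary.
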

\begin{Proof}
Suppose that we know the values of $f$ on $\cB(x,2s)$. We may assume by relabeling that $x =0^n$ is the origin. Note that $\cB(0^n,2s)$ is just the
set of points of Hamming weight at most $2s$. 

We will prove that $f$ is uniquely specified on points $x$ where
$\wt(x) \geq 2s$ by induction on $r =\wt(x)$. The base case $r =2s$ is
trivial. For the induction step, assume we know $f$ for all points of weight up to $r$
for some $r \geq 2s$. Consider a point $x$ with $\wt(x) = r+1$. The
set $N_r(x)$ of weight-$r$ neighbors of $x$ has size $r+1 \geq 2s +1$. Hence 
\begin{align}
\label{eq:maj-rule} 
f(x) = \Maj{y \in N_r(x)}{f(y)}.
\end{align}
by Lemma \ref{lem:key}.
\end{Proof}

Note that by Equation \ref{eq:maj-rule}, we only need to know $f$ on
the sphere of radius $r$ rather than the entire ball to compute $f$ on
inputs of weight $r+1$. This observation leads to the following
sharpening for $s \leq n/4$.

\begin{Cor}
\label{cor:sphere}
Let $s \leq n/4$. Every sensitivity $s$ function is uniquely
specified by its values on a sphere of radius $2s$.
\end{Cor}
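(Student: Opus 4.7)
The plan is to bootstrap the argument of Theorem~\ref{thm:ball}, which already shows that knowing $f$ on $\cB(0^n,2s)$ determines $f$ everywhere. I'll show that from the sphere $\cS(0^n,2s)$ alone one can recover the rest of the ball, by running the same ``Majority-vote'' idea (Lemma~\ref{lem:key}) in the downward direction. Combining this with the upward propagation from Theorem~\ref{thm:ball} gives the claim.

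More concretely, after WLOG shifting so the center of the sphere is $0^n$, I would first invoke exactly the induction of Theorem~\ref{thm:ball}: since its inductive step only ever uses $f$ on $N_r(x) \subseteq \cS(0^n,r)$ to determine $f$ on $\cS(0^n,r+1)$, the values on $\cS(0^n,2s)$ suffice to determine $f$ on every point of weight $\geq 2s$. The new ingredient is a symmetric downward induction: for $r$ decreasing from $2s-1$ down to $0$, assume $f$ is known on $\cS(0^n,r+1)$ and pick any $x$ with $\wt(x)=r$. Its set of weight-$(r+1)$ neighbors $N_{r+1}(x)$ (obtained by flipping any one of the $n-r$ zero coordinates of $x$ to one) has size $n-r$, and by Lemma~\ref{lem:key} we have
\[
f(x) \;=\; \Maj{y \in N_{r+1}(x)}{f(y)},
\]
provided $n-r \geq 2s+1$.

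The only thing left to verify is that the size condition $n-r\geq 2s+1$ holds throughout the downward induction. The binding case is $r=2s-1$, which requires $n \geq 4s$; this is exactly the hypothesis $s\leq n/4$. For all smaller $r$ the inequality is even slacker, so the induction goes through and $f$ is determined on every $\cS(0^n,r)$ with $0\leq r \leq 2s-1$.

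The ``main obstacle,'' such as it is, is really just recognizing that the sphere-to-sphere propagation works in both directions. Upward propagation uses the weight-$r$ neighbors of an $x$ of weight $r+1$ (of which there are $r+1$, so one needs $r\geq 2s$), while downward propagation uses the weight-$(r+1)$ neighbors of an $x$ of weight $r$ (of which there are $n-r$, so one needs $r \leq n-2s-1$). The two conditions meet precisely when $2s \leq n-2s$, i.e.\ $s\leq n/4$, which is why the corollary needs exactly that hypothesis.
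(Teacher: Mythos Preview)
Your proof is correct and uses the same underlying idea as the paper: upward propagation from $\cS(0^n,2s)$ via Lemma~\ref{lem:key} determines $f$ on all points of weight $\geq 2s$, and then a second application of the majority rule recovers the points of weight $<2s$. The only difference is in packaging. You carry out the downward induction explicitly, checking that $|N_{r+1}(x)|=n-r\geq 2s+1$ for all $r\leq 2s-1$ under the hypothesis $n\geq 4s$. The paper instead observes that upward propagation already determines $f$ on all of $\cB(1^n,2s)$ (since $n-2s\geq 2s$), and then simply re-invokes Theorem~\ref{thm:ball} with center $1^n$ as a black box. Unwinding that invocation gives exactly your downward induction, so the two arguments are equivalent; the paper's version is a bit terser, while yours makes the role of the hypothesis $s\leq n/4$ more transparent.
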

\begin{Proof}
As before we may assume that $x =0^n$.  By Equation
\ref{eq:maj-rule}, the values of $f$ on $\cS(0^n,r)$
fix the values at $\cS(0^n,r+1)$. Hence knowing $f$ on
$\cS(0^n,2s)$ suffices to compute $f$ at points of weight $2s+1$
and beyond.  In particular, the value of $f$ is fixed at all points of weight
$n/2$ through $n$ (since $2s \leq n/2$).  Hence the value of $f$ is fixed at all points of
the ball $\cB(1^n,2s)$, and now Theorem~\ref{thm:ball} finishes the proof.
\ignore{
We will similarly show that knowing $f$ on $\cS(0^n,2s)$ specifies
it uniquely at points of Hamming weight $r \leq 2s$. The proof is
by induction on $2s -r$. The base case $r =2s$ is
trivial. For $r \leq 2s -1$, note that for $x \in \zo^n$ such that $\wt(x)
=r$, 
\[ |N_{r+1}(x)| \geq n -r \geq 2s +1 \]
since $n \geq 4s$. Hence by Lemma \ref{lem:key},
\begin{align}
\label{eq:maj-rule2} 
f(x) = \Maj{y \in N_{r+1}(x)}{f(y)}.
\end{align}}
\end{Proof}

\subsubsection{Upper and lower bounds on ${\cal F}(s,n).$}

\ignore{
\rnote{Note to future self:  If we can't or don't prove that any two distinct trees in the support of the distribution compute distinct functions, replace all the bottleneck-free DT silliness with the simpler argument
where the top $s-1$ layers of the DT are just computing a junta.  In fact, probably do that regardless since it gives a nearly-as-good bound and is simpler.}
\blue{In terms of exposition of the paper I don't know how much of this section we want to keep in the pre-appendix part.  One option would be to move it all to the appendix and replace it with something like the following:

We have the following upper and lower bounds on $|{\cal F}(s,n)|$:
\begin{Thm}
\begin{itemize}
\item ${n \choose s} 2^{2^s-1} \leq |{\cal F}(s,n)| \leq 2^{{n \choose \leq 2s}}$ for $n/4 \leq s \leq n$;
\item $2^{\Omega(2^s)} \cdot n^{2^{s-1}} \leq |{\cal F}_{s,n}| \leq 2^{{n \choose \leq 2s}}$ for $s \leq n/4$;
\item $
 n^{2^s-1}\cdot 2^{2^{s-1}-1} \leq {\cal F}(s,n) \leq n^{s \cdot 2^{2s-1}} \cdot 2^{2^{O(s^2)}}$ for $s$ such that $s \cdot 2^{2s-1} \leq n$.
 \end{itemize}
\end{Thm}
These bounds are proved in Appendix Z.  It would be interesting to determine the right form of the bound in the regime $\Omega(\log n) \leq s \leq n/4.$ 
}

\medskip
}

Recall that $|\calF(s,n)|$ denotes the number of distinct Boolean functions on $n$
variables with sensitivity at most $s$. We use the notation
${n \choose \leq k}$ to denote $\sum_{i=0}^k{n \choose i}$, the cardinality of a Hamming ball of radius $k$.

As an immediate corollary of Theorem \ref{thm:ball}, we have the following upper bound:

\begin{Cor}
\label{cor:counting}
For all $s \leq n$, we have
$|\calF(s,n)| \leq 2^{n \choose \leq 2s}.$
\end{Cor}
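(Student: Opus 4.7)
The statement is a direct counting consequence of Theorem~\ref{thm:ball}, so the proof plan is essentially a one-step injection argument. The plan is to exhibit an injective map from $\calF(s,n)$ into the set of Boolean-valued functions on a fixed Hamming ball of radius $2s$, and then bound the size of the codomain by the cardinality of that ball.

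Concretely, I would fix once and for all the ball $\cB(0^n, 2s) \subseteq \zon$ centered at the origin, and define the map $\Phi : \calF(s,n) \to \{0,1\}^{\cB(0^n,2s)}$ sending each $f \in \calF(s,n)$ to its restriction $f|_{\cB(0^n,2s)}$. Theorem~\ref{thm:ball} asserts that any sensitivity-$s$ function is uniquely specified by its values on a ball of radius $2s$; applied to the particular ball $\cB(0^n,2s)$, this says precisely that $\Phi$ is injective. Therefore
\[
|\calF(s,n)| \;\leq\; \bigl|\{0,1\}^{\cB(0^n,2s)}\bigr| \;=\; 2^{|\cB(0^n,2s)|} \;=\; 2^{{n \choose \leq 2s}},
\]
using the standard fact that the Hamming ball of radius $2s$ in $\zon$ has cardinality $\sum_{i=0}^{2s}{n \choose i} = {n \choose \leq 2s}$.

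There is no real obstacle here, since all the work is already done in Theorem~\ref{thm:ball}. The only thing to be slightly careful about is that the bound given by Theorem~\ref{thm:ball} is stated for an arbitrary Hamming ball of radius $2s$, and we need only one specific ball to make the counting argument go through; fixing the origin as the center is convenient and costs nothing by the symmetry of $\zon$ under translation.
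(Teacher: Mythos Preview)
Your proposal is correct and matches the paper's approach exactly: the paper states this as an immediate corollary of Theorem~\ref{thm:ball}, and your injection-into-restrictions argument is precisely the intended reasoning, just spelled out in more detail than the paper bothers with.
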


We have the following lower bounds:

\begin{Lem}
\label{lem:count-lower}
For all $s \leq n$, we have
$|\calF(s,n)| \geq \max \left( {n \choose s}2^{2^s -1}, (n - s
+1)^{2^{s-1}}\right).$
\end{Lem}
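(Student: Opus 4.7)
The plan is to exhibit two explicit families of sensitivity-$\leq s$ functions, one of size at least ${n \choose s}\cdot 2^{2^s-1}$ and another of size at least $(n-s+1)^{2^{s-1}}$, and take the maximum of the two lower bounds.

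For the first bound I would use $s$-juntas. For each $s$-subset $S \subseteq \{1,\ldots,n\}$ and each function $g:\zo^s \to \zo$, set $f_{S,g}(x) = g(x|_S)$. Since $f_{S,g}$ depends only on coordinates in $S$, at most $|S|=s$ coordinate-flips can change its value, so $s(f_{S,g}) \leq s$. To count distinct functions I would restrict to $g$'s that depend on all $s$ of their variables: for such $g$, the set of relevant variables of $f_{S,g}$ is exactly $S$, so distinct pairs $(S,g)$ yield distinct functions. A union bound shows that the number of $g:\zo^s\to\zo$ failing to depend on \emph{some} specific variable is at most $s\cdot 2^{2^{s-1}}$, hence at least $2^{2^s} - s\cdot 2^{2^{s-1}} \geq 2^{2^s-1}$ of them depend on all $s$ variables; the remaining inequality reduces to $2^{s-1}-1 \geq \log_2 s$, which holds for every $s \geq 1$. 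Multiplying by the ${n\choose s}$ choices of $S$ yields the first lower bound.

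For the second bound I would use an addressing-type construction. Designate $x_1,\ldots,x_{s-1}$ as ``address'' variables and $x_s,\ldots,x_n$ as ``data'' variables. For each map $\sigma:\zo^{s-1} \to \{s,s+1,\ldots,n\}$ define
\[
f_\sigma(x) \;=\; x_{\sigma(x_1,\ldots,x_{s-1})}.
\]
At any input, flipping any of the $n-s$ unselected data bits leaves the output unchanged, so only the $s-1$ address bits plus the single currently-selected data bit can be sensitive; this gives $s(f_\sigma) \leq (s-1)+1 = s$. For distinctness, if $\sigma\neq\sigma'$ pick any address $a$ with $\sigma(a)\neq\sigma'(a)$, set the address bits to $a$, set $x_{\sigma(a)}=1$ and $x_{\sigma'(a)}=0$, and leave the rest arbitrary; the resulting input separates $f_\sigma$ from $f_{\sigma'}$. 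This produces $(n-s+1)^{2^{s-1}}$ distinct sensitivity-$\leq s$ functions.

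The only real obstacle is the distinctness bookkeeping within each family: the sensitivity upper bounds are immediate from the local structure of each construction. In the junta family, distinctness is ensured by insisting that $g$ depend on all $s$ variables (and a simple union bound shows this eliminates at most half of all $g$). In the addressing family it follows from the one-to-one correspondence between the maps $\sigma$ and the functions $f_\sigma$, via the explicit separating input described above.
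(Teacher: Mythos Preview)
Your proof is correct and follows essentially the same approach as the paper: the same two families ($s$-juntas depending on all $s$ variables, and addressing functions with $s-1$ address bits) with the same sensitivity and distinctness arguments. The only minor difference is in counting $s$-variable functions that depend on all $s$ variables: you use a union bound (which needs the check $2^{2^{s-1}-1}\ge s$), whereas the paper pairs each $g$ with $g\oplus\prod_i x_i$ and observes that one function in each pair has full $\F_2$-degree and hence depends on every variable.
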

\begin{Proof}
The first bound comes from considering $s$-juntas. We claim that there are at least $2^{2^s -1}$ functions on $s$
variables that depend on all $s$ variables. For any function
$f:\zo^s \rgta \zo$ on $s$ variables, either $f$ or $f' =
f\oplus \prod_{i=1}^sx_i$ is sensitive to all $s$ variables. This is
because $f \oplus f' = \prod_{i=1}^sx_i$, hence one of them has full
degree as a polynomial over $\F_2$, and hence must depend on all $n$
variables. The bound now follows by considering all subsets of
$n$ variables. 

The second bound comes from the addressing functions.
Divide the variables into $s -1$ address variables $y_1,\ldots,y_{s-1}$
and $n -s +1$ output variables $x_1,\ldots,x_{n -s + 1}$. Consider the
addressing function computed by a decision tree with nodes at the first $s-1$ levels labelled
by $y_1,\ldots,y_{s-1}$ and each leaf  labelled by some $x_i$ (the
same $x_i$ can be repeated at multiple leaves). It is easy to check that this defines a family of sensitivity $s$
functions, that all the functions in the family are distinct, and that the
cardinality is as claimed.
\end{Proof}

In the setting when $s = o(n)$, the gap between our upper and lower
bounds is roughly $2^{n^s}$ versus $n^{2^s}$. The setting where $s =O(\log(n))$
is particularly intriguing.

\begin{Prob}
\label{prob:1}
Is $|\calF(2\log(n), n)| = 2^{n^{\omega(1)}}$?
\end{Prob}

\eat{
We now turn our attention to smaller values $s \leq n/4$.  For general values of $s \leq n/4$ we do not have a better upper bound than Corollary \ref{cor:counting} (though we do have a better upper bound when $s$ is very small, see Claim \ref{claim:small-s}).  We can significantly improve the lower bound of Lemma \ref{lem:count-lower}, though, by considering \emph{decision trees} instead of juntas.

A decision tree is said to be \emph{proper} if no variable occurs more than once on any root-to-leaf path. 
Throughout the following discussion when we write ``decision tree'' we always mean ``proper decision tree.''

A variable $x_i$ is said to be a \emph{bottleneck} in a  decision tree $T$ if (i) $x_i$ does not occur at the root of $T$, and (ii) $x_i$ occurs on every root-to-leaf path in $T$.

Given a decision tree $T$ and a node $v$ in $T$, let $T_v$ denote the sub-decision tree that is rooted at $v$.

\begin{Def}  \label{def:RBFtree}
A decision tree $T$ is said to be \emph{bottleneck-free} if it has no bottleneck.  $T$ is said to be
\emph{recursively bottleneck-free} if $T_v$ has no bottleneck for every node $v$ in $T$.
\end{Def}

We define a distribution ${\cal D}_{s,\{x_1,\dots,x_n\}}$ over recursively bottleneck-free depth-$s$ decision trees over variable set $\{x_1,\dots,x_n\}$ as follows.

\begin{enumerate}

\item $s=1$:  ${\cal D}_{1,\{x_1,\dots,x_n\}}$ is uniform over the $2n$ depth-1 decision trees $x_1,\dots,
x_n, \overline{x}_1,\dots,\overline{x}_n.$

\item $s>1:$  a draw of a tree $T$ from ${\cal D}_{s,\{x_1,\dots,x_n\}}$ is obtained as follows:

\begin{enumerate}  

\item Draw a uniform $i \in [n]$ and set $x_i$ to be the root of $T$.
\item Draw an element $T_L$ from ${\cal D}_{s-1,\{x_1,\dots,x_n\} \setminus \{x_i\}}$ and set it to be the left
child of $x_i$.  Let $x_j$ be the root of $T_L.$
\item Draw an element $T_R$ from ${\cal D}_{s-1,\{x_1,\dots,x_n \}\setminus \{x_i\}}$ conditioned on its root not being $x_j$.  Set $T_R$ to be the right child of $x_i$.

\end{enumerate}
\end{enumerate}

An easy inductive argument establishes the following:

\begin{itemize}

\item every tree in the support of ${\cal D}_{s,\{x_1,\dots,x_n\}}$ is a complete binary tree of depth $s$ (containing $2^{s-1}$ leaf variables), and hence computes a function of sensitivity at most $s$;

\item every tree in the support of ${\cal D}_{s,\{x_1,\dots,x_n\}}$ is recursively bottleneck-free.

\end{itemize}

Let $N_{s,n}$ denote the support size of ${\cal D}_{s,\{x_1,\dots,x_n\}}$.  We have $N_{1,n}=2n$ and
\begin{equation} \label{eq:rec}
N_{s,n} = n \cdot N_{s-1,n-1} \cdot \left( {\frac {n-2}{n-1}} N_{s-1,n-1}\right) = {\frac {n (n-2)} {n-1}} (N_{s-1,n-1})^2,
\end{equation}
where the ${\frac {n-2}{n-1}}$ factor holds because precisely a ${\frac 1 {n-1}}$ fraction of the support of ${\cal D}_{s-1,\{x_1,\dots,x_n\}\setminus \{x_i\}}$ is on trees that have $x_j$ as the root.  Unrolling (\ref{eq:rec}) we get 
that for $s \geq 2$,
\begin{equation} \label{eq:num}
N_{s,n} = 2^{2^{s-1}}n \cdot (n-1) \cdot \left(\prod_{i=2}^{s-1} (n-i)^{3 \cdot 2^{i-2}}\right) \cdot (n-s)^{2^{s-2}}.
\end{equation}

For $s \leq n/4$ we have that 
\[
N_{s,n} \geq 2^{2^{s-1}} \cdot (3n/4)^{2^{s}-1} = 2^{\Omega(2^s)} \cdot n^{2^{s}-1}
\]

Observe that ${\cal D}_{s,\{x_1,\dots,x_n\}}$ is a distribution over syntactic objects (decision trees), so it is
a priori possible that two trees in the support of ${\cal D}_{s,\{x_1,\dots,x_n\}}$ compute the same Boolean function.  We conjecture that this cannot happen, i.e. that any two distinct trees $T_1,T_2 \in \supp({\cal D}_{s,\{x_1,\dots,x_n\}})$ in fact must compute distinct Boolean functions; if true, this would show that $|{\cal F}_{s,n}| \geq 2^{\Omega(2^s)} \cdot n^{2^{s}-1}$.  Absent a proof of this conjecture\rnote{Do you see an easy proof of this?  The obvious approach is induction on $s\dots$ the case in which $T_1,T_2$ have the same root is immediate by the inductive hypothesis, but the other case isn't so clear to me.}, we can prove the following slightly weaker lower bound:

\begin{Lem} \label{claim:DT-lower}
The decision trees in $\supp({\cal D}_{s,\{x_1,\dots,x_n\}})$ compute at least $(2(n-s+1))^{2^{s-1}}$ distinct functions,
and hence for $s \leq n/4$ we have $|{\cal F}_{s,n}| \geq 2^{\Omega(2^s)} \cdot n^{2^{s-1}}$.
\end{Lem}

\begin{proof}
Consider the draw of a decision tree from ${\cal D}_{s,\{x_1,\dots,x_n\}}$.  Fix any outcome of steps 2(a),
2(b) and 2(c) in the recursive construction for $s,s-1,\dots,2$, so the only remaining randomness is over the 
$2^{s-1}$ calls to step 1 to finish the draw and fill in the $2^{s-1}$ leaf variables.  Note that each of these calls at a leaf $\ell$ makes a draw from ${\cal D}_{1,\{x_1,\dots,x_n\} \setminus A_\ell}$ where $A_\ell$ is some
$(s-1)$-element subset of $\{x_1,\dots,x_n\}$.  Let ${\cal D}$ denote a draw from this remaining randomness, so the outcome of a draw from ${\cal D}$ is a decision tree in $\supp({\cal D}_{s,\{x_1,\dots,x_n\}})$.  It is easy to see that
$|\supp({\cal D})|=(2(n-s+1))^{2^{s-1}}$, and we claim that any two trees in $\supp({\cal D})$ compute distinct Boolean functions.  For if $T,T'$ are distinct elements of $\supp({\cal D})$, then they must differ at some leaf $\ell$,
and the functions they compute must therefore be different on the subcube corresponding to those inputs that reach
leaf $\ell.$
\end{proof}

Finally, for very small $s$ we have upper and lower bounds on $|{\cal F}(s,n)|$ which are not too far apart.  
\begin{Claim} \label{claim:small-s}
For $s$ such that $s \cdot 2^{2s-1} \leq n$, we have
\[
 n^{2^s-1}\cdot 2^{2^{s-1}-1} \leq {\cal F}(s,n) \leq n^{s \cdot 2^{2s-1}} \cdot 2^{2^{O(s^2)}}.
\]
\end{Claim}
\begin{proof}
For the upper bound, recall that by Simon's bound, we know that every Boolean function $f$ with $s(f)=s$ satisfies
$\dim(f) \leq s\cdot2^{2s-1}$.  There are thus at most ${n \choose \leq s \cdot 2^{2s-1}}$ ways to choose 
the relevant variables of $f$, so combining this with Corollary \ref{cor:counting} gives that there are at
most 
\[
{n \choose \leq s \cdot 2^{2s-1}} \cdot 2^{\dim(f) \choose \leq 2s} \leq {n \choose \leq s \cdot 2^{2s-1}} \cdot 2^{s \cdot 2^{2s-1} \choose \leq 2s} \leq n^{s \cdot 2^{2s-1}} \cdot 2^{2^{O(s^2)}}.
\]
possibilities for $f$.

For the lower bound, let us say that a \emph{read-once decision tree} is a decision tree over $x_1,\dots,x_n$
in which each variable $x_i$ occurs at most once, and where each leaf node is labeled with either a variable $x_i$ or a negated variable $\overline{x_i}$.  Let ${\cal T}(s,n)$ denote the set of all syntactically distinct full
read-once decision trees of depth $s$ over $x_1,\dots,x_n$ (so each tree in ${\cal T}(s,n)$ has precisely $2^{s-1}$ leaf variables and contains precisely $2^{s}-1$ variables in total, each occuring once). It is easy to see that 
\[
|{\cal T}(s,n)| = 2^{2^{s-1}} \cdot \prod_{i=0}^{2^s-2} (n - i),
\]
and a straightforward induction on $s$ shows that every two trees in ${\cal T}(s,n)$ compute distinct Boolean
functions, so $|{\cal F}(s,n)| \geq |{\cal T}(s,n)|$.  We lower bound $|{\cal T}(s,n)|$ by
\begin{eqnarray*}
|{\cal T}(s,n)| &\geq& 2^{2^{s-1}}(n - (2^s - 2))^{2^s-1}\\
&\geq& 2^{2^{s-1}} n^{2^s-1} \left(1 - {\frac {2^s - 2}{n}}\right)^{2^s-1}\\
&\geq& 2^{2^{s-1}} n^{2^s-1} \left(1 - {\frac {(2^s - 2)(2^s-1)}{n}}\right) \geq 2^{2^{s-1}-1} n^{2^s-1}
\end{eqnarray*}
where we have used the fact that $s \cdot 2^{2s-1} \leq n$.
\end{proof}

}


\subsection{Noise Stability}

We start by showing that functions with small sensitivity satisfy a
strong noise-stability guarantee. 

For a point $x \in \zo^n$ and $\delta \in [0,1]$, let $\mathrm{N}_{1-2\delta}(x)$ denote the $\delta$-noisy version of $x$,
i.e. a draw of $y \sim \mathrm{N}_{1-2\delta}(x)$ is obtained by independently
setting each bit $y_i$ to be $x_i$ with probability $1-2\delta$ and uniformly random
with probability $2\delta$. 
The noise sensitivity of $f$ at $x$ at noise rate $\delta$, denoted $\nsf(x)$, is defined as 
\[ \nsf(x) = \Pr_{y \sim \mathrm{N}_{1-2\delta}(x)}[f(x) \neq f(y)].\] 
The noise sensitivity of $f$ at noise rate $\delta$,
denoted $\nsf$, is then defined as
\[ \nsf = \Ex_{x \sim \zo^n}[\nsf(x)] = \Pr_{x \sim \zo^n, y \sim \mathrm{N}_{1-2\delta}(x)}[f(x) \neq f(y)].\] 

The next lemma shows that low-sensitivity functions are noise-stable at every point
$x \in \zo^n$:
\begin{Lem}
\label{lem:noise-stab}
Let $f:\zo^n \to \zo$ have sensitivity $s$. For every $x \in \zo^n$ and $0 \leq \delta \leq 1/2$, we have
$\nsf(x) < 2\delta s$.
\end{Lem}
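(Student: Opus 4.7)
My plan is to bound $\nsf(x)$ via a differential inequality in the noise rate rather than by a direct hybrid argument. Parametrize by $t \in [0, 1/2]$ and set $\phi(t) := \Pr_{y \sim \mathrm{N}_{1-2t}(x)}[f(y) \neq f(x)]$, so $\phi(0) = 0$ and $\nsf(x) = \phi(\delta)$. The goal is to show $|\phi'(t)| \leq s$ for every $t$, whence integration yields $\phi(\delta) \leq s\delta < 2\delta s$ (slightly sharper than the stated bound).

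Writing $y = x \oplus Z$ where $Z$ has independent coordinates each equal to $1$ with probability $t$, and letting $A = \{Z : f(x \oplus Z) \neq f(x)\}$, direct differentiation of
$\Pr[Z \in A] = \sum_Z \ind_A(Z) \prod_i (t Z_i + (1-t)(1-Z_i))$
gives (this is a product-measure form of the Margulis--Russo identity, requiring no monotonicity)
\[
\phi'(t) \;=\; \sum_{i=1}^n \E_{Z_{-i}}\bigl[\ind_A(Z_{-i}, 1) - \ind_A(Z_{-i}, 0)\bigr].
\]
Each bracket takes values in $\{-1, 0, +1\}$, and its absolute value equals $\ind[f(x \oplus (Z_{-i}, 1)) \neq f(x \oplus (Z_{-i}, 0))]$, which is exactly $\ind[i \in \mathrm{Sens}(f, x \oplus Z)]$ since the two points differ only in coordinate $i$. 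Crucially, whether coordinate $i$ is sensitive at a point does not depend on that point's $i$-th coordinate, so the indicator depends on $Z$ only through $Z_{-i}$, and the expectation over $Z_{-i}$ equals the expectation over all of $Z$.

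Combining the triangle inequality with the defining identity $\sum_i \ind[i \in \mathrm{Sens}(f, y)] = s(f, y) \leq s$,
\[
|\phi'(t)| \;\leq\; \sum_i \Pr_{y \sim \mathrm{N}_{1-2t}(x)}[i \in \mathrm{Sens}(f, y)] \;=\; \E_y[s(f, y)] \;\leq\; s,
\]
and integrating from $0$ to $\delta$ yields the claim. The main step that needs care in the writeup is the clean derivation of the Margulis--Russo identity in this non-monotone product setting and the identification of each $\pm 1$-valued bracket with the sensitivity indicator at $y$; once those are in place, the rest is immediate.
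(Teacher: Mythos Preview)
Your proof is correct and in fact yields the sharper bound $\nsf(x) \leq s\delta$. The approach is genuinely different from the paper's. The paper conditions on the number $t$ of flipped coordinates, bounds $\Pr[f(x) \neq f(y) \mid |T| = t] \leq st/(n-t+1)$ by a union bound over the steps of a random walk (each step flips a fresh coordinate, and at most $s$ of the currently available coordinates are sensitive), and then averages over $t \sim \Bin(n,\delta)$ via an explicit binomial identity to obtain $\nsf(x) \leq \frac{s\delta}{1-\delta}(1-\delta^n)$. You instead differentiate in the noise rate and use a Margulis--Russo-type identity to identify each summand in $\phi'(t)$ with a sensitivity indicator, giving $|\phi'(t)| \leq \E_y[s(f,y)] \leq s$ directly. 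The paper's route is more elementary (union bound plus a closed-form sum) and produces as a byproduct the fixed-Hamming-distance bound $st/(n-t+1)$, whose downward-walk analogue is exactly what is reused in Lemma~\ref{lem:down}; your route is analytically cleaner, gives a slightly better constant (since $\frac{1-\delta^n}{1-\delta} = 1+\delta+\cdots+\delta^{n-1} \geq 1$), and makes transparent that the bound is really controlled by the \emph{average} sensitivity $\E_y[s(f,y)]$ under the noise, not merely the maximum $s$.
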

\begin{Proof}
Let $t \in [n]$. Consider a random process that starts at $x$ and then flips a uniformly
random subset $T \subseteq [n]$ of coordinates of cardinality $t$,
which takes it from $x$ to $y \in \zo^n$. We claim that  
$\Pr_{T}[f(x) \neq f(y)] \leq \frac{st}{n-t+1}.$
To see this, we can view going from $x$ to $y$ as a walk where at
each step, we pick the next coordinate to walk along uniformly from the set
of coordinates that have not been selected so far. Let $x =x_0, x_1,
\ldots, x_t = y$ denote the sequence of vertices visited by this walk. 
At $x_i$, we choose the next coordinate to flip uniformly from a set
of size $n - i$. Since $x_i$ has at most $s$ sensitive
coordinates, we have
$\Pr[f(x_i) \neq f(x_{i+1})] \leq \frac{s}{n - i}.$
Hence by a union bound we get
\begin{align*} 
\Pr[f(x_0) \neq f(x_t)] \leq \sum_{i=0}^{t-1}  \Pr[f(x_i) \neq
  f(x_{i+1})] \leq \sum_{i=0}^{t-1}\frac{s}{n - i} \leq \frac{st}{n-t+1}
\end{align*}
as claimed.

Now we turn to noise sensitivity.
We can view a draw of $y \sim \mathrm{N}_{1-2\delta}(x)$ as first choosing the number $t$ of
coordinates of $x$ to flip according to the binomial distribution
$\Bin(n,\delta)$, and then flipping a random set $T \subseteq [n]$ of size $t$.
From above, we have
$\Pr[f(y) \neq f(x)\ | \ |T| = t] \leq{\frac{st}{n-t+1}}.$
Hence
\begin{align*}
\Pr[f(x) \neq f(y)] & \leq \Ex_{t\sim \Bin(n,\delta)}\left[{\frac{st}{n-t+1}}\right]
 = s \sum_{t=1}^n \delta^t (1-\delta)^{n-t} {n \choose t} \cdot {\frac t {n-t+1}}\\
&= s \sum_{t=1}^n \delta^t (1-\delta)^{n-t} {n \choose t-1}\\
& = {\frac {s \delta}{1-\delta}} \sum_{t'=0}^{n-1} \delta^{t'}(1-\delta)^{n-t'}{n \choose t'}\\
& = {\frac {s \delta}{1-\delta}} (1-\delta^n) 
\end{align*}
which is less than $2 \delta s$ for $\delta \leq 1/2.$
\end{Proof}

\ignore{
\begin{Cor}
For any sensitivity $s$ function $f$, $\nsf \leq 3s\delta$. 
\end{Cor}
}
We can restrict the noise distribution 
and get similar bounds. 
The setting that we now describe, where we only allow walks in the lower shadow of a vertex, will
be useful later when we construct shallow formulas for a low sensitivity function $f$. 

Let  $D(x,t)$ denote the points in the lower shadow of $x$ at
distance $t$ from it (so a point in $D(x,t)$  is obtained by flipping
exactly $t$ of the bits of $x$ from 1 to 0). We show that a random
point in $D(x,t)$ is likely to agree with $x$ (for $t \leq \wt(x)/2s$).  

\begin{Lem}
\label{lem:down}
Let $\wt(x) = d \geq s$. Then if $s(f) \leq s$, we have
$\Pr_{y \in D(x,t)}[f(x)   \neq f(y)] \leq \frac{st}{d-t}.$
\end{Lem}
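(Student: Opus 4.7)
The plan is to mimic the proof of Lemma \ref{lem:noise-stab}, but restricted to walks that only flip $1$-coordinates to $0$. Concretely, I will generate a uniformly random $y \in D(x,t)$ by a sequential process: pick a uniformly random permutation of the $d$ one-coordinates of $x$, and flip the first $t$ of them one at a time. Let $x = x_0, x_1, \ldots, x_t = y$ be the resulting path, so each $x_{i+1}$ is obtained from $x_i$ by flipping one uniformly chosen $1$-coordinate of $x_i$ to $0$. By symmetry the endpoint $x_t$ is uniform on $D(x,t)$.

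The key step is to bound $\Pr[f(x_i) \neq f(x_{i+1})]$ at each stage. Since $\wt(x_i) = d - i$, the flip at step $i$ is uniform over the $d-i$ ones of $x_i$. At most $s$ of these can be sensitive (by the hypothesis $s(f) \le s$), so
\[
\Pr[f(x_i) \neq f(x_{i+1})] \leq \frac{s}{d-i}.
\]
A union bound then yields
\[
\Pr[f(x) \neq f(y)] \leq \sum_{i=0}^{t-1} \frac{s}{d-i} \leq \frac{st}{d-t+1} \leq \frac{st}{d-t},
\]
which is the desired inequality. (The assumption $d \geq s$ ensures that the denominators are positive at least throughout the meaningful range $t < d$; if $t \geq d$ the statement is vacuous or one should interpret the bound accordingly.)

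There is no real obstacle here; the one small thing to double-check is that the sequential flipping process indeed produces a uniform sample from $D(x,t)$, which follows because the set of flipped coordinates is determined by the first $t$ elements of a uniformly random permutation of the $d$ one-coordinates. The argument does not use any property of $f$ beyond the pointwise bound $s(f,x_i) \le s$, so it goes through verbatim in the restricted downward setting.
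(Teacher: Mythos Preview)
Your proof is correct and essentially identical to the paper's: both generate a uniform $y \in D(x,t)$ via a downward random walk, bound each step by $s/(d-i)$, and take a union bound. You are even slightly more careful in first obtaining $st/(d-t+1)$ before relaxing to $st/(d-t)$.
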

\begin{Proof}
We consider a family of random walks that we call \emph{downward walks}.  In such a walk,
at each step we pick a random index that is currently $1$ and set it to $0$. 
Consider a downward walk of length $t$ and let $x =x_0, x_1, \ldots,
x_t = y$ denote the sequence of vertices that are visited by the walk. We claim that 
$
\Pr[f(x_i) \neq f(x_{i+1})] \leq \frac{s}{d-i}. 
$
To see this observe that out of the $d- i$ possible $1$ indices in
$x_i$ that could be flipped to $0$, at most $s$ are sensitive. Hence
we have
\[ \Pr[f(x_0) \neq f(x_t)] \leq \sum_{i=0}^{t-1} \Pr[f(x_i) \neq f(x_{i+1})] 
\leq \sum_{i=0}^{t-1} \frac{s}{d-i} \leq \frac{st}{d-t} \]
Since $y=x_t$ is a random point
 in $D(x,t)$,  the proof is complete. 
\end{Proof}

\begin{Cor}
\label{cor:down}
Let $\wt(x) =d$ and $t \leq d/(10s+1)$. Then 
$ \Pr_{y \in D(x,t)}[f(y) \neq f(x)] \leq 1/10.$
\end{Cor}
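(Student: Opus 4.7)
The plan is to derive the corollary as a direct numerical consequence of Lemma~\ref{lem:down}. That lemma already provides the bound $\Pr_{y \in D(x,t)}[f(y) \neq f(x)] \leq \frac{st}{d-t}$ (noting that its hypothesis $d \geq s$ is automatically satisfied here, since $t \geq 1$ together with $t \leq d/(10s+1)$ forces $d \geq 10s+1 > s$). So the only real work is to verify that our assumption $t \leq d/(10s+1)$ makes this upper bound at most $1/10$.

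First I would unpack the hypothesis: $t \leq d/(10s+1)$ is equivalent to $(10s+1)t \leq d$, which rearranges to $10st \leq d - t$, i.e. $\frac{st}{d-t} \leq \frac{1}{10}$. Plugging this into the conclusion of Lemma~\ref{lem:down} immediately yields
\[\Pr_{y \in D(x,t)}[f(y) \neq f(x)] \;\leq\; \frac{st}{d-t} \;\leq\; \frac{1}{10},\]
which is exactly the desired bound.

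There is essentially no obstacle here; the corollary is just the statement that the explicit error bound $st/(d-t)$ from Lemma~\ref{lem:down} is at most $1/10$ once $t$ is small enough relative to $d/s$. The only thing worth a moment's care is confirming that $d > t$ (so the bound is meaningful and the denominator is positive), which is immediate from $t \leq d/(10s+1) \leq d/11$.
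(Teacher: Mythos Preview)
Your proof is correct and matches the paper's intended argument: the corollary is stated without proof, meant to follow immediately from Lemma~\ref{lem:down} via exactly the arithmetic you carry out (rewriting $t \le d/(10s+1)$ as $10st \le d-t$). Your extra care in verifying the hypothesis $d \ge s$ of Lemma~\ref{lem:down} and the positivity of $d-t$ is appropriate and correct.
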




\subsection{Bias and Interpolation} \label{sec:bias}

It is known that low sensitivity functions cannot be highly biased. For $f:\zon \to \zo$, let 
\begin{gather*}
\mu_0(f)  = \Pr_{x \in \zon}[f(x) = 0], \ \mu_1(f)  = \Pr_{x \in \zon}[f(x) = 1],\\
\mu(f)  = \min(\mu_0(f), \mu_1(f))
\end{gather*}

\begin{Lem} \label{lem:bias}
For $f:\zon \to \zo$ we have
\begin{align*}
s_0(f) & \geq \log_2\left(\fr{\mu_0(f)}\right) \ \text{if} \ \mu_0(f) > 0\\
s_1(f) & \geq \log_2\left(\fr{\mu_1(f)}\right) \ \text{if} \ \mu_1(f) > 0.
\end{align*}
Equality holds iff the set $f^{-1}(b)$ is a subcube.
\end{Lem}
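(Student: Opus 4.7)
The plan is to reduce this statement to the edge-isoperimetric inequality on the Boolean hypercube (Harper's theorem). Let $A = f^{-1}(0)$, so $|A| = 2^n \mu_0(f)$, and let $\partial A$ denote the edge boundary, i.e. the set of edges of the hypercube with one endpoint in $A$ and the other in its complement. I will prove the bound for $s_0(f)$; the bound for $s_1(f)$ follows by applying the same argument to $1-f$.

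The first step is to give an \emph{upper} bound on $|\partial A|$ in terms of $s_0(f)$. By definition of $s_0(f)$, every $x \in A = f^{-1}(0)$ has at most $s_0(f)$ Hamming neighbors $y$ with $f(y) = 1$, i.e. at most $s_0(f)$ edges of $\partial A$ incident to it. Counting edges from the $A$-side yields
\[
|\partial A| \;\leq\; s_0(f) \cdot |A|.
\]

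The second step is to give a \emph{lower} bound on $|\partial A|$ in terms of $|A|$. Harper's edge-isoperimetric inequality asserts that for every nonempty $A \subseteq \{0,1\}^n$,
\[
|\partial A| \;\geq\; |A| \cdot \log_2\!\left(\frac{2^n}{|A|}\right) \;=\; |A| \cdot \log_2\!\left(\frac{1}{\mu_0(f)}\right),
\]
with equality (when $|A|$ is a power of $2$) iff $A$ is a subcube of $\{0,1\}^n$. Combining the two bounds and cancelling $|A| > 0$ gives $s_0(f) \geq \log_2(1/\mu_0(f))$, as desired.

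For the equality condition, note that $s_0(f) = \log_2(1/\mu_0(f))$ forces both (i) every point in $A$ to have \emph{exactly} $s_0(f)$ sensitive neighbors (so that the upper bound on $|\partial A|$ is tight), and (ii) equality in Harper's inequality, which forces $A$ to be a subcube. Conversely, if $A = f^{-1}(0)$ is a $(n-k)$-dimensional subcube, then $\mu_0(f) = 2^{-k}$ and every point of $A$ has exactly $k$ neighbors outside $A$, so $s_0(f) = k = \log_2(1/\mu_0(f))$. The analogous statement for $s_1(f)$ and $\mu_1(f)$ is identical. The only non-obvious ingredient is the edge-isoperimetric inequality, which I would cite rather than reprove; if a self-contained proof is preferred, a short induction on $n$ (splitting $A$ according to the last coordinate and using $x\log_2 x$ convexity) suffices.
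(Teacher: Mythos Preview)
Your argument is correct. You deduce the inequality from the edge-isoperimetric inequality on the hypercube: the upper bound $|\partial A|\le s_0(f)\,|A|$ is immediate from the definition of $s_0(f)$, and the lower bound $|\partial A|\ge |A|\log_2(2^n/|A|)$ is the classical inequality; combining them gives the claim, and the equality case is handled by the characterization of extremizers (once you note that $s_0(f)\in\Z$ forces $|A|$ to be a power of $2$).

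This is a genuinely different route from the paper's own proof. The paper in fact remarks in the main text that the bounds follow from the isoperimetric inequality (which even gives the stronger average-sensitivity statement $\Ex_{x\in f^{-1}(b)}[s(f,x)]\ge\log_2(1/\mu_b(f))$), but then chooses to present in the appendix a direct, self-contained induction on $n$: one restricts $f$ on a coordinate $x_i$, observes that either some restriction $f_{i,b}$ has $0<\mu_1(f_{i,b})\le\mu_1(f)$ (and applies induction), or else $f^{-1}(1)$ lies entirely in one half-cube, in which case every $1$-input is sensitive to $x_i$ and one gains one unit of sensitivity before invoking the inductive bound on $f_{i,b}$. The equality analysis there tracks when strict inequality can occur in this case split. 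Your approach is shorter and immediately yields the stronger average statement, at the cost of citing the isoperimetric inequality and its equality characterization; the paper's approach is elementary and fully self-contained, which is the reason they state for including it. Your closing remark that one can prove the isoperimetric bound by a short convexity-based induction is apt, and indeed that induction is close in spirit to what the paper does.
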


We note that these bounds are implied by the classical isoperimetric
inequality, which in fact shows that $\E_{x \in f^{-1}(b)}[s(f,x)] \geq
\log(1/\mu_b(f))$ for $b=0,1$. We present a simple inductive proof of
the max-sensitivity bounds given by Lemma \ref{lem:bias} in the appendix.

We say that a set $K \subseteq \zo^n$ \emph{hits} a set of functions $\calF$
if for every $f \in \calF$, there exists $x \in K$ such that $f(x)
\neq 0$. We say that $K$ \emph{interpolates} $\calF$ if for every $f_1 \neq
f_2 \in \calF$, there exists $x \in K$ such that $f_1(x) \neq f_2(x)$.

\begin{Cor}
\label{cor:interpolate}
Let $k \geq C2^{2s}{n\choose \leq 4s}$, and let $S$ be a random subset of $\{0,1\}^n$ obtained by taking $k$
points drawn uniformly from $\{0,1\}^n$ with replacement.
The set $S$ interpolates $\calF(s,n)$ with probability $1 -\exp(-{n \choose \leq 4s})$ (over the choice of $S$) .
\end{Cor}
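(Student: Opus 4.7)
The plan is to reduce the interpolation problem to a hitting question for the set of XOR differences. Let
\[ \calG := \{f_1 \oplus f_2 : f_1, f_2 \in \calF(s,n),\ f_1 \neq f_2\}. \]
Then $S$ interpolates $\calF(s,n)$ if and only if $S \cap g^{-1}(1) \neq \emptyset$ for every $g \in \calG$, since $f_1(x) \neq f_2(x)$ is equivalent to $(f_1 \oplus f_2)(x) = 1$. So failure of interpolation means there is some $g \in \calG$ that $S$ misses on its $1$-inputs, and the goal is to apply a union bound over $\calG$.

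Next I would control both $|\calG|$ and a uniform lower bound on $\mu_1(g)$ for $g \in \calG$. For the first, a pointwise inspection shows that the sensitive coordinates of $f_1 \oplus f_2$ at any $x$ are contained in the union of those of $f_1$ and $f_2$, so $s(f_1 \oplus f_2) \leq s(f_1) + s(f_2) \leq 2s$. Hence $\calG \subseteq \calF(2s, n)$, and Corollary~\ref{cor:counting} applied with sensitivity parameter $2s$ gives $|\calG| \leq 2^{{n \choose \leq 4s}}$. For the second, every $g \in \calG$ is nonzero, so $\mu_1(g) > 0$, and Lemma~\ref{lem:bias} yields $\mu_1(g) \geq 2^{-s_1(g)} \geq 2^{-2s}$.

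Finally I would close with a union bound. For any fixed $g \in \calG$, each of the $k$ independent uniform samples lies in $g^{-1}(0)$ with probability at most $1 - 2^{-2s}$, so
\[ \Pr\!\left[S \cap g^{-1}(1) = \emptyset\right] \leq (1 - 2^{-2s})^k \leq \exp(-k \cdot 2^{-2s}). \]
Summing over $\calG$, the probability that $S$ fails to interpolate is at most $2^{{n \choose \leq 4s}} \exp(-k \cdot 2^{-2s})$. Taking $C$ sufficiently large (e.g.\ $C \geq \ln 2 + 1$) in the hypothesis $k \geq C \cdot 2^{2s} {n \choose \leq 4s}$ makes this at most $\exp(-{n \choose \leq 4s})$, as required.

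The proof is essentially a routine union bound once the earlier structural lemmas are in place, and I do not anticipate a real obstacle; the only point worth checking carefully is the subadditivity of sensitivity under XOR, which follows directly from the definition, and the fact that the counting bound from Corollary~\ref{cor:counting} needs to be applied with parameter $2s$ (producing the ${n \choose \leq 4s}$ in the final bound, consistent with the corollary statement).
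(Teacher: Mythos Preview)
Your proposal is correct and follows essentially the same approach as the paper: reduce interpolation of $\calF(s,n)$ to hitting the XOR-differences, observe that $f_1 \oplus f_2 \in \calF(2s,n)$, and then combine the bias lower bound from Lemma~\ref{lem:bias} with the counting upper bound from Corollary~\ref{cor:counting} (at parameter $2s$) in a union bound. The paper phrases it slightly more abstractly by first proving a general hitting statement for $\calF(t,n)$ and then specializing to $t=2s$, but the argument is the same.
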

\begin{Proof}
We first show that large sets hit  $\calF(t,n)$ with very
high probability.  Fix $f \in \calF(t,n)$. Since we have $\mu_1(f) \geq 2^{-t}$ by Lemma \ref{lem:bias}, the
probability that $k$ random points all miss $f^{-1}(1)$ is bounded by
$(1 -2^{-t})^k \leq \exp(-k/2^t).$
By Corollary \ref{cor:counting} we have $ \calF(t,n) \leq 2^{n \choose \leq 2t}$, so
by the union bound, the probability that $S$ does not hit every function in
this set is at most 
$2^{n \choose \leq 2t}\exp(-k/2^t)$,
which is $\exp(-{n \choose \leq 2t})$ for 
$k \geq C2^t{n \choose \leq 2t}.$

Next, we claim that if $S$ hits $\calF(2s,n)$ then it interpolates $\calF(s,n)$. Given functions $f_1, f_2 \in
\calF(s,n)$, let $g = f_1 \oplus f_2$. It is easy to see that $g \in
\calF(2s,n)$. and that $g^{-1}(1)$ is the set of points $x$ where $f_1(x) \neq
f_2(x)$, so indeed if $S$ hits $\calF(2s,n)$ then it interpolates $\calF(s,n)$.  Given this,  the corollary follows from our lower bound on $k$,\ignore{Equation \eqref{eq:bound-k}, } taking $t=2s.$
\end{Proof}

\ignore{
\pnote{Keep this?}
\blue{ Since Corollary \ref{cor:interpolate} is analogous to interpolation for
low-degree polynomials, it is natural to ask for an efficient
algorithm that recovers any low-sensitivity function from a random
subset (in other words, an algorithm for exactly learning a
low-sensitivity function from  uniform random labeled examples).  }
}

\section{Efficient algorithms for computing low sensitivity functions}

\subsection{Small circuits}

In this subsection, we will prove Theorem \ref{thm:ckt}.  Recall that
the proof of Theorem \ref{thm:ball} gives an algorithm to compute the
truth table of $f$ from an advice string which tells us the values on
some Hamming ball of radius $2s +1$. In this section we present two
algorithms which, given this advice, can (relatively) efficiently compute any entry of the truth table without
computing the truth-table in its entirety. This is equivalent to a
small circuit computing $f$. We first give a (non-uniform) ``bottom-up'' algorithm for
computing $f$ at a given input point $x \in \{0,1\}^n$. 
In the appendix we describe a ``top-down'' algorithm  with a similar performance bound.

\subsubsection{A Bottom-Up Algorithm}

The algorithm takes as advice the values of $f$ on $\cB(0^n,2s)$. It
then shifts the center of the ball along a shortest path from $0^n$ to $x$,
computing the values of $f$ on the shifted ball at each step. This
computation is made possible by a lemma showing that when we shift a
Hamming ball $\cB$ by a unit vector to get a new ball $\cB'$, points in
$\cB'$ either lie in $\cB$ or are adjacent to many points in $\cB$,
which lets us apply Lemma \ref{lem:key}.  

Let $\ind(S)$ denote the indicator of $S \subseteq [n]$ and $S \Delta T$ denote the
symmetric difference of the sets $S, T$.  For $B \subseteq \{0,1\}^n$ we write $B \oplus e_i$
to denote the pointwise shift of $B$ by the unit vector $e_i$.

\begin{Lem}
\ignore{Let $\cB(x \oplus e_i,r)$ be the shift of the Hamming
ball $\cB(x,r)$ by the unit vector $e_i$.}For any $y \in \cB(x \oplus
e_i,r)\setminus \cB(x,r)$, we have $|N(y) \cap \cB(x,r)| = r+1$.
\end{Lem}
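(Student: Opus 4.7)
The plan is to unpack the set-theoretic hypothesis into exact Hamming distance equalities, and then count neighbors by looking at which coordinate flips decrease distance to $x$.

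First I would convert $y \in \cB(x \oplus e_i, r) \setminus \cB(x, r)$ into $d(y, x \oplus e_i) \leq r$ and $d(y, x) \geq r+1$. Since $d(x, x \oplus e_i) = 1$, the triangle inequality gives $d(y,x) \leq d(y, x \oplus e_i) + 1 \leq r+1$, so in fact $d(y, x) = r+1$. Moreover, because flipping bit $i$ of $x$ changes its distance to $y$ by exactly $\pm 1$ depending on whether $y_i = x_i$ or $y_i \neq x_i$, and since $d(y, x\oplus e_i) < d(y,x)$, we must have $y_i \neq x_i$, giving $d(y, x\oplus e_i) = r$ exactly.

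Next, to compute $|N(y) \cap \cB(x,r)|$, I would look at the candidate neighbors $y' = y \oplus e_j$ of $y$. For each $j \in [n]$,
\[ d(y', x) = d(y,x) - 1 = r \text{ if } y_j \neq x_j, \qquad d(y', x) = d(y,x) + 1 = r+2 \text{ if } y_j = x_j. \]
So $y' \in \cB(x,r)$ if and only if $j$ is one of the coordinates on which $y$ and $x$ disagree. Since $d(y,x) = r+1$, there are exactly $r+1$ such coordinates, and the conclusion $|N(y) \cap \cB(x, r)| = r+1$ follows.

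There is no real obstacle here; the argument is just careful bookkeeping with Hamming distances. The only slightly subtle point is noticing that $y_i \neq x_i$ is forced (so the shift by $e_i$ genuinely brings $y$ into the ball rather than out of it), which is what allows one to pin down $d(y,x) = r+1$ exactly rather than merely bounding it, and hence get the count to be $r+1$ on the nose.
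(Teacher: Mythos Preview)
Your proof is correct and follows essentially the same approach as the paper's: both arguments establish that $d(y,x)=r+1$ exactly and then count the $r+1$ coordinates of disagreement as exactly the directions that land back in $\cB(x,r)$. The only cosmetic difference is that the paper phrases everything via index sets and symmetric differences (writing $y = x \oplus \ind(S \Delta \{i\})$ and arguing $i \notin S$, $|S|=r$), whereas you work directly with Hamming distances and the triangle inequality; the content is the same.
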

\begin{Proof} Fix any such $y$.  Since $\cB(x \oplus e_i,r) = \cB(x,r) \oplus e_i$, we have that
\begin{align*} 
y & = x' \oplus e_i \ \text{for some} \ x' \in \cB(x,r), \text{~where}\\ 
x'& = x \oplus \ind(S) \ \text{for some} \ S\subseteq [n], |S| \leq r, \text{~and hence}\\
y & = x \oplus \ind(S \Delta \{i\}). 
\end{align*}

If $i \in S$ or $|S| \leq r-1$, then $|S \Delta \{i\}| \leq r$; but this means that 
$y \in \cB(x,r)$, which is in contradiction to our assumption that $y \in \cB(x \oplus
e_i,r)\setminus \cB(x,r)$.  Hence  $i \not\in S$ and $|S| =r$. But then we have $y \oplus e_j \in \calB(x,r)$
for precisely those $j$ that belong to $S \cup \{i\}$, which gives the claim.
\end{Proof}

\begin{Cor}
\label{cor:shift}
Knowing the values of $f$ on $\cB(x,2s)$ lets us compute $f$ on $\cB(x \oplus e_i,2s)$.
\end{Cor}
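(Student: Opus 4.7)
The plan is to read the corollary as a direct consequence of the preceding lemma together with Lemma~\ref{lem:key} (the majority rule on $2s+1$ neighbors). I would partition $\mathcal{B}(x \oplus e_i, 2s)$ into two pieces: the points it shares with $\mathcal{B}(x, 2s)$, where the value of $f$ is already given to us by hypothesis, and the ``new'' points $\mathcal{B}(x \oplus e_i, 2s) \setminus \mathcal{B}(x,2s)$, whose values we must derive.

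For any new point $y$, the preceding lemma guarantees that $|N(y) \cap \mathcal{B}(x,2s)| = 2s + 1$. Since $2s+1$ neighbors of $y$ lie in $\mathcal{B}(x,2s)$, the values of $f$ on this entire neighbor set are known from the advice. Applying Lemma~\ref{lem:key} with $S = N(y) \cap \mathcal{B}(x,2s)$, which has size exactly $2s+1$, we conclude
\[
f(y) \;=\; \Maj{z \in N(y) \cap \mathcal{B}(x,2s)}{f(z)},
\]
so $f(y)$ is determined from the known values.

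Combining the two cases, every value of $f$ on $\mathcal{B}(x \oplus e_i, 2s)$ is computable from the values of $f$ on $\mathcal{B}(x, 2s)$, proving the corollary. There is no real obstacle here; the only thing to be mindful of is that the preceding lemma's hypothesis $|S| = r$ (with $r = 2s$) is what delivers exactly $r+1 = 2s+1$ neighbors in $\mathcal{B}(x,2s)$, which is precisely the threshold required to invoke Lemma~\ref{lem:key}.
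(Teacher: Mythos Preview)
Your proof is correct and is essentially identical to the paper's own argument: split $\cB(x\oplus e_i,2s)$ into points already in $\cB(x,2s)$ and new points, and for each new point invoke the preceding lemma (with $r=2s$) to get $2s+1$ known neighbors, then apply Lemma~\ref{lem:key}. The only quibble is terminological: in your final remark, the fact that $|S|=r$ is not a \emph{hypothesis} of the preceding lemma but a step in its proof; the lemma's conclusion $|N(y)\cap\cB(x,2s)|=2s+1$ is all you need here.
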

\begin{Proof}
Either $y \in \cB(x \oplus e_i,2s)$ lies in $\cB(x,2s)$ so we know
$f(y)$ already, or by the previous lemma $y$ has $2s +1$ neighbors in $\cB(x,2s)$, in which
case Lemma \ref{lem:key} gives
$f(y) = \Maj{y' \in N(y) \cap \cB(x,2s)}{f(y')}.$
\end{Proof}

Now we can give our algorithm for computing $f(x)$ at an
arbitrary input $x \in \zo^n$. 

\medskip

\myalgo{Bottom-Up}{
{\bf Advice: }The value of $f$ at all points in $\cB(0^n,2s)$.\\
{\bf Input: }$x \in \zo^n$.

\begin{enumerate}
\item Let $0^n =x_0, x_1,\ldots, x_d =x$ be a shortest
  path from $0^n$ to $x$.
\item For $i \in \{1,\ldots,d\}$ compute $f$ on
  $\cB(x_i,2s)$ using the values at points in $\cB(x_{i-1},2s)$.
\item Output $f(x_d)$.
\end{enumerate}
}

\medskip

\begin{Thm}
\label{thm:algP}
The algorithm \emph{{\bf Bottom-Up}} computes $f(x)$ for any input $x$ in time $O(sn^{2s+1})$ using space
$O(n^{2s})$.
\end{Thm}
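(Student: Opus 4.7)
My plan is to prove Theorem \ref{thm:algP} by a direct accounting of the cost of each of the three steps in the algorithm \textbf{Bottom-Up}, using Corollary \ref{cor:shift} as the key ingredient.

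First, I would note that a shortest path from $0^n$ to $x$ in the Hamming cube has length $d = \wt(x) \le n$, and can be obtained simply by flipping the $1$-coordinates of $x$ one at a time in any fixed order; Step 1 therefore takes time $O(n)$. The main work is in Step 2. By Corollary \ref{cor:shift}, once the table of values of $f$ on $\cB(x_{i-1},2s)$ is stored, each value $f(y)$ for $y \in \cB(x_i,2s)$ can be produced either by a table lookup (if $y \in \cB(x_{i-1},2s)$) or by evaluating $\Maj_{y' \in N(y) \cap \cB(x_{i-1},2s)}\{f(y')\}$ over a set of exactly $2s+1$ neighbors (otherwise). Each such majority requires $O(s)$ table lookups and arithmetic operations.

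Next I would bound the number of points touched per shift. The Hamming ball $\cB(x_i,2s)$ has size $\binom{n}{\le 2s} = O(n^{2s})$, so building the entire table of values on $\cB(x_i,2s)$ from the table on $\cB(x_{i-1},2s)$ costs time $O(s \cdot n^{2s})$. Summing over the $d \le n$ shifts along the path gives the overall time bound
\[
O(n) \cdot O(s \cdot n^{2s}) \;=\; O(s\, n^{2s+1}),
\]
which matches the statement of the theorem.

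For the space bound, the crucial observation is that to compute $f$ on $\cB(x_i,2s)$ we only need the table on $\cB(x_{i-1},2s)$, not the tables for earlier $x_j$. Hence it suffices to keep two rolling tables, each of size $O(n^{2s})$, overwriting $\cB(x_{i-1},2s)$ once $\cB(x_i,2s)$ is complete. This gives total space $O(n^{2s})$. Finally, the correctness of the output $f(x_d) = f(x)$ follows by induction on $i$: assuming the table on $\cB(x_{i-1},2s)$ is correct, Corollary \ref{cor:shift} guarantees that the recomputed table on $\cB(x_i,2s)$ is also correct, so in particular $f(x_d)$ is correct when $i = d$. The only mild subtlety — which I do not expect to be a real obstacle — is to verify that along the chosen shortest path each $x_i$ really is a unit shift of $x_{i-1}$, so that Corollary \ref{cor:shift} applies at every step; this is immediate from the construction of the path.
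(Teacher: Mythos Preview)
Your proposal is correct and follows essentially the same approach as the paper's own proof: use Corollary~\ref{cor:shift} to recompute the ball after each unit shift at cost $O(s\cdot n^{2s})$, multiply by the $d \le n$ shifts for the time bound, and observe that only the current ball's table (of size $O(n^{2s})$) need be retained for the space bound. Your write-up is in fact slightly more explicit about the rolling-table mechanism and the correctness induction than the paper's version.
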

\begin{Proof}
The values at $\cB(0^n,2s)$ are known as advice. Corollary
\ref{cor:shift} tells us how to compute the values at $\cB(x_i,2s)$ using the values on $\cB(x_{i-1},2s)$.
If we store the values at $\cB(x_{i-1},2s)$ in an array indexed by
subsets of size $2s$, the value at any point $y \in \cB(x_i,2s)$ can be
computed in time $O(s)$, by performing $2s +1$ array lookups and then
taking the majority. Thus computing the values over the entire ball takes time
$O(sn^{2s})$, and we repeat this $d \leq n$ times. Finally, at stage $i$ we only need to
store the values of $f$ on the latest shift, $\cB(x_{i-1},2s)$,  so the total space required is $O(n^{2s})$.
\end{Proof}


\subsection{Small-depth Formulas} \label{sec:formulas}

Theorem \ref{thm:algP} established that any $n$-variable sensitivity-$s$ function $f$ is computed by a 
circuit of size $O(sn^{2s+1})$, but of relatively large depth
$O(n^{2s})$.  In this section we improve this depth by showing
that \emph{shallow} circuits of essentially the same size
(equivalently, formulas of small depth) can compute low-sensitivity functions.

For $\mu  <1/2$, let $B(c,\mu)$ denote the product distribution over $y
\in \zo^c$  where $\Pr[y_i =1 ] = \mu$ for each $i \in [c].$
For constants $1/2 > \mu > \delta >0$, let $c=c(\mu,\delta) \in \Z$ be the smallest integer constant such that 
\[\Pr_{y \sim B(c,\mu)}[\Maj{i \in [c]}{y_i} = 1] \leq \delta.\]

We now present a randomized parallel algorithm for computing $f(x).$

\medskip

\myalgo{Parallel-Algorithm}{
{\bf Advice: }$f$ at all points in $\cB(0^n,10s)$.\\
{\bf Input: }$x \in \zo^n$.

Let $d = \wt(x)$, $t = \lfloor d/(10s+1)\rfloor$, $c = c(1/5,1/20)$.
\begin{enumerate}
\item If $d \leq 10s$, return $A(x) = f(x)$.
\item Else sample $y_1,\ldots,y_c$ randomly from $D(x,t)$. Recursively run
{\bf Parallel-Algorithm} to compute
  $A(y_i)$ in parallel for all $i \in [c]$.
\item Return $A(x) = \Maj{i \in [c]}{A(y_i)}$.
\end{enumerate}
}

\medskip

For brevity we use $A$ to denote the algorithm above and $A(x) \in \zo$ to 
denote the random variable which is its output on input $x$. For $d \geq
10s +1$, the random choices of $A$ in computing $A(x)$ are described
by a $c$-regular tree.  The tree's root is labeled by $x$ and its children are labeled by $y_1,\dots,y_c$;
its leaves are labeled by strings that each have Hamming weight at most
$10s$. Further, the various subtrees rooted at each level are independent of
each other.

\begin{Thm}
\label{thm:parallel}
The algorithm runs in parallel time $O(s\log n)$ using $n^{O(s)}$ processors.
For any $x \in \zo^n$, we have
$\Pr_A[A(x) \neq f(x)] \leq \frac{1}{20},$
where $\Pr_A$ denotes that the probability is over the random coin
tosses of the algorithm.
\end{Thm}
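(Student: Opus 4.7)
The plan is to handle the depth bound, processor bound, and correctness separately, with Corollary~\ref{cor:down} providing the key noise-stability estimate that drives a majority-amplification argument through induction.

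For the parallel time and processor count I would first bound the depth of the recursion tree. At an internal call with $\wt(x) = d > 10s$ the algorithm recurses on points of weight $d - \lfloor d/(10s+1) \rfloor \leq d \cdot \tfrac{10s}{10s+1} + 1$; iterating this geometric contraction, after $k = O(s \log n)$ levels the weight drops below $10s$ and the recursion bottoms out at the advice. Each node does only $O(1)$ non-recursive work (sampling $c = c(1/5,1/20)$ children and taking a majority), so the parallel time is $O(s \log n)$. Since the branching factor $c$ is an absolute constant, the tree has at most $c^{O(s\log n)} = n^{O(s)}$ leaves, which bounds the number of processors needed.

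For correctness I would prove $\Pr_A[A(x) \neq f(x)] \leq 1/20$ by induction on $\wt(x)$. The base case $\wt(x) \leq 10s$ is immediate since $A(x) = f(x)$ deterministically from the advice. For the inductive step, fix $x$ with $\wt(x) = d > 10s$ and let $y_1,\ldots,y_c$ denote the sampled children. For each $i$ a union bound gives
\[\Pr_A[A(y_i) \neq f(x)] \leq \Pr[f(y_i) \neq f(x)] + \Pr_A[A(y_i) \neq f(y_i)].\]
Corollary~\ref{cor:down} bounds the first term by $1/10$ (since $t = \lfloor d/(10s+1) \rfloor \leq d/(10s+1)$ and $y_i$ is uniform on $D(x,t)$), while the inductive hypothesis, applied pointwise at each realization of $y_i$ and then averaged over the draw of $y_i$, bounds the second term by $1/20$. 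Hence each $A(y_i)$ disagrees with $f(x)$ with probability at most $3/20 < 1/5$.

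The last step, and the one where I expect the only real subtlety, is the majority amplification: I need the indicators $\ind[A(y_i) \neq f(x)]$ to be jointly independent in order to invoke the defining property of the constant $c(1/5,1/20)$. This holds because the $y_i$ are drawn independently and the $c$ recursive subtrees use disjoint random coins, so conditional on $(y_1,\ldots,y_c)$ the $A(y_i)$ are independent, and hence unconditionally the indicators are independent as well. Each indicator is stochastically dominated by a Bernoulli$(1/5)$, so
\[\Pr_A\!\left[\Maj{i\in[c]}{A(y_i)} \neq f(x)\right] \leq \Pr_{y \sim B(c,1/5)}\!\left[\Maj{i\in[c]}{y_i} = 1\right] \leq \tfrac{1}{20}\]
by the choice of $c$. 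Since the left-hand side equals $\Pr_A[A(x) \neq f(x)]$, the induction closes and the theorem follows.
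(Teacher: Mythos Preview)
Your proposal is correct and follows essentially the same approach as the paper: induction on $\wt(x)$ for correctness (Corollary~\ref{cor:down} plus the inductive hypothesis give a per-child error below $1/5$, and the choice $c=c(1/5,1/20)$ closes the loop), together with the geometric weight decay for the $O(s\log n)$ depth and $c^{O(s\log n)}=n^{O(s)}$ processor bounds. If anything, you are slightly more careful than the paper in spelling out the independence of the $c$ indicator variables and the stochastic-domination step needed to invoke the definition of $c(1/5,1/20)$.
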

\begin{Proof}
We first prove the correctness of the algorithm by induction on $\wt(x) =d$. When $d \leq 10s$, the
claim follows trivially. Assume that the claim is true for $\wt(x)
\leq d -1$, and consider an input $x$ of weight $d$. Note that every
$y \in D(x,t)$ has $\wt(y) = d -t \leq d-1$, hence the inductive
hypothesis applies to it. For each $i \in [c],$ we independently have
\[ \Pr_{A}[A(y) \neq f(x)] \leq \Pr_{A,y_i}[A(y_i) \neq f(y_i)] + \Pr_{y_i \in D(x,t)}[f(y_i) \neq f(x)]
\leq \frac{1}{10} + \frac{1}{20} < \frac{1}{5}.\]
where the $1/10$ bound is by Corollary \ref{cor:down} and the
$1/20$ is by the inductive hypothesis. The algorithm samples $c$ independent points  $y_i \in D(x,t)$,
computes $A(y_i)$ for each of them using independent randomness, and
then returns the majority of $A(y_i)$ over those $i \in [c]$. Hence,
by our choice of $c=c(1/5,1/20),$ we have that
$\Pr_{A}[\Maj{i \in [c]}{A(y_i)} \neq f(x)] \leq \frac{1}{20}.$

To bound the running time, we observe that for $d \geq 10s +1$, 
\begin{align*} 
t = \left\lfloor \frac{d}{10s+1}\right\rfloor \geq \frac{d}{25s}, \quad \text{so} \quad 
d -t \leq d\left(1  - \frac{1}{25s}\right).
\end{align*}
But this implies that in $k = O(s\log d)$ steps, the weight reduces below
$10s +1$. The number of processors required is bounded by $c^k =  n^{O(s)}$. 
\end{Proof}

By hardwiring the random bits and the advice bits, we can conclude
that functions with low sensitivity have small-depth formulas, thus
proving Theorem \ref{thm:formulas}.

\section{Self-correction}

In this section we show that functions with low sensitivity admit self-correctors. 
Recall that for Boolean functions, $f, g:
\zo^n \to \zo$ we write $\delta(f,g)$ to denote $\Pr_{x \in \zo^n}[f(x) \neq g(x)].$

Our self-corrector is given a function $r:\zo^n \to \zo$ such that
there exists $f \in \calF(s,n)$ satisfying
$\delta(r,f) \leq 2^{-cs}$ for some constant $c > 2$ to be specified
later. By Lemma \ref{lem:bias}, it follows that any two
sensitivity $s$ functions differ in at least $2^{-2s}$ fraction of points, so
if such a function $f$ exists, it must be unique. We consider two
settings (in analogy with coding theory): in the global setting, the
self-corrector is given the truth-table of $r$ as input and is required to
produce the truth-table of $f$ as output. In the local setting, the
algorithm has black-box oracle access to $r$. It is given $x \in \zo^n$ as
input, and the desired output is $f(x)$. 

At a high level, our self-corrector relies on the fact that
small-sensitivity sets are noise-stable at noise rate $\delta  \approx
1/s$, by Lemma \ref{lem:noise-stab}, whereas small sets of density
$\mu \leq  c^{-s}$ tend to be noise sensitive. The analysis uses the
hypercontractivity of the $T_{1-2\delta}(\cdot)$ operator.

Following \cite{Ryan}, for $f: \{0,1\}^n \to \R$, we define
\[ \trf(x) = \Ex_{y \sim \mathrm{N}_{1-2\delta}(x)}[f(y)],\]
where recall that a draw of $y \sim \mathrm{N}_{1-2\delta}(x)$ is obtained by independently
setting each bit $y_i$ to be $x_i$ with probability $1-2\delta$ and uniformly random
with probability $2\delta$.
We can view $(x,y)$ where $x \sim \zo^n$ and $y \sim \mathrm{N}_{1-2\delta}(x)$ as
defining a distribution on the edges of the complete graph on the
vertex set $\zo^n$. We refer to this weighted graph as the $\delta$-{\em noisy
  hypercube}. The $(2,q)$-Hypercontractivity Theorem quantifies the
expansion of the noisy hypercube:

\begin{Thm}
\label{thm:hc}
($(2,q)$-Hypercontractivity.) 
Let $f: \zo^n \to \R$. Then 
\[ \| \trf\|_q \leq \|f\|_2 \ \ \text{for} \ \ 2 \leq  q \leq 1 + \frac{1}{(1-2\delta)^2}.\]
\end{Thm}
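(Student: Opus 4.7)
The plan is to follow the classical Bonami--Beckner route: prove the inequality first for $n=1$ (the two-point inequality), then bootstrap to arbitrary $n$ by an induction that exploits the tensor structure of the noise operator $T_{1-2\delta}$. Write $\rho = 1-2\delta$ throughout, and let $q^*(\rho) = 1 + 1/\rho^2$, so we must establish $\|T_\rho f\|_q \le \|f\|_2$ for $2 \le q \le q^*(\rho)$.

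For the inductive step, I would decompose any $f\colon \zo^n \to \R$ in its first variable as $f(x_1,x') = \tfrac{1+(-1)^{x_1}}{2} g(x') + \tfrac{1-(-1)^{x_1}}{2} h(x')$, where $g,h\colon \zo^{n-1} \to \R$. Since the noise operator tensorizes, $T_\rho f$ at a fixed $x_1$ equals a noisy average in the first coordinate of $T_\rho g$ and $T_\rho h$. Applying Minkowski's inequality in $L^q(\{0,1\}^{n-1})$ reduces bounding $\|T_\rho f\|_q^q$ to applying the two-point inequality pointwise to the ``function of $x_1$ with values $T_\rho g(x'), T_\rho h(x')$,'' which yields a bound of $(\|T_\rho g\|_q^2 + \|T_\rho h\|_q^2)^{q/2}$ averaged over $x'$. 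The inductive hypothesis applied to $g$ and $h$ then upgrades each $\|T_\rho g\|_q, \|T_\rho h\|_q$ into $\|g\|_2, \|h\|_2$, and these recombine (via Parseval on the first coordinate) to $\|f\|_2$.

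The heart of the argument is therefore the two-point inequality: for all real $a,b$,
\[
\left(\tfrac{|a+\rho b|^q + |a-\rho b|^q}{2}\right)^{1/q} \;\le\; \sqrt{a^2+b^2}.
\]
By homogeneity and sign symmetry one reduces to $a=1$, $b = t \ge 0$, so it suffices to prove
\[
\phi(t) := (1+\rho t)^q + (1-\rho t)^q \;\le\; 2(1+t^2)^{q/2} =: \psi(t).
\]
I would expand both sides as power series in $t$: only even powers appear, and comparing coefficients gives the desired inequality once one shows $\binom{q}{2k}\rho^{2k} \le \binom{q/2}{k}$ for every $k \ge 0$. The $k=0$ case is equality; the $k=1$ case rearranges to $(q-1)\rho^2 \le 1$, which is exactly the hypothesis $q \le q^*(\rho)$. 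For $k \ge 2$ I would argue by induction on $k$, using the recursion $\binom{q}{2k+2} = \binom{q}{2k}\cdot \tfrac{(q-2k)(q-2k-1)}{(2k+1)(2k+2)}$ and the analogous recursion for $\binom{q/2}{k}$, and showing that the ratio of consecutive coefficient-ratios is $\le 1$ precisely because of the $k=1$ constraint.

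The main obstacle is checking the coefficient inequality $\binom{q}{2k}\rho^{2k}\le \binom{q/2}{k}$ for $k\ge 2$; the subtlety is that $\binom{q}{2k}$ can be negative (when $2k > q$), so one must track signs and split into cases $k\le q/2$ versus $k > q/2$, observing that in the latter range the left side becomes negative (or zero) while the right side stays nonnegative, so the inequality is automatic there, while in the former range the induction on $k$ reduces everything cleanly to the base case $k=1$, which is exactly where the hypothesis $q\le 1+1/\rho^2$ is used.
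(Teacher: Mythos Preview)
First, note that the paper does not prove this theorem; it is quoted as a known result (Bonami's hypercontractive inequality) and used as a black box. So there is no paper's proof to compare against.

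Your overall strategy---establish the two-point case and then tensorize via induction using Minkowski's integral inequality---is the standard Bonami--Beckner argument, and your description of the inductive step is essentially correct.

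However, your proof of the two-point inequality contains a genuine error. You assert that for $k > q/2$ the left-hand coefficient $\binom{q}{2k}\rho^{2k}$ is ``negative (or zero) while the right side stays nonnegative.'' Both parts of this claim can fail. Take $q = 3.5$ (so $q/2 = 1.75$) and $k = 3$: then
\[
\binom{3.5}{6} \;=\; \frac{(3.5)(2.5)(1.5)(0.5)(-0.5)(-1.5)}{720} \;>\; 0,
\qquad
\binom{1.75}{3} \;=\; \frac{(1.75)(0.75)(-0.25)}{6} \;<\; 0,
\]
so the desired inequality $\binom{q}{2k}\rho^{2k} \le \binom{q/2}{k}$ reads ``positive $\le$ negative'' and fails. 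Already at $q = 3$, $k = 3$ one has $\binom{3}{6}=0$ while $\binom{3/2}{3} = -1/16$, giving $0 \le -1/16$, which is false. In general, for non-integer $q/2 > 1$ the coefficients $\binom{q/2}{k}$ eventually alternate in sign, so a term-by-term comparison in the $(2,q)$ direction cannot succeed. There is also a secondary gap: the power series for $(1+t^2)^{q/2}$ has radius of convergence~$1$, so even a valid coefficient bound would only cover $|b| \le |a|$, and you have not argued the case $|b| > |a|$.

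The standard repair is to dualize. Since $T_\rho$ is self-adjoint, $\|T_\rho\|_{2\to q} = \|T_\rho\|_{q'\to 2}$ with $q' = q/(q-1) \in (1,2]$, and the hypothesis $q \le 1+1/\rho^2$ becomes $\rho^2 \le q'-1$. The dual two-point inequality (after your reductions) reads
\[
(1+\rho^2 t^2)^{q'/2} \;\le\; \tfrac12\bigl((1+t)^{q'}+(1-t)^{q'}\bigr)
\qquad\text{for } t \in [0,1];
\]
here the case $|b| > |a|$ genuinely reduces to $|b| \le |a|$, because $s^2+\rho^2 \le 1+\rho^2 s^2$ whenever $s,\rho \in [0,1]$. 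And now the coefficient comparison \emph{does} work: for $q' \in [1,2]$ one checks that $\binom{q'}{2k} \ge 0$ for every $k$, whereas $\binom{q'/2}{k}(q'-1)^k \le 0$ for every even $k \ge 2$, so only odd $k \ge 3$ require an actual estimate, and there a ratio argument of the type you sketched goes through.
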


We need the following consequence, which says that for any small set $S$, most points
do not have too many neighbors in the noisy hypercube that lie within
$S$. For $S \subseteq \zo^n$, let us define the
set $\Lambda_{\delta,\theta}(S)$ of those points for which a $\theta$ fraction of neighbors in the
$\delta$-noisy hypercube lie in $S$. Formally,
\[ \Lambda_{\delta,\theta}(S) = \{x \in \zo^n \ s.t. \ \Pr_{y \sim \mathrm{N}_{1-2\delta}(x)}[y \in S] \geq \theta\}. \]
 
 Abusing the notation from Section \ref{sec:bias}, for $S \subseteq \{0,1\}^n$ we write $\mu(S)$ to denote
 $\Pr_{x \in \{0,1\}^n}[x \in S].$

\begin{Lem}
\label{lem:sse}
We have
\[ \mu(\Lambda_{\delta,\theta}(S)) \leq \bkets{\frac{\mu(S)}{\theta^2}}^{1 +2 \delta}. \]
\end{Lem}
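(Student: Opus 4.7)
The plan is to reduce this to a direct application of Markov's inequality combined with the $(2,q)$-Hypercontractivity Theorem (Theorem \ref{thm:hc}). Let $f = \mathbbm{1}_S$ be the $\{0,1\}$-indicator of $S$, so $\|f\|_2^2 = \mu(S)$, and observe that by the definition of $T_{1-2\delta}$, for every $x \in \zo^n$,
\[ T_{1-2\delta}f(x) = \Ex_{y \sim \mathrm{N}_{1-2\delta}(x)}[\mathbbm{1}_S(y)] = \Pr_{y \sim \mathrm{N}_{1-2\delta}(x)}[y \in S]. \]
Consequently $\Lambda_{\delta,\theta}(S) = \{x : T_{1-2\delta}f(x) \geq \theta\}$, and since $f$ is nonnegative so is $T_{1-2\delta}f$.

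Next I would choose the exponent $q = 2(1+2\delta) = 2 + 4\delta$ and apply Markov's inequality to the random variable $(T_{1-2\delta}f(\x))^q$ for $\x$ uniform in $\zo^n$:
\[ \mu(\Lambda_{\delta,\theta}(S)) = \Pr_{\x}\bigl[T_{1-2\delta}f(\x) \geq \theta\bigr] \leq \frac{\|T_{1-2\delta}f\|_q^{\,q}}{\theta^q}. \]
Now I would invoke Theorem \ref{thm:hc}, which is applicable provided $q \leq 1 + (1-2\delta)^{-2}$; this is precisely the step that determines the exponent in the final bound, so I would take a moment to verify it. The inequality $2+4\delta \leq 1 + (1-2\delta)^{-2}$ is equivalent to $(1+4\delta)(1-2\delta)^2 \leq 1$, and expanding gives $1 - 12\delta^2 + 16\delta^3 \leq 1$, which holds for all $\delta \in [0,1/2]$. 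Hence hypercontractivity yields $\|T_{1-2\delta}f\|_q \leq \|f\|_2 = \mu(S)^{1/2}$.

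Plugging back in,
\[ \mu(\Lambda_{\delta,\theta}(S)) \leq \frac{\mu(S)^{q/2}}{\theta^q} = \left( \frac{\mu(S)}{\theta^2} \right)^{q/2} = \left( \frac{\mu(S)}{\theta^2} \right)^{1+2\delta}, \]
which is exactly the claimed bound. The only technically delicate step is the verification that $q = 2+4\delta$ lies in the admissible range for hypercontractivity; everything else is bookkeeping. Since $\mu(S)/\theta^2 \leq 1$ is the only nontrivial regime (otherwise the right-hand side is $\geq 1$ and there is nothing to prove), raising to the power $1+2\delta$ rather than to the power $q/2$ does not cause any sign issues, and the chosen $q$ is the smallest exponent yielding the stated rate.
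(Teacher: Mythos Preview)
Your proof is correct and follows exactly the same route as the paper's own argument: take $f=\ind_S$, identify $\Lambda_{\delta,\theta}(S)$ as the superlevel set $\{T_{1-2\delta}f \geq \theta\}$, choose $q=2(1+2\delta)$, apply Markov's inequality, and bound $\|T_{1-2\delta}f\|_q$ by $\|f\|_2$ via hypercontractivity. Your explicit verification that $q \leq 1+(1-2\delta)^{-2}$ is a nice addition (the paper just says ``it is easy to see''); note, however, that your final sentence is slightly confused, since $q/2 = 1+2\delta$ exactly, so there is no distinction between the two exponents.
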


\begin{Proof}
Let $f(x) = \ind(x \in S)$. Then 
\[ \trf(x) = \Pr_{y \in \mathrm{N}_{1-2\delta}(x)}[y \in S]. \]
Hence $\Lambda_{\delta,\theta}(S)$ is the set of those $x$ such that $\trf(x)
\geq \theta$. 

Let $q = 2(1 + 2\delta)$. It is easy to see that $q$ satsfies the
hypothesis of Theorem \ref{thm:hc}. Hence we can bound the $q^{th}$ moment of $\trf$ as
\[\Ex_{x \in \zo^n}[(\trf(x))^q] \leq \|f\|_2^q = \mu(S)^{q/2}.\]
Hence by Markov's inequality,
\[ \Pr_{x \in \zo^n}[\trf(x) \geq \theta] \leq \frac{\mu(S)^{q/2}}{\theta^q}.\]
The claim follows from our choice of $q$.
\end{Proof}

\begin{Cor}
\label{cor:sse}
If $\mu(S) \leq \theta^{4 + 2/\delta}$, then $\mu(\GS) \leq \mu(S)^{1 + \delta}$. 
\end{Cor}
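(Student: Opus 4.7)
The plan is to reduce the corollary directly to Lemma~\ref{lem:sse} by a short algebraic manipulation; there is no genuine obstacle here, only a matter of matching exponents. First I would invoke Lemma~\ref{lem:sse} to write
\[
\mu(\GS) \;\leq\; \left(\frac{\mu(S)}{\theta^2}\right)^{1+2\delta} \;=\; \mu(S)^{1+2\delta}\cdot \theta^{-2(1+2\delta)}.
\]
The goal is to show this quantity is at most $\mu(S)^{1+\delta}$.

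Next I would rearrange the desired inequality $\mu(S)^{1+2\delta}\cdot\theta^{-2(1+2\delta)} \leq \mu(S)^{1+\delta}$ by dividing through by $\mu(S)^{1+\delta}$, which (since $\mu(S)>0$; the case $\mu(S)=0$ is trivial) is equivalent to $\mu(S)^{\delta} \leq \theta^{2(1+2\delta)}$. Because $\mu(S), \theta \in (0,1]$, taking logs and dividing by $\delta>0$ turns this into $\mu(S) \leq \theta^{2(1+2\delta)/\delta} = \theta^{4 + 2/\delta}$, which is precisely the hypothesis of the corollary.

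Thus the only step is to verify that the exponent $(4+2/\delta)$ in the hypothesis is exactly tuned so that the bound of Lemma~\ref{lem:sse} tightens to $\mu(S)^{1+\delta}$. I would present the argument by stating this equivalence and then reading off the conclusion. No separate case analysis or additional inequality is needed, and hypercontractivity is already fully absorbed into Lemma~\ref{lem:sse}.
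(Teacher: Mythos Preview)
Your proposal is correct and follows the same approach as the paper: invoke Lemma~\ref{lem:sse} and verify the algebraic condition $(\mu(S)/\theta^2)^{1+2\delta} \leq \mu(S)^{1+\delta}$, which unwinds exactly to the hypothesis $\mu(S) \leq \theta^{4+2/\delta}$. The paper simply asserts this check is easy, whereas you spell it out.
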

\begin{Proof}
By Lemma \ref{lem:sse}, it suffices that 
$ \bkets{\frac{\mu(S)}{\theta^2}}^{1 +2 \delta} \leq \mu(S)^{1
  +\delta}$,
and it is easy to check that this condition holds for our choice of
$\mu(S)$. 
\end{Proof}

\subsection{Global Self-correction}

\ignore{
For Boolean functions, $f, g:
\zo^n \to \zo$ let 
\[ \delta(f,g) = \Pr_{x \in \zo^n}[f(x) \neq g(x)].\]
}
Our global self-corrector is given a function $r:\zo^n \to \zo$ such that
there exists $f \in \calF(s,n)$ satisfying
$\delta(r,f) \leq 2^{-c_1s}$ for some constant $c_1 > 2$ to be specified
later. By Lemma \ref{lem:bias}, it follows that any two
sensitivity $s$ functions differ in at least $2^{-2s}$ fraction of points, so
such a function $f$ if it exists must be unique. Our self-corrector
defines a sequence of functions $f_0,\ldots,f_T$ such that $f_0 =r$ and
$f_T = f$ (with high probability).

\myalgo{Global Self-corrector}{
{\bf Input: }$r:\zo^n \to \zo^n$ such that $\delta(r,f) \leq 2^{-c_1 s}$ for some 
$f \in {\cal F}(s,n)$.\\
{\bf Output:} The sensitivity-$s$ function $f$.\\

Let $f_0 =r$, $k = c_2s\log(n/s)$, $\delta = 1/(20s)$.\\
For $t = 1,\ldots, k$,\\
$~~~$ For every $x \in \zo^n$, \\
$~~~~~~$ Let $f_t(x) = \Maj{y \sim \mathrm{N}_{1-2\delta}(x)}{f_{t-1}(y)}$.\\
Return $f_k$.
}

The algorithm runs in time $2^{O(n)}$, which is
polynomial in the length of its output (which is a truth table of size
$2^n$). To analyze the algorithm, let us define the sets $S_t$ for $t
\in \{0,\ldots,T\}$ as 
\[ S_t = \{x \in \zo^n \ \text{such that} \ f_t(x) \neq f(x).\}\]

The following is the key lemma for the analysis.

\begin{Lem}
\label{lem:self-correct}
We have $S_t \subseteq \Lambda_{\delta,2/5}(S_{t-1})$. 
\end{Lem}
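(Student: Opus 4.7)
\begin{Proof}[Proof proposal]
The plan is to fix $x \in S_t$ and show directly that a $2/5$ fraction of its noisy neighbors must lie in $S_{t-1}$, by combining the defining majority rule for $f_t$ with the noise stability of $f$ guaranteed by Lemma \ref{lem:noise-stab}.

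First I would unpack what $x \in S_t$ means: $f_t(x) \neq f(x)$. Without loss of generality assume $f(x) = 0$ and $f_t(x) = 1$ (the case $f(x) = 1$ is symmetric). By the update rule $f_t(x) = \Maj_{y \sim \mathrm{N}_{1-2\delta}(x)}\{f_{t-1}(y)\}$, the outcome $f_t(x) = 1$ forces
\[
\Pr_{y \sim \mathrm{N}_{1-2\delta}(x)}[f_{t-1}(y) = 1] \geq \tfrac{1}{2}.
\]

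Next I would split this event according to whether $y$ lies in $S_{t-1}$ or not. If $f_{t-1}(y) = 1$, then either $f(y) = 1$ (so $y$ disagrees with $f(x) = 0$), or $f(y) = 0 = f(x)$ and hence $y \in S_{t-1}$ (since $f_{t-1}(y) \neq f(y)$). This gives
\[
\tfrac{1}{2} \leq \Pr_y[f_{t-1}(y) = 1]
\leq \Pr_y[f(y) \neq f(x)] + \Pr_y[y \in S_{t-1}].
\]

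The first term on the right is exactly the pointwise noise sensitivity $\mathrm{NS}_\delta[f](x)$, which by Lemma \ref{lem:noise-stab} is strictly less than $2\delta s = 2 \cdot \tfrac{1}{20s} \cdot s = \tfrac{1}{10}$ for our choice $\delta = 1/(20s)$. Rearranging yields $\Pr_y[y \in S_{t-1}] \geq \tfrac{1}{2} - \tfrac{1}{10} = \tfrac{2}{5}$, so $x \in \Lambda_{\delta, 2/5}(S_{t-1})$. The symmetric case $f(x) = 1$, $f_t(x) = 0$ is handled identically by replacing the role of $1$ with $0$ throughout. I do not anticipate a serious obstacle here; the only subtle point is handling ties in the definition of $\Maj$ consistently, which is harmless since the noise stability bound is strictly less than $1/10$ and thus leaves slack in the inequality.
\end{Proof}
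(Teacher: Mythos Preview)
Your proposal is correct and follows essentially the same argument as the paper. The only cosmetic difference is that the paper avoids the case split on $f(x)$ by working directly with the event $\{f(x)\neq f_{t-1}(y)\}$ and applying the union bound $\Pr_y[f(x)\neq f_{t-1}(y)]\leq \Pr_y[f(x)\neq f(y)]+\Pr_y[f(y)\neq f_{t-1}(y)]$, the latter event being exactly $y\in S_{t-1}$.
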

\begin{Proof}
For $x \in S_t$, 
\[ f(x) \neq \Maj{y \sim \mathrm{N}_{1-2\delta}(x) }{f_{t-1}(y)},\]
hence
\[ \Pr_{y \sim \mathrm{N}_{1-2\delta}(x)}[f(x) \neq f_{t-1}(y)] \geq \frac{1}{2}. \]
We can upper bound this probability by
\begin{align*} 
\Pr_{y \sim \mathrm{N}_{1-2\delta}(x)}[f(x) \neq f_{t-1}(y)] & \leq \Pr_{y \sim \mathrm{N}_{1-2\delta}(x)}[f(x)
  \neq f(y)] + \Pr_{y \sim \mathrm{N}_{1-2\delta}(x)}[f(y) \neq f_{t-1}(y)].
\end{align*}
Since the distributions $\noise_{\delta}(x)$ and $\mathrm{N}_{1-2\delta}(x)$ are identical, we can bound the first term by Lemma \ref{lem:noise-stab}, which gives
\[
\Pr_{y \sim \mathrm{N}_{1-2\delta}(x)}[f(x) \neq f(y)] \leq 2 s \delta = \frac{1}{10}. 
\]
Hence
\[ \Pr_{y \sim \mathrm{N}_{1-2\delta}(x)}[f(y) \neq f_{t-1}(y)] \geq \frac{1}{2} - \frac{1}{10} = \frac{2}{5}.\]
But $f(y) \neq f_{t-1}(y)$ implies $y \in S_{t-1}$, hence  by
definition of $\GS$ we have $x \in \Lambda_{\delta,2/5}(S_{t-1})$. 
\end{Proof}

We can now analyze our global self-corrector.

\begin{Thm}
\label{thm:global}
There exist constants $c_1, c_2$ such that if $\delta(r,f) \leq 2^{-c_1
  s}$ for some $f \in \calF(s,n)$, then for $k \geq c_2s\log(n/s)$, we have $f_k =f$.
\end{Thm}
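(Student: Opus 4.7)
The plan is to iterate Lemma \ref{lem:self-correct} together with Corollary \ref{cor:sse} to show that the error set $S_t$ shrinks doubly-exponentially fast, and then choose $c_2$ so that after $k = c_2 s \log(n/s)$ steps its measure drops below $2^{-n}$, forcing it to be empty.

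First I would set up the parameters. With $\delta = 1/(20s)$ and $\theta = 2/5$, Corollary \ref{cor:sse} says that whenever $\mu(S_{t-1}) \leq (2/5)^{4 + 2/\delta} = (2/5)^{4 + 40s}$, the set $\Lambda_{\delta, 2/5}(S_{t-1})$ has measure at most $\mu(S_{t-1})^{1 + \delta}$. I would choose the constant $c_1$ in the hypothesis $\delta(r,f) \leq 2^{-c_1 s}$ large enough (a constant of roughly $c_1 \geq 60$ suffices) so that $\mu(S_0) = \delta(r,f) \leq (2/5)^{4 + 40s}$. Since the bound $\mu(S_t) \leq \mu(S_{t-1})^{1+\delta}$ implies $\mu$ is monotonically non-increasing, this hypothesis on $\mu(S_{t-1})$ then holds automatically at every step.

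Next, I would combine Lemma \ref{lem:self-correct} with the corollary: $\mu(S_t) \leq \mu(\Lambda_{\delta, 2/5}(S_{t-1})) \leq \mu(S_{t-1})^{1+\delta}$. Iterating this inequality from $t=0$ to $t=k$ yields
\[
\mu(S_k) \leq \mu(S_0)^{(1+\delta)^k} \leq 2^{-c_1 s \cdot (1 + 1/(20s))^k}.
\]
Using the standard estimate $(1 + 1/(20s))^{20s} \geq 2$, I get $(1 + 1/(20s))^{c_2 s \log(n/s)} \geq (n/s)^{c_2/20}$, so
\[
\mu(S_k) \leq 2^{-c_1 s \cdot (n/s)^{c_2/20}}.
\]

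Finally, I would choose $c_2$ (say $c_2 = 40$) large enough that $c_1 s \cdot (n/s)^{c_2/20} > n$ whenever $n \geq s$. This forces $\mu(S_k) < 2^{-n}$, but $S_k$ is a subset of $\zo^n$, so any nonempty $S_k$ has measure at least $2^{-n}$; hence $S_k = \emptyset$ and $f_k \equiv f$. The only real content beyond routine bookkeeping is choosing $c_1, c_2, \delta, \theta$ to make the two constraints (the base-case hypothesis for Corollary \ref{cor:sse} and the terminal bound $\mu(S_k) < 2^{-n}$) compatible; the hypercontractive contraction at rate $1 + \delta$ does all the work.
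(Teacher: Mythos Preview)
Your proposal is correct and follows essentially the same approach as the paper: choose $c_1$ so that $\mu(S_0)\leq(2/5)^{4+40s}$, iterate Lemma~\ref{lem:self-correct} together with Corollary~\ref{cor:sse} to obtain $\mu(S_k)\leq\mu(S_0)^{(1+\delta)^k}$, and then pick $c_2$ so that this drops below $2^{-n}$. Your writeup is in fact slightly more careful than the paper's about the inductive maintenance of the hypothesis of Corollary~\ref{cor:sse} and about the explicit numerical choice of $c_1,c_2$.
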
  
\begin{Proof}
Let $\delta =1/(20s)$. Assume that there exists $f \in \calF(s,n)$
such that 
\[ \delta(f,s) = \mu(S_0) \leq 2^{-c_1s} \leq (2/5)^{4 + 40s}.\] 
By Lemma \ref{lem:self-correct} and Corollary \ref{cor:sse}, we have 
\[ \mu(S_t) \leq \mu(\Lambda_{\delta,2/5}(S_{t-1})) \leq
\mu(S_{t-1})^{1 + \delta} \leq \mu(S_0)^{(1 + \delta)^t}. \]
For $t \geq c_2'\ln(n/s)/\delta = c_2s\log(n/s)$, we have
\[ \mu(S_t) \leq \mu(S_0)^{(1+ \delta)^t} < 2^{-n}.\]
But since $S_t \subseteq \zo^n$, it must be the empty set, and this
implies that $f_t =f$. 
\end{Proof}

\subsection{Local Self-Correction}

Recall that in the local self-correction problem, the algorithm is given $x
\in \zo^n$ as input and oracle access to $r:\zo^n \to \zo$ such that
$\delta(r,f) \leq 2^{-d_1s}$ for some constant $d_1 > 2$ to be
specified later. The goal is to compute $f(x)$. Our local
algorithm can be viewed as derived from the global algorithm, where we
replace the Majority computation with sampling, and only compute the
parts of the truth tables that are essential to computing $f_T(x)$. 

We define a distribution $\cT_k(x)$ over $c$-regular trees of depth $k$
rooted at $x$, where each tree node is labelled with a point in
$\zo^n$. To sample a tree $T_1(x)$ from $\cT_1(x)$, we place $x$ at the root,
then sample $c$ independent points from $\mathrm{N}_{1-2\delta}(x)$, and place them at the
leaves. To sample a tree $T_k(x)$ from $\cT_k(x)$, we first sample 
$T_{k-1}(x) \sim \cT_{k-1}(x)$ and then for every leaf $x_i \in  T_{k-1}(x)$, we sample $c$
independent points according to $\mathrm{N}_{1-2\delta}(x_i)$, and make them the children of $x_i$.
(Note the close similarity between these trees and the trees discussed in Section \ref{sec:formulas}.  
The difference is that the trees of Section \ref{sec:formulas} correspond to random walks that are constructed to go downward while now the random walks corresponding to the noise process $\mathrm{N}_{1-2\delta}(\cdot)$ do 
not have this constraint.)

Given oracle access to $r: \zo^n \to \zo$, we use the tree $T_k(x)$ to compute a
guess for the value of $f(x)$, by querying $r$ at the leaves
and then using Recursive Majority. In more detail, we
define functions $\tr_0,\tr_1,\dots,\tr_k$ which collectively assign a {\em guess} for every
node in $T_k$. (In more detail, each $\tr_i$ is a function from $L(T_k(x),i)$ to $\{0,1\}$, where $L(T_k(x),i)$ is the set of points in $\{0,1\}^n$ that are at the nodes at depth $k-i$ in $T_k(x).$)  For each leaf $y$, we let $\tr_0(y) =
r(y)$. Once $\tr_{k-t}$ has been defined for nodes at depth
$t$ in $T_k(x)$, given $y$ at depth $t-1$ in $T_k(x)$, we set $\tr_{k-t+1}(y)$ to be the
majority of $\tr_{k-t}$ at its children. We output $\tr_k(x)$ as our estimate for
$f(x)$.

\myalgo{Local Self-corrector}{
{\bf Input:}  $x \in \zo^n$, oracle for $r:\zo^n \to \zo$ such that $\delta(r,f)
\leq 2^{-d_1s}$ for some $f \in \calF(s,n)$.\\
{\bf Output:} $b \in \zo$ which equals $f(x)$ with probability $1 -\eps$. \\

Let $\delta = 1/(20s)$, $c= c(1/4,\eps)$, $k \in \Z$. \\
Sample $T_k \sim \cT(k,x)$. \\
For each leaf $y \in T_k$, query $r(y)$. \\
For $i=0$ to $k$, compute $\tr_i:L(T_k(x),i) \to \zo$ as described above.\\
Output $\tr_k(x)$.
}

To analyze the algorithm, for $k \in \Z$ define
\[ S_k= \cbkets{x \in \zo^n \ \text{such that} \ \Pr_{T_k(x) \sim \cT_k(x)}[\tr_k(x) \neq f(x)] > \eps}.\]

The following is analogous to Lemma \ref{lem:self-correct}:
\begin{Lem}
\label{lem:self-correct2}
For $k \geq 1$ and $\eps<1/25$, we have $S_k \subseteq \Lambda_{\delta,1/10}(S_{k-1})$. 
\end{Lem}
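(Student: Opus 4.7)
I plan to prove the contrapositive: fix $x \notin \Lambda_{\delta,1/10}(S_{k-1})$ and show that $x \notin S_k$. Setting $p = \Pr_{y \sim \mathrm{N}_{1-2\delta}(x)}[y \in S_{k-1}]$, the assumption gives $p < 1/10$. The random tree $T_k(x)$ is built by drawing $c$ independent children $\y_1,\ldots,\y_c \sim \mathrm{N}_{1-2\delta}(x)$ and then sampling an independent subtree $T_{k-1}(\y_i) \sim \cT_{k-1}(\y_i)$ at each child. The output $\tr_k(x)$ is the majority of $\tr_{k-1}(\y_i)$ over $i\in[c]$, so the $c$ events ``$\tr_{k-1}(\y_i)\neq f(x)$'' are mutually independent.

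For each fixed $i$, I will bound $\Pr[\tr_{k-1}(\y_i) \neq f(x)]$ by a union bound over three bad events:
\begin{align*}
\Pr[\tr_{k-1}(\y_i) \neq f(x)] &\leq \Pr[\y_i \in S_{k-1}] \\
&\quad + \Pr\!\bigl[\y_i \notin S_{k-1} \text{ and } \tr_{k-1}(\y_i) \neq f(\y_i)\bigr] \\
&\quad + \Pr[f(\y_i) \neq f(x)].
\end{align*}
The first term is $p < 1/10$ by assumption. The third term is bounded by $\nsf(x) \leq 2s\delta = 1/10$ via Lemma~\ref{lem:noise-stab} (using $\delta = 1/(20s)$). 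For the middle term, conditioning on $\y_i \notin S_{k-1}$ and using independence of the subtree $T_{k-1}(\y_i)$ from the draw of $\y_i$, the definition of $S_{k-1}$ gives a bound of $\eps < 1/25$. Adding these yields $\Pr[\tr_{k-1}(\y_i) \neq f(x)] < 1/10 + 1/25 + 1/10 < 1/4$.

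Since the $c$ indicators of $\tr_{k-1}(\y_i)\neq f(x)$ are independent Bernoulli variables each with parameter strictly less than $1/4$, they are stochastically dominated by $c$ i.i.d.\ $\Bern(1/4)$ variables. By the definition of $c = c(1/4,\eps)$, the probability that at least $c/2$ of these dominating variables equal $1$ is at most $\eps$. Therefore $\Pr_{T_k(x)}[\tr_k(x)\neq f(x)] \leq \eps$, so $x \notin S_k$, completing the contrapositive.

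The only subtle point is the middle term's bound: one must observe that $\tr_{k-1}(\y_i)$ depends on $\y_i$ and on fresh randomness (the subtree $T_{k-1}(\y_i)$) that is independent of the draw of $\y_i$, so we can condition on a particular realization of $\y_i \notin S_{k-1}$ and invoke the defining property of $S_{k-1}$ verbatim. Everything else is a routine union bound and stochastic domination.
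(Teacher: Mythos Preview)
Your proof is correct and follows essentially the same approach as the paper: both arguments decompose the per-child error $\Pr[\tr_{k-1}(y)\neq f(x)]$ into a noise-stability term (bounded by $2s\delta=1/10$ via Lemma~\ref{lem:noise-stab}), a term controlled by membership of $y$ in $S_{k-1}$, and the defining $\eps$-bound for $y\notin S_{k-1}$, and then invoke the definition of $c=c(1/4,\eps)$. The paper argues the direct implication ($x\in S_k\Rightarrow$ per-child error $>1/4\Rightarrow \Pr[y\in S_{k-1}]\geq 1/10$), while you argue the contrapositive; this is a purely cosmetic difference. One minor remark: the $c$ child indicators are in fact i.i.d.\ (not merely independent), so the stochastic-domination step, while correct, is not strictly needed.
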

\begin{Proof}
We have $\tr_k(x)= \Maj{1 \leq i  \leq c}{b_i}$ where each $b_i$ is
drawn independently according to $\tr_{k-1}(\mathrm{N}_{1-2\delta}(x))$. If $x \in
S_k$, then by our choice of $c = c(1/4,\eps)$,
\[ \Pr_{\stackrel{y \sim \mathrm{N}_{1-2\delta}(x)}{T_{k-1}(y) \sim \cT_{k-1}(y)}}[\tr_{k-1}(y)
  \neq f(x)] > \frac{1}{4}. \]
On the other hand, we also have
\[\Pr_{\stackrel{y \sim \mathrm{N}_{1-2\delta}(x)}{T_{k-1}(y) \sim \cT_{k-1}(y)}}[\tr_{k-1}(y)
  \neq f(x)] \leq 
\Pr_{\stackrel{y \sim \mathrm{N}_{1-2\delta}(x)}{T_{k-1}(y) \sim \cT_{k-1}(y)}}[f(y) \neq f(x)] + 
\Pr_{\stackrel{y \sim \mathrm{N}_{1-2\delta}(x)}{T_{k-1}(y) \sim \cT_{k-1}(y)}}[\tr_{k-1}(y) \neq f(y)].\]
The first term on the LHS is bounded by $1/10$ by Lemma \ref{lem:noise-stab}. Hence we have
\[ \Pr_{\stackrel{y \sim N_\delta(x)}{T_{k-1}(y) \sim \cT_{k-1}(y)}}[\tr_{k-1}(y) \neq f(y)] \geq \frac{1}{4} - \frac{1}{10} > \frac{1}{7}.\]
But by the definition of $S_{k-1}$, 
\begin{align*} \Pr_{\stackrel{y \sim \mathrm{N}_{1-2\delta}(x)}{T_{k-1}(y) \sim \cT_{k-1}(y)}}[\tr_{k-1}(y) \neq
f(y)] & \leq \eps\cdot\Pr_{y \sim \mathrm{N}_{1-2\delta}(x)}\sbkets{y \not\in S_{k-1}} + \Pr_{y \sim \mathrm{N}_{1-2\delta}(x)}[y \in S_{k-1}] \\
& \leq \eps+ \Pr_{y \sim \mathrm{N}_{1-2\delta}(x)}[y \in
S_{k-1}].
\end{align*}
Hence for $\eps < 1/25$,  
\[ \Pr_{y \sim \mathrm{N}_{1-2\delta}(x)}[y \in S_{k-1}] \geq \frac{1}{7} - \eps \geq \frac{1}{10}, \]
so by the definition of $\GS$ we have $x \in \Lambda_{\delta,1/10}(S_{k-1})$. 
\end{Proof}

We can now analyze our local self-corrector.

\begin{Thm}
\label{thm:local}
There exist constants $d_1, d_2$ such that if $\delta(r,f) \leq 2^{-d_1
  s}$ for some $f \in \calF(s,n)$, then for $k \geq d_2s\log(n/s)$
 we have that $\tr_k(x) =f(x)$ with probability $0.99$. The algorithm
  queries the oracle for $r$ at $(n/s)^{O(s)}$ points.
\end{Thm}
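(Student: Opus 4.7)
The plan is to mimic the analysis of the global self-corrector (Theorem~\ref{thm:global}), combining Lemma~\ref{lem:self-correct2} with the hypercontractive shrinkage given by Corollary~\ref{cor:sse}. Set $\delta = 1/(20s)$, take $\eps$ to be any constant below $1/25$ (e.g.\ $\eps = 1/100$, which then gives the $0.99$ success probability), and let $c = c(1/4,\eps)$, a constant. Observe first that $T_0(x)$ is the trivial one-node tree, so $\tr_0(x) = r(x)$ is deterministic; consequently
\[ S_0 \;=\; \{x \in \zo^n : r(x) \neq f(x)\}, \qquad \mu(S_0) \;\leq\; 2^{-d_1 s}. \]

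I would choose the constant $d_1$ large enough that $\mu(S_0) \leq (1/10)^{4 + 40s}$; note that $4 + 2/\delta = 4 + 40s$, so this is precisely the hypothesis required to invoke Corollary~\ref{cor:sse} with $\theta = 1/10$. Then by Lemma~\ref{lem:self-correct2} (which uses the assumption $\eps < 1/25$) and Corollary~\ref{cor:sse}, for every $t \geq 1$,
\[ \mu(S_t) \;\leq\; \mu(\Lambda_{\delta,1/10}(S_{t-1})) \;\leq\; \mu(S_{t-1})^{1+\delta}. \]
Since $\mu(S_t) \leq \mu(S_{t-1})$, the hypothesis of Corollary~\ref{cor:sse} remains valid at every step, so the recursion can be iterated to yield $\mu(S_k) \leq \mu(S_0)^{(1+\delta)^k}$.

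Next, I would pick $d_2$ so that $k = d_2 s \log(n/s)$ forces $(1+\delta)^k \cdot d_1 s \geq n$, which (using $\log(1+\delta) \geq \delta/2 = 1/(40s)$ for small $\delta$) holds as soon as $k \geq 40 s \ln(n/(d_1 s))$. For such $k$ we get $\mu(S_k) \leq 2^{-n}$, so $S_k = \emptyset$. Unwinding the definition of $S_k$, this means that \emph{every} input $x$ satisfies $\Pr_{T_k(x)}[\tr_k(x) \neq f(x)] \leq \eps \leq 0.01$, giving the claimed success probability.

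Finally, for the query bound: the tree $T_k(x)$ has out-degree $c$ and depth $k$, hence at most $c^k$ leaves, each contributing one query to $r$. With $k = d_2 s \log(n/s)$ and $c$ a universal constant, this is $c^{O(s\log(n/s))} = (n/s)^{O(s)}$. The main subtlety to be careful about is ensuring that the hypothesis of Corollary~\ref{cor:sse} continues to apply throughout the iteration (i.e.\ that $\mu(S_{t-1})$ remains below $\theta^{4+2/\delta}$); this is automatic since the sequence $\mu(S_t)$ is nonincreasing once $\mu(S_0)$ is below the threshold, which is why choosing $d_1$ sufficiently large at the start is enough to drive the entire argument.
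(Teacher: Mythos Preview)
Your proposal is correct and follows essentially the same approach as the paper: identify $S_0$ with the error set of $r$, iterate Lemma~\ref{lem:self-correct2} together with Corollary~\ref{cor:sse} (with $\theta = 1/10$) to get $\mu(S_k) \leq \mu(S_0)^{(1+\delta)^k}$, and choose $k = \Theta(s\log(n/s))$ to drive this below $2^{-n}$. Your explicit check that the hypothesis of Corollary~\ref{cor:sse} persists through the iteration (via monotonicity of $\mu(S_t)$) is a detail the paper leaves implicit, and your threshold $(1/10)^{4+40s}$ is in fact the tight one coming from $4+2/\delta = 4+40s$.
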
  
\begin{Proof}
Let $\delta =1/(20s)$. Let $d_1 > 0$ be such that
\[ 2^{-d_1s} < (0.1)^{4 + 60s}.\] 
Assume there exists $f \in \calF(s,n)$ such that 
\[ \delta(r,f) \leq 2^{-d_1s}. \]
Observe that $\tr_0(x) = r(x)$, so consequently $\mu(S_0) = \delta(r,f)$. 
By Lemma \ref{lem:self-correct2} and Corollary \ref{cor:sse}, we have 
\[ \mu(S_k) \leq \mu(\Lambda_{\delta,1/10}(S_{k-1})) \leq
\mu(S_{k-1})^{1 + \delta} \leq \mu(S_0)^{(1 + \delta)^k}. \]
For $k \geq d_2'\ln(n/s)/\delta = d_2s\log(n/s)$, we have
$\mu(S_t) < 2^{-n},$
so $S_t$ must be the empty set. But this implies that  $\tr_k(x) = f(x)$ except with probability $\eps$.

The number of queries to the oracle is bounded by the number of leaves
in the tree, which is $c^k$. Setting $\eps =1/100$, since $c(1/4,1/100) =O(1)$, this is at
most $c^k = (n/s)^{O(s)}$. We can amplify the success probability to
$1 - \eps$ using $c(1/100,\eps) = O(\log(1/\eps))$ independent
repetitions. 
\end{Proof}

\medskip
\noindent {\bf Discussion.}  Every real polynomial of degree $d$ computing a Boolean function is also a degree $d$
polynomial over $\F_2$. Hence, it has a natural self-corrector which
queries the value at a random affine subspace of dimension $d+1$
containing $x$, and then outputs the $\ms{XOR}$ of those
values. Conjecture \ref{conj:2} implies that this self-corrector also
works for low sensitivity functions. The parameters one would
get are incomparable to Theorem \ref{thm:local-intro}; we find it
interesting that this natural self-corrector is very different from the algorithm of Theorem 
\ref{thm:local-intro}.

We further remark that every Boolean function with real polynomial degree $\deg(f) \leq d$
satsifies $s(f) \leq O(d^2)$ (recall Theorem~\ref{thm:ns}). Thus, Theorem
\ref{thm:local-intro} gives a self-corrector for functions with
$\deg(f) \leq d$ that has query complexity $n^{O(d^2)}$. It is interesting
to note (by considering the example of parity), that this
performance guarantee does not extend to all functions  with  $\F_2$ degree $\deg_2(f) \leq d$.


\section{Propagation rules}

We have seen that low-degree functions and low-sensitivity functions share the 
property that they are uniquely specified by their values on small-radius Hamming
balls. In either case, we can use these values over a small Hamming ball to
infer the values at other points in $\zo^n$ using simple ``local
propagation'' rules. The propagation rules for the two types of functions are quite different, but Conjecture \ref{conj:2} and its converse given by Theorem
\ref{thm:ns} together imply that the two rules must converge beyond a certain
radius. In this section, we discuss this as a possible approach to Conjecture
\ref{conj:2} and some questions that arise from it. 

\subsection{Low sensitivity functions: the Majority Rule} 
If $f: \zo^n \to \zo$ has sensitivity $s$, Theorem \ref{thm:ball} implies that given the values of $f$ on
a ball of radius $2s$, we can recover $f$ at
points at distance $r +1\geq 2s + 1$ from the center by taking the Majority value over
its neighbors at distance $r$ (see Equation (\ref{eq:maj-rule})). 
It is worth noting that as $r$ gets
large, the Majority is increasingly lopsided: at most $s$ out of
$r$ points are in the minority. 
We refer to the process of inferring $f$'s values everywhere from its values on a ball via the Majority rule,
increasing the distance from the center by one at a time, as ``applying the Majority rule''. 

For concreteness, let us conder the ball centered at $0^n$.
If there exists a sensitivity $s$ function $f:
\zo^n \rgta \zo$ such that the points in $\cB(0^n,2s)$ are labelled
according to $f$, then applying the Majority 
rule recovers $f$. However, not every labeling of $\cB(0^n,2s)$ will
extend to a low sensitivity function on $\zo^n$ via the Majority
Rule. It is an interesting question to characterize such labelings;
progress here will likely lead to progress on Question
\ref{prob:1}. 
An obvious necessary condition is that every point in
$\cB(0^n,2s)$ should have sensitivity at most $s$, but this is not
sufficient. This can be seen by considering the DNF version of the ``tribes'' function,
where there are $n/s$ disjoint tribes, each tribe is
of size $s$, and $n > s^2$.  (So this function $f$ is an $(n/s)$-way OR of 
$s$-way ANDs over disjoint sets of variables.)  Every $x \in \cB(0^n,2s)$ has
$s(f,x) \leq s$ --- in fact, this is true for every $x \in \cB(0^n,s(s-1))$ --- but it can be verified that
applying the Majority rule starting from the ball of
radius $2s$ does recover the Tribes function, which has sensitivity
$n/s > s$.   
Another natural question is whether there is a nice characterization of the class of
functions that can be obtained by applying the majority rule to a labeling of
$\cB(0^n,2s)$.

\subsection{Low degree functions: the Parity Rule}

It is well known that all functions $f:\zo^n \rgta \R$ with $\deg(f)
\leq d$ are uniquely specified by their values on a ball of radius $d$. This follows from the M{\"o}bius inversion formula. 
Again, let us fix the center to be $0^n$ for concreteness. 
Letting $\ind(T)$ denote the indicator vector of the set $T$, the
formula (see e.g. Section 2.1 of \cite{Jukna:12}) states that
\begin{align}
\label{eq:Mobius}
f(x) = \sum_{S \subseteq [n]}c_S\prod_{i \in S}x_i \ \ \text{where}
\ \ c_S & = \sum_{T \subseteq S}(-1)^{|S| - |T|}f(\ind(T)).
\end{align}
From this it can be inferred that if $\deg(f) \leq d$, then for $|S| \geq d+1$, we have
\begin{align}
\label{eq:Parity-rule}
f(\ind(S)) = \sum_{T \subset S}(-1)^{|S| - |T|+1}f(\ind(T)).
\end{align}
We will refer to Equation \eqref{eq:Parity-rule} as the ``Parity rule'',
since it states that $f$ is uncorrelated with the parity of the variables
in $S$ on the subcube given by $\{\ind(T):T \subseteq S\}$. 
We refer to the process of inferring $f$'s values everywhere from its values on a ball of radius $d$ via the Parity rule,
increasing the distance from the center by one at a time, as ``applying the Parity rule''.

Given a (partial) function $f: \cB(0^n,d)\to \zo$, applying the Parity rule starting from the values of $f$ on 
$\cB(0^n,d)$ lets us extend $f$ to all of $\zo^n$. Note that the resulting total function
$f$ is guaranteed to have degree at most $d$, but it is not guaranteed to be Boolean-valued everywhere on
$\{0,1\}^n$. Indeed, an easy counting argument (see e.g. Lemma~31 of \cite{MOR+:07eccc}) shows that there are at most $2^{d^2 2^{2d}} \cdot {n \choose {d2^d}}$ degree-$d$ functions over $\{0,1\}^n$, whereas the number of partial functions $f: \cB(0^n,d)\to \zo$ is $2^{{n \choose \leq d}}$. 
It is an interesting question to characterize the set of partial functions
$f:\cB(0^n,d) \to \zo$  whose extension by the Parity rule is a Boolean function. 

On the other hand, every partial function $f: \cB(0^n,d) \rgta \zo$ can be
uniquely extended to a total function $f:\zo^n \rgta \zo$ such that
$\deg_2(f) =d$. This follows from the Mobius inversion formula for
multilinear polynomials over $\F_2$:
\begin{align}
\label{eq:Mobius-f2}
f(x) = \sum_{S \subseteq [n]}c_S\prod_{i \in S}x_i \ \ \text{where} \ \ c_S & = \sum_{T \subseteq S}f(\ind(T))
\end{align}
where the sums are modulo $2$. If $\deg_2(f) \leq d$, then $c_S =0$ for all $S$ where $|S| \geq
d +1$. Hence by Equation \eqref{eq:Mobius}, for $|S| \geq d+1$, we have the simple rule
\begin{align}
\label{eq:f2-rule}
f(\ind(S)) = \sum_{T \subset S}f(\ind(T)).
\end{align}
We can view this as a propagation rule for functions with $\deg_2(f)
\leq d$, which extends a labeling of the ball $ \cB(0^n,d)$ to the entire cube $\zo^n.$
If we start from a labeling of the ball which corresponds
to a function $f:\zo^n \to \zo$ with $\deg(f) \leq d$, then Equation
\eqref{eq:f2-rule} above coincides with the Parity rule. 

\subsection{When do the rules work?}

Given a partial function $g:\cB(x_0,r) \to \{0,1\}$, we can extend it to a
total function  $g^\Mj:\zo^n \rgta \zo$ by applying the Majority rule (if there
is not a clear majority among the neighbors queried, the value is
underetmined). We can also extend it to a total function $g^\Par:\zo^n
\rgta \R$ using the Parity rule. Given a function $f:\zo^n \to \zo$, and a center $x_0$, we define
a series of partial functions $f|_{\cB(x_0,r)}$ obtained by restricting
$f$ to the ball of radius $r$ around $x_0$. We are interested in
how large $r$ needs to be for the Parity and Majority rules to return
the function $f$ for every choice of center $x_0$ . Formally, we define the
following quantities.

\begin{Def}
Let $r^\Par(f)$ be the smallest $r$ such that for every $x_0 \in \zo^n$, the Parity rule applied to $\cB(x_0,r)$ returns the
function $f$. Formally, 
\[ r^\Par(f)= \min \{r:  \forall x_0 \in \zo^n, \ (f|_{\cB(x_0, r)})^\Par = f\}.\]
Similarly, let $r^\Mj(f)$ be the smallest $r$ such that for every $x_0
\in \zo^n$,  the Majority rule applied to $\cB(x_0,r)$ returns the
function $f$. Formally, 
\[ r^\Mj(f)= \min \{r: \forall x_0 \in \zo^n (f|_{\cB(x_0,r)})^\Mj = f\}.\]  
\end{Def}

It is easy to see that $r^\Par$ captures the real degree of $f$:
\begin{Lem}
\label{lem:par}
For all $f: \{0,1\}^n \to \{0,1\}$, we have $r^\Par(f) = \deg(f)$.
\end{Lem}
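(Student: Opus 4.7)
The plan is to prove both inequalities $r^\Par(f) \leq \deg(f)$ and $r^\Par(f) \geq \deg(f)$ directly from the Möbius formula \eqref{eq:Mobius} and the definition of the Parity rule.

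For the upper bound $r^\Par(f) \leq \deg(f)$, set $d = \deg(f)$ and fix an arbitrary center $x_0 \in \zo^n$. First I would observe that the degree of $f$ is preserved under the affine change of variables $x \mapsto x \oplus x_0$, so it suffices to argue with $x_0 = 0^n$ (the Parity rule is defined using this coordinate system, and reindexing via $x \oplus x_0$ reduces the general case to it). Given the values of $f$ on $\cB(0^n, d)$, Möbius inversion \eqref{eq:Mobius} determines all coefficients $c_S$ for $|S| \leq d$; since $\deg(f) \leq d$, every other coefficient vanishes, so $f$ is uniquely determined everywhere. Equation \eqref{eq:Parity-rule} is then exactly the statement that $c_S = 0$ for $|S| \geq d+1$, so applying the Parity rule inductively on $|S|$ recovers the correct values of $f$. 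Hence $(f|_{\cB(x_0,d)})^\Par = f$, giving $r^\Par(f) \leq d$.

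For the lower bound $r^\Par(f) \geq \deg(f)$, let $r = r^\Par(f)$ and pick any center $x_0$; again by the affine change of variables we may assume $x_0 = 0^n$. The Parity rule, as written in \eqref{eq:Parity-rule}, enforces for every $S$ with $|S| \geq r+1$ the identity
\[ \sum_{T \subseteq S}(-1)^{|S|-|T|} f(\ind(T)) = 0, \]
which by \eqref{eq:Mobius} is exactly $c_S = 0$. Since by definition $(f|_{\cB(0^n,r)})^\Par = f$, all Möbius coefficients $c_S$ with $|S| \geq r+1$ vanish, so $\deg(f) \leq r = r^\Par(f)$.

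Combining the two directions yields $r^\Par(f) = \deg(f)$. There is no real obstacle here: the only mildly delicate point is checking that the Parity rule, which is stated around center $0^n$, behaves correctly when applied around an arbitrary center $x_0$. This is handled by the translation $x \mapsto x \oplus x_0$, which is a bijection of $\zo^n$ sending $\cB(x_0, r)$ to $\cB(0^n, r)$ and preserving real polynomial degree, so both the Möbius expansion and the Parity rule transport cleanly.
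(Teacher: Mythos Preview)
Your proof is correct and follows essentially the same approach as the paper's: both directions rest on the fact that applying the Parity rule from a ball of radius $r$ produces a function whose M{\"o}bius coefficients $c_S$ vanish for $|S|>r$, hence a function of degree at most $r$. Your lower bound is the contrapositive of the paper's (you argue that if the rule recovers $f$ then $\deg(f)\le r$, while the paper argues that if $r<\deg(f)$ the rule cannot recover $f$), and you are slightly more explicit than the paper about the translation $x\mapsto x\oplus x_0$ that reduces an arbitrary center to $0^n$; neither of these is a substantive difference.
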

\begin{Proof}
The inequality $r^\Par(f) \leq \deg(f)$ follows from the fact that the Parity rule
correctly extends degree $d$ functions starting from any Hamming ball of radius $d$. 

On the other hand for any center $x_0$, running the Parity rule on $f|_{\cB(x_0,r)}$ for some $r <
\deg(f)$ results in a function $(f|_{\cB(x_0,r)})^\Par$ of degree at most
$r$, since the Parity rule explicitly sets the coefficients of
monomials of degree higher than $r$ to $0$. But then it follows that
$(f|_{\cB(x_0,r)})^\Par \neq f$, since their difference is a non-zero multilinear polynomial. 
\end{Proof}

The proof above shows that quantifying over $x_0$ is not necessary in
the definition of $r^\Par(f)$, since for every $x_0 \in \zo^n$, we have
\[ r^\Par(f)= \min \{r:  \ (f|_{\cB(x_0,r)})^\Par = f\}.\]

We now turn to the Majority rule. 

\begin{Lem}
\label{lem:maj}
For all $f: \{0,1\}^n \to \{0,1\}$,  we have $r^\Mj(f) = \min(2s(f),n)$. 
\end{Lem}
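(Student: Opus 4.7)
My plan is to prove $r^{\Mj}(f) = \min(2s(f),n)$ by establishing matching upper and lower bounds; write $s = s(f)$ throughout. The upper bound comes essentially for free: the inequality $r^{\Mj}(f) \leq n$ is trivial since $\cB(x_0, n) = \zo^n$ for every $x_0$, and $r^{\Mj}(f) \leq 2s$ is exactly what the proof of Theorem \ref{thm:ball} already shows --- that argument is an induction on distance from the center using Equation \eqref{eq:maj-rule}, which is precisely the Majority propagation rule, so starting from a ball of radius $2s$ the rule reconstructs $f$ on all of $\zo^n$.

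For the lower bound, for $r_0 := \min(2s,n) - 1$ I will exhibit a center $x_0$ such that the Majority rule applied to $\cB(x_0,r_0)$ fails. By relabeling coordinates I take $x_0 = 0^n$, so distance from the center is Hamming weight and the first computed layer consists of the weight-$(r_0+1)$ points, each assigned the majority of $f$ on its $(r_0{+}1)$ weight-$r_0$ neighbors. The key idea is to pick a point $x^*$ with $s(f,x^*) = s$ and let $J \subseteq [n]$, $|J|=s$, be its sensitive coordinates, then place $x^*$ on the sphere of radius $r_0{+}1 = \min(2s,n)$ around the origin with all of $J$ ``pointing toward'' the origin, so that as many sensitive neighbors as possible enter the Majority vote at $x^*$.

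The construction then splits into two cases. In Case A ($2s \leq n$, so $r_0{+}1 = 2s$), I pick any $K \subseteq [n] \setminus J$ with $|K| = s$ --- possible because $n - s \geq s$ --- and set the center to $x^* \oplus \ind(J \cup K)$, placing $x^*$ at Hamming distance exactly $2s$. Its $2s$ weight-$(2s{-}1)$ neighbors then split into $s$ flips of coordinates in $J$ (all sensitive, disagreeing with $f(x^*)$) and $s$ flips of coordinates in $K$ (all agreeing), producing a tie whose undetermined output is not $f(x^*)$. In Case B ($2s > n$, so $r_0{+}1 = n$), I take the center to be $\overline{x^*}$, so $x^*$ is the unique point of weight $n$ and all $n$ of its neighbors already lie in $\cB(x_0,n{-}1)$; since $s > n/2$, strictly more than half of them are sensitive and the majority outputs $1 - f(x^*)$. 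The only subtle step is verifying the clean $s/s$ split in Case A, which is immediate once one recalls that sensitivity is a coordinate-local property, so the sensitive weight-$r_0$ neighbors of $x^*$ are precisely the $J$-flips; I do not expect any deeper obstacle, and the rest of the argument is elementary arithmetic on Hamming distances and subset sizes.
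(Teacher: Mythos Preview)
Your proof is correct and takes essentially the same approach as the paper's: both place a maximum-sensitivity point $x^*$ at the right distance from the center so that at least half of the neighbors entering the vote are sensitive, forcing the Majority rule to output a tie or the wrong value at $x^*$. The only cosmetic difference is that the paper exhibits a bad center for \emph{every} $r < \min(2s(f),n)$ (splitting on whether $r+1 \leq s$ or $r+1 > s$), whereas you handle only the critical value $r_0 = \min(2s(f),n) - 1$, implicitly relying on the immediate monotonicity of the rule (if it reconstructs $f$ from radius $r$ for every center, then also from radius $r+1$).
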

\begin{Proof}
We have $r^\Mj(f) \leq n$, since $\cB(x_0,n)$ is the entire
Hamming cube.  The upper bound $r^\Mj(f) \leq 2s(f)$ follows from the
definiton of the Majority rule and Theorem \ref{thm:ball}. 

For the second part, we show that for every $r < \min(2s(f), n)$, there exists a
center $x_0$ such that $(f|_{\cB(x_0,r)})^\Mj \neq f$.  
Let $x$ be a point with sensitivity $s(f)$, and let $S \subset [n]$ be
the set of $s(f)$ sensitive coordinates at $x$. We will pick
$x_0$ so that $d(x,x_0) = r+1$ as follows.
If $r +1 \leq s(f)$, we obtain $x_0$ from $x$ by flipping some $r +1$
coordinates from $S$. If $r +1 > s(f)$, then we obtained $x_0$ from $f$ by flipping all the
coordinates in $S$, and any $r + 1 -s(f)$ other coordinates
$T \subseteq [n]\setminus S$. The condition $r +1 \leq n$ guarantees
that a subset of the desired size exists, while $r + 1 \leq 2s(f)$
enures that $|T| \leq |S|$. 

Since $d(x,x_0) = r+1$, the value at $x$ is inferred using the
Majority rule applied to the neighbors of $x$ in $\cB(x_0,r)$. These
neighbors are obtained by either flipping coordinates in $S$ or
$T$ (where $T$ might be empty). The former disagree with $f(x)$ while the
latter agree. Since $|S| \geq |T|$, the Majority rule either labels
$x$ wrongly, or leaves it undetermined (in the case when $r =
2s(f)$). This shows that $(f|_{\cB(x_0,r)})^\Mj \neq f(x)$, hence
$r^\Mj \geq \min(2s(f),n)$.
\end{Proof}

In contrast with Lemma \ref{lem:par}, quantifying over all
centers $x_0$ in the defintion of $r^\Par$ is crucial for the lower
bound in Lemma \ref{lem:maj}. This is seen by  considering the $n$-variable
$\mathrm{OR}$ function, where the sensitivity is $n$. Applying the Majority rule to a ball of radius $2$ around $0^n$
returns the right function, but if we center the ball at $1^n$, then
the Majority rule cannot correctly infer the value at $0^n$, so it
needs to be part of the advice, hence $r^\Mj(\mathrm{OR}) =n$.

\subsection{Agreement of the Majority and Parity Rule}

Lemmas \ref{lem:par} and \ref{lem:maj} can be viewed as
alternate characterizations of the degree and sensitivity of a Boolean
function. The degree versus sensitivity conjecture asserts that both
these rules work well (meaning that they only require the values on a
small ball as advice) for the same class of functions. Given that the
rules are so simple, and seem so different from each other, we find this assertion surprising.

In particular,  Conjecture \ref{conj:2} is equivalent to the following
statement:

\begin{Conj}
\label{conj:3}
There exists constants $d_1, d_2$ such that
\begin{align} 
\label{eq:r-eq1}
r^\Par(f) \leq d_1(r^\Mj)^{d_2}.
\end{align}
\end{Conj}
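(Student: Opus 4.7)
The plan is to attack the inequality $r^\Par(f) \leq d_1 (r^\Mj(f))^{d_2}$ head-on. By Lemmas \ref{lem:par} and \ref{lem:maj}, it reduces (away from the trivial regime $s(f) \geq n/2$) to the bound $\deg(f) \leq d_1 (2 s(f))^{d_2}$, which is exactly Conjecture \ref{conj:2}. Since this is a long-standing open problem, I can only sketch a route of attack grounded in the structural results established earlier in the paper, rather than a complete proof.

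First I would extract quantitative Fourier information from the noise-stability guarantee of Lemma \ref{lem:noise-stab}. Choosing $\delta = 1/(4s)$ and using the standard identity $\nsf = \sum_S (1 - (1-2\delta)^{|S|})\, \hat f(S)^2$, the bound $\nsf \leq 2\delta s$ yields $\sum_{|S| \geq k} \hat f(S)^2 \leq O(s^2/k)$. This already gives an $\epsilon$-approximate degree of $O(s^2/\epsilon)$ for every $\epsilon > 0$, so $f$ is $L^2$-approximated by polynomials of degree $\poly(s)$. Even this weaker conclusion is a natural companion to Conjecture \ref{conj:2}, and it is cleanly a consequence of structural results we already have.

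Second, and this is the heart of the plan, I would upgrade this approximate bound to an exact one by exploiting the Majority-rule reconstruction of Theorem \ref{thm:ball}. Applied to $f|_{\cB(0^n, 2s)}$, the rule deterministically produces all of $f$ via iterated majority votes over spheres (Equation (\ref{eq:maj-rule})). I would try to show, by induction on $r$, that the Majority-propagated values satisfy the M\"obius identities $\sum_{T \subseteq S}(-1)^{|S|-|T|} f(\ind(T)) = 0$ for every $S$ with $|S| \geq d = \poly(s)$. By (\ref{eq:Parity-rule}) this would immediately give $\deg(f) \leq d$ and hence the conjecture. The inductive step would need to show that if the identity holds for all $S' \subsetneq S$, then the Majority vote determining $f(\ind(S))$ produces exactly the value forced by the Parity rule.

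The main obstacle is closing the gap between the $L^2$ guarantee of Step~1 and the exact algebraic cancellations required in Step~2. The lopsidedness of the Majority vote noted after (\ref{eq:maj-rule}) --- at most $s$ disagreeing neighbors out of $r+1$ --- should be the key structural input, since it becomes dramatically one-sided as $r$ grows. One would want to argue that the successive disagreements compound in a controlled way, so that the ``error'' against a fixed polynomial interpolant of the ball eventually must vanish identically. Converting this local one-step lopsidedness into the global, simultaneous vanishing of every Fourier coefficient above level $\poly(s)$ is essentially the undischarged content of the sensitivity conjecture, and is where I expect the approach to require a genuinely new idea; perhaps a spectral or entropy argument tying the Majority-propagation operator to a low-degree projection.
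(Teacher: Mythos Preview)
The statement you were asked to prove is labeled a \emph{Conjecture} in the paper, and indeed the paper does not prove it; immediately before stating it, the authors write that it is \emph{equivalent} to Conjecture~\ref{conj:2} (the sensitivity conjecture), and leave it at that. So there is no ``paper's own proof'' to compare against.

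Your reduction is exactly right and matches the paper's one-line assertion of equivalence: by Lemma~\ref{lem:par} we have $r^\Par(f)=\deg(f)$, and by Lemma~\ref{lem:maj} we have $r^\Mj(f)=\min(2s(f),n)$, so inequality~(\ref{eq:r-eq1}) is $\deg(f)\le d_1(2s(f))^{d_2}$ whenever $2s(f)\le n$ (and is trivial otherwise since always $\deg(f)\le n$). You correctly flag that this is the open problem, and you are appropriately explicit that what follows is a sketch rather than a proof.

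On the sketch itself: Step~1 is sound and standard (in fact the tail bound you get is $O(s/k)$ rather than $O(s^2/k)$ if you optimize $\delta$ per level $k$). Step~2 is where all the content lies, and your own diagnosis is accurate: converting the lopsidedness of the majority vote into exact vanishing of all high-degree M\"obius coefficients is precisely the gap that, at the time of the paper, no one knew how to close. Nothing in the paper's structural results (Theorem~\ref{thm:ball}, Lemma~\ref{lem:noise-stab}) is known to bridge it, so your honesty about the ``undischarged content'' is the right conclusion.
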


Along similar lines, one can use Theorem
\ref{thm:ns}, due to \cite{NisanSzegedy:94}, to show that the Majority
rule recovers low-degree Boolean functions:
\begin{align} 
\label{eq:r-eq2}
r^\Mj(f) \leq 8(r^\Par(f))^2.
\end{align}
Their proof uses Markov's inequality from analysis. It might be
interesting to find a different proof, which one could hope to 
extend to proving Equation \eqref{eq:r-eq1} as well.


\section{Conclusion (and more open problems)}

We have presented the first upper bounds on the computational complexity of low
sensitivity functions. We believe this might be a promising alternative
approach to Conjecture \ref{conj:1} as opposed to getting improved
bounds on specific low level measures like block sensitivity or
decision tree depth \cite{KK,AmbainisBGMSZ:14,AmbainisS:11}. 

 Conjecture \ref{conj:1} implies much stronger upper bounds than are given by
 our results.  We  list some of the ones
which might be more approachable given our results:

\begin{enumerate}
\item Every sensitivity $s$ function has a $\mathsf{TC}_0$ circuit of size
$n^{O(s)}$. 
\item Every sensitivity $s$ function has a polynomial threshold function
  (PTF) of degree $\poly(s)$.
\item Every sensitivity $s$ function $f$ has $\deg_2(f) \leq s^c$ for some
  constant $c$.
\end{enumerate}

\eat{
The key inisight behind our results was the simple {\em
  majority rule} for low sensitivity functions, which lets us extend
a labelling of points in the ball of radius $2s$ to the entire
hypercube, by repeatedly taking the majority over already labelled
neighbors. 

Interestingly, low-degree polynomials also
obey a propogation rule which we call the {\em parity rule}. Assume we
know the values of a degree $d$ polynomial at all points in a ball of
radius $d$ around the origin. We can now extend it to points of weight
$d+1$, by considering for each point $x$, the subcube of dimension
$d+1$ that contains $x$ and points in its lower shadow, and assiging
$f(x)$ the value which makes the sum over all these points $0$ mod
$2$. We can extend this labelling level by level to all points in
$\zo^n$. The Parity rule in fact applies to all functions with
$\deg_2(f) \leq d$. 

Conjecture \ref{conj:2} implies that if we are given the values of a sensitivity
$s$ function on a ball of radius $2s$, and start labelling points
using the Majority rule, then beyond a certain level, the labelling
agrees with the Parity rule. Proving this statement is in fact
equivalent to bound $\deg_2(f)$. 

In a similar vein, Nisan and Szegedy \cite{NisanSzegedy:94} proved that
$\s(f) \leq O(\deg(f)^2)$. This implies that if we are given a
labelling of the ball of radius $d$ by a degree $d$ function, and
start labelling points using the Parity rule, once we pass level
$O(d^2)$, the labelling agrees with the Majority rule. It would be
interesting to have a direct proof of this statement (the proof in
\cite{NisanSzegedy:94} uses Markov's theorem in analysis).
}

\bibliographystyle{alpha}
\bibliography{allrefs}

\appendix

\section{Omitted Proofs and Results}

\noindent {\bf Proof of Lemma \ref{lem:bias}:}
It suffices to prove the bound for $s_1$. 
The proof is by induction on the dimension $n$. Observe that if
$\mu_1(f) =1$ then the claim is trivial, so we may assume $\mu_1 \in (0,1)$.
In the base case $n =1$, we must have $\mu_1 = 1/2$, in which case
$s(f) =1$ so the claim holds. 

For any $i \in [n]$, let $f_{i,1} = f|_{x_i =1}$ and $f_{i,0} = f|_{x_i =0}$ denote the
restrictions of $f$ to the subcubes defined by $x_i$. These are each
functions on variables in $[n]\setminus \{i\}$. Then
\[ \mu_1(f) = \frac{\mu_1(f_{i,1}) + \mu_1(f_{i,0})}{2}. \]
 
If there exists $b \in \zo$ such that $0 < \mu_1(f_{i,b}) \leq \mu_1(f)$
then we can apply the inductive claim to the restricted function
$f_{i,b}$ to conclude that there exists a point $x \in f^{-1}_{i,b}(1)$ so that 
\[ s(f,x) \geq \log\left(\frac{1}{\mu_1(f_{i,b})}\right) \geq
\log\left(\frac{1}{\mu_1(f)}\right). \]

If not, it must be that
$f(x) =1$ implies $x_i =b$ for some $b \in \zo$, so that 
\[\mu_1(f_{i,b}) = 2\mu_1(f) \ \text{and} \ \mu_1(f_{i,1-b}) = 0.\]
But then every point $x \in f^{-1}(1)$ is sensitive to $x_i$. 
Further, we can apply the inductive hypothesis to $f_{i,b}$, to conclude that there
exists $x \in f_{i,b}^{-1}(1)$ such that $x$ is sensitive to 
\[ \log\left(\fr{\mu_1(f_{i,b})}\right) = \log\left(\fr{2\mu_1(f)}\right) =  \log\left(\fr{\mu_1(f)}\right) -1 \]
coordinates from $[n]\setminus \{i\}$. Since $x$ is also sensitive to
$i$, we have
\[ s_1(f,x) \geq \log\left(\frac{1}{\mu_1(f)}\right).\]

For the final claim, assume the above bound holds with equality. Then there do not
exist $i \in [n], b \in \zo$ such that $0 < \mu_1(f_{i,b}) <
\mu_1(f)$ (if they did exist then we would get a stronger bound). So for every $i$,
either $\mu_1(f_{i,b}) = 0$ for some $b$, or $\mu_1(f_{i,0}) = \mu_1(f_{i,1})$. In the former case, the set
$f^{-1}(1)$ is contained in the subcube $x_i = b$. In the latter case,
by induction we may assume that $f^{-1}(1)$ restricted to both  $x_i
=0$ and $x_1 =1$ is a subcube of density exactly $\mu_1(f)$ in
$\zo^{[n] \setminus \{i\}}$, so every point in these subcubes must have sensitivity $\log(1/\mu_1(f))$ 
We further claim that the two subcubes are identical as functions on
$\zo^{[n]\setminus \{i\}}$. If they were not identical, then some
point (in each subcube) would be sensitive to coordinate $i$, but then
this point would have sensitivity at least $\log(1/\mu_1(f)) +1$.

This implies that $f^{-1}(1)$ is a subcube defined by the equations
$x_i = 1- b$ for all pairs $(i,b)$ such that $\mu_1(f_{i,b}) = 0$.
\qed

\subsection{A Top-Down Algorithm}

Next we describe a ``top-down'' algorithm for computing $f(x)$ where $f$ is a function of sensitivity $s$.  This algorithm has a similar performance bound
to our ``bottom-up'' algorithm described earlier.

Associate the bit string $x \in \zo^n$ with the integer $z(x) =
\sum_{i=1}^nx_i2^i$, and let $x < x'$ if $z(x) < z(x')$. We refer to
this as the \rl\ ordering on strings. 
 
The top-down algorithm also takes the values of $f$ on $\cB(0^n,2s)$
as advice. Given an input $x \in \zo^n$ where we wish to evaluate $f$,
we recursively evaluate $f$ at the first  $2s +1$
neighbors of $x$ of Hamming weight $\wt(x) -1$ in the \rl\ order. The
recursion bottoms out when we reach an input of weight $2s$. 
The restriction to small elements in the \rl\ order ensures that
the entire set of inputs on which we need to evaluate $f$ is small.  A detailed description of the algorithm follows:

\medskip

\myalgo{Top-Down}{
{\bf Advice: }$f$ at all points in $\cB(0^n,2s)$.\\
{\bf Input: }$x \in \zo^n$.

\begin{enumerate}
\item If $\wt(x) \leq 2s$ or if $f(x)$ has been computed before, return $f(x)$.
\item Otherwise, let $x_1,\ldots,x_{2s +1}$ be the $2s+1$ smallest
  elements in $N(x)$ of weight $\wt(x) -1$ in the \rl\ order. If some 
  $f(x_i)$ has not been computed yet, compute it recursively and store
  the value.
\item Return $f(x) = \Maj{i \in [2s+1]}{f(x_i)}$.
\end{enumerate}
}

\medskip

The key to the analysis is the following lemma.

\begin{Lem}
\label{lem:revlex}
Let $\wt(x) =d$. For $2s \leq k \leq d$, the number of weight $k$
vectors $x'$ for which $f(x')$ is computed by the \emph{{\bf Top-Down}} algorithm is bounded
by
\[ {d - k + 2s \choose d -k} \leq d^{2s}.\]

\end{Lem}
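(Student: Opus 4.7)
The plan is to track which coordinates of $x$ ever get flipped during the recursion and to show they all lie in a small initial segment of $x$'s top ones. Write $x$ with its ones at positions $i_1 < i_2 < \cdots < i_d$. If $y$ is a weight-$(d-1)$ neighbor of $x$, then $y$ is obtained by flipping one $i_j$ from $1$ to $0$, and $z(y) = z(x) - 2^{i_j}$. Hence the $2s+1$ colex-smallest weight-$(d-1)$ neighbors of $x$ are exactly those obtained by flipping one of the top $2s+1$ ones, i.e., one of $i_{d-2s}, i_{d-2s+1}, \ldots, i_d$.

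The heart of the argument is the following invariant, proved by induction on the recursion depth $j$: every weight-$(d-j)$ vector $x'$ on which {\bf Top-Down} is invoked differs from $x$ only in positions contained in $\{i_{d-j-2s+1}, \ldots, i_d\}$, i.e., in the top $j+2s$ ones of $x$. The base case $j=0$ is trivial. For the inductive step, suppose after $j$ flips the flipped set $F$ of size $j$ is contained in $\{i_{d-j-2s+1}, \ldots, i_d\}$. The current vector still has $i_1, \ldots, i_{d-j-2s}$ as ones (these were not touched), and $|\{i_{d-j-2s+1}, \ldots, i_d\} \setminus F| = 2s$. Therefore the $2s+1$ largest ones of the current vector are exactly these $2s$ survivors together with $i_{d-j-2s}$, so the next flip occurs at some coordinate in $\{i_{d-j-2s}, \ldots, i_d\} = \{i_{d-(j+1)-2s+1}, \ldots, i_d\}$, extending the invariant.

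With the invariant in hand, the set of weight-$k$ vectors on which $f$ is evaluated corresponds (after $j = d-k$ flips) to choosing $d-k$ flipped positions from a fixed $(d-k+2s)$-element set, giving at most $\binom{d-k+2s}{d-k}$ distinct such vectors. Finally, using $\binom{d-k+2s}{d-k} = \binom{d-k+2s}{2s}$ and the hypothesis $k \geq 2s$ (so $d-k+2s \leq d$), one concludes $\binom{d-k+2s}{2s} \leq d^{2s}$.

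The only nontrivial step is identifying the right invariant. Once one observes that colex-minimality of the neighbors means ``flip the highest-index one,'' the confinement of flipped coordinates to a growing top segment of $x$ is essentially forced, and the counting is immediate.
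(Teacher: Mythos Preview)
Your proof is correct and follows essentially the same approach as the paper's: both arguments prove by induction on the depth $j=d-k$ that every weight-$k$ vector generated by the recursion is obtained from $x$ by zeroing out $d-k$ coordinates from the set of the top $d-k+2s$ one-indices of $x$ (what the paper calls $R(d-k+2s)$), and then the binomial count is immediate. Your observation that colex-minimality forces the flip to occur at a largest one-index, and the bookkeeping that exactly $2s$ of the top $j+2s$ ones survive after $j$ flips so the next flip can reach at most one position further down, is exactly the content of the paper's inductive step.
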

\begin{Proof}
Given $x \in \zo^n$, for $t \leq \wt(x)$, let $R(t) \subseteq [n]$ denote
the $t$ largest indices $i \in [n]$ where $x_i =1$. 
We claim that all vectors $x'$ with $\wt(x') = k$ that are generated by the algorithm 
are obtained by setting $d-k$ indices in $R(d - k +2s)$ to $0$. This
claim clearly implies the desired bound.

The claim is proved by induction on $d- k$. The case $d -k  =1$
is easy to see, since the $2s +1$ smallest neighbors of $x$ in the
colex order (of weight one less than $x$) are each obtained by setting one of the
indices in $R(2s +1)$ to $0$.
For the inductive case, assume that $\wt(y) =k$, and that $y$ is
generated as a neighbor of $y'$ with $\wt(y') = k +1$. Inductively,   $y'$ is obtained from $x$ by setting 
indices in $S \subset R(d -k -1 +2s)$ to $0$, where $|S| = d - k -1$, and hence
leaving $2s$ of them $1$. Thus the $2s$ smallest neighbors of $y'$ are
obtained by setting indices in $R(d -k -1 +2s)\setminus S$ to $0$, and the $(2s +1)$th
neighbor is obtained by setting the $(d - k +2s)$th $1$ from the right
to $0$. In both cases, we get $d - k$ indices from $R(d -k +2s)$ being
set to $0$. This completes the induction.
\end{Proof}

\begin{Thm}
\label{thm:algQ}
The \emph{{\bf Top-Down}} algorithm computes $f(x)$ for any input $x$ in time $O(sn^{2s+1})$ using space
$O(n^{2s+1})$.
\end{Thm}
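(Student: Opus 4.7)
The plan is to prove correctness by induction on Hamming weight, and then use Lemma~\ref{lem:revlex} to bound the number of distinct inputs on which $f$ is ever evaluated. Both ingredients are already in hand, so the proof is short.

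For correctness, I would induct on $\wt(x)$. If $\wt(x) \leq 2s$ then $x \in \cB(0^n, 2s)$ and the advice directly provides $f(x)$. Otherwise, the algorithm selects $2s+1$ neighbors $x_1,\ldots,x_{2s+1} \in N(x)$ of weight $\wt(x)-1$; by the inductive hypothesis each recursive call returns the correct value $f(x_i)$, and Lemma~\ref{lem:key} applied to the set $\{x_1,\ldots,x_{2s+1}\}$ then gives $f(x) = \Maj{i}{f(x_i)}$. Memoization does not affect correctness since a stored value is only used after being correctly computed once.

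For time and space, the crux is counting the total number of distinct inputs $x'$ on which $f$ is evaluated across the entire recursion. Setting $d = \wt(x)$, Lemma~\ref{lem:revlex} bounds the number of distinct weight-$k$ vectors generated by $d^{2s} \leq n^{2s}$ for each $k$ with $2s \leq k \leq d$; summing over the at most $n$ admissible values of $k$ yields at most $n^{2s+1}$ distinct evaluations in total. Each non-base-case evaluation does $O(s)$ work (computing the $2s+1$ colex-smallest neighbors and taking a majority, with the recursive calls counted separately under memoization), and each base-case evaluation is an $O(1)$ table lookup. This gives running time $O(s \cdot n^{2s+1})$, while the memoization table that stores previously computed values occupies space $O(n^{2s+1})$.

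The one genuinely subtle point, and the only place the argument could fail, is that the naive recursion tree has fanout $2s+1$ and depth up to $n - 2s$, hence $\Omega((2s+1)^{n-2s})$ leaves if nothing is shared. Memoization is therefore essential, and the entire gain comes from Lemma~\ref{lem:revlex}, which exploits the colex ordering to confine every generated weight-$k$ vector to the subcube determined by the $d-k+2s$ topmost ones of $x$. Given that lemma, the remainder of the analysis is a straightforward counting argument and presents no further obstacle.
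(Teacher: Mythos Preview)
Your proposal is correct and follows essentially the same approach as the paper: invoke Lemma~\ref{lem:revlex} to bound the number of distinct evaluation points by $\sum_{k=2s}^{d} d^{2s} \leq n^{2s+1}$, then charge $O(s)$ amortized work per evaluation. You additionally spell out the correctness induction via Lemma~\ref{lem:key}, which the paper leaves implicit in the algorithm description; otherwise the arguments are the same.
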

\begin{Proof}
By the preceding lemma,
for an input $x$ of weight $d$, the total number of $x'$ for which $f(x')$ is
computed and stored is bounded by
\[ \sum_{k=2s}^dd^{2s} \leq d^{2s +1} \leq n^{2s+1}.\]
The cost of computing $f$ at $x$ given $f$'s values at the relevant $2s+1$ neighbors of $x$ (see Step~3) is $O(s)$, so on average the amortized cost for computing $f(x)$ at each $x$ is bounded by $O(s)$.  Hence overall the running time and space are bounded by $O(sn^{2s +1})$ and
$O(n^{2s +1})$ respectively.
\end{Proof}

\end{document}